\newtheorem{theorem}{Theorem}
\newtheorem{lemma}[theorem]{Lemma}
\definecolor{darkgreen}{rgb}{0,0.5,0}
\definecolor{purple}{rgb}{1,0,1}
\newcommand{\kibitz}[2]{\ifnum\Comments=1\textcolor{#1}{#2}\fi}
\newcommand{\robab}[1]{\kibitz{blue}       {[robab: #1]}}
\newcommand{\xingyu}[1]  {\kibitz{darkgreen}   {[Xingyu: #1]}}
\newcommand{\rv}[1]{\kibitz{black}       {#1}}
\newtheorem{definition}{Definition}
\newtheorem{remark}{Remark}
\journal{}
\begin{document}

\begin{frontmatter}



\title{A Hierarchical Imprecise Probability Approach to Reliability Assessment of Large Language Models}

\author[label1]{Robab~Aghazadeh-Chakherlou}
\author[label4,label5]{Qing~Guo}
\author[label1]{Siddartha~Khastgir}
\author[label2]{Peter~Popov}
\author[label3]{Xiaoge~Zhang}
\author[label1]{Xingyu~Zhao\corref{cor1}}
\cortext[cor1]{Corresponding author: Xingyu Zhao, \url{xingyu.zhao@warwick.ac.uk}}
\affiliation[label1]{organization={WMG, University of Warwick},
            city={Coventry},
            country={United Kingdom}}
\affiliation[label4]{organization={Center for Frontier AI Research, A*STAR},
            city={Singapore},
            country={Singapore}}
\affiliation[label5]{organization={School of Computing, National University of Singapore},
            city={Singapore},
            country={Singapore}}
\affiliation[label2]{organization={Center for Software Reliability, City St George's, University of London},
            city={London},
            country={United Kingdom}}
\affiliation[label3]{organization={Department of Industrial and Systems Engineering, The Hong Kong Polytechnic University},
            city={Kowloon},
            country={Hong Kong}}




\begin{abstract}

Large Language Models (LLMs) are increasingly deployed across diverse domains, raising the need for rigorous reliability assessment methods. Existing benchmark-based evaluations primarily offer descriptive statistics of model accuracy over datasets, providing limited insight into the probabilistic behavior of LLMs under real operational conditions. This paper introduces HIP-LLM, a \textbf{H}ierarchical \textbf{I}mprecise \textbf{P}robability framework for modeling and inferring \textbf{LLM} reliability. Building upon the foundations of software reliability engineering, HIP-LLM defines LLM reliability as the probability of failure-free operation over a specified number of future tasks under a given Operational Profile (OP). HIP-LLM represents dependencies across (sub-)domains hierarchically, enabling multi-level inference from subdomain to system-level reliability. HIP-LLM embeds imprecise priors to capture epistemic uncertainty and incorporates OPs to reflect usage contexts. It derives posterior reliability envelopes that quantify uncertainty across priors and data. Experiments on multiple benchmark datasets demonstrate that HIP-LLM offers a more nuanced and standardized reliability characterization than existing benchmark and state-of-the-art approaches. A publicly accessible repository of HIP-LLM is provided.

\end{abstract}

\begin{keyword}

Large Language Model; Software Reliability; Hierarchical Bayesian Inference; Operational Profile; Epistemic Uncertainty; Imprecise Probability 

\end{keyword}

\end{frontmatter}



\section{Introduction}
\label{sec_introduction}

Large language models (LLMs) are increasingly applied across a wide range of tasks, from general-purpose reasoning to highly specialized applications. For instance, recent studies~\cite{pang2024parinfogpt,xiao2025krail,zheng2024empirical,gohar2024codefeater,murugesan2024automating,sultan2024ai} employ LLMs as tools to support reliability analysis and safety assurance for safety-critical systems. This expanding scope of application underscores the urgent need for comprehensive and rigorous evaluation methods to assess the \textit{reliability} of the LLMs themselves, given the high-stakes nature of those applications. While numerous studies have proposed domain-specific performance evaluations in areas such as coding~\cite{yang_intercode_2023,lai_ds_2023,yu_humaneval_2024}, medicine~\cite{croxford_current_2025}, and system safety~\cite{charalampidou2024hazard,kaya2025large,qi2025safety}, others have focused on broader properties of LLMs, including safety (avoiding harmful content)~\cite{zhang_2024_safetybench,liu2024mm,mou2024sg}, robustness (resisting adversarial attacks such as jailbreaks)~\cite{deepseek_NIST,yu2025benchmarking}, fairness~\cite{ICLR2025_38e49155}, and privacy~\cite{privacy_bench,tamkin_clio_2024}. However, there remains a notable lack of research dedicated to reliability, which motivates this work.

While software \textit{reliability} is highly interrelated with the aforementioned properties (a formal definition and its distinction to other properties will be discussed in later sections), it is a unique property that emphasizes the failure probabilities, specified operational time and conditions, and statistically valid uncertainty quantification. As a ``user-centric'' property \cite{littlewood_software_2000,strigini_guidelines_1997,dong2023reliability}, the delivered reliability depends on the behavior of end-users and how the software is expected to be used in practice. This notion is made measurable through the \textit{Operational Profile} (OP) \cite{musa1993operational,musa_adjusting_1994}, which specifies a probability distribution over the types of demands that users actually place on the software. The recent report from OpenAI \cite{chatterji2025people} provide the most comprehensive data on how user-groups use ChatGPT, which aligns with the idea of OPs.


Despite software reliability assessment has been studied for decades \cite{lyu_handbook_1996, littlewood_validation_1993}, state-of-the-art LLM evaluation largely relies on benchmarks \cite{chang_surveyLLM_2024}. This creates several research gaps: 
\begin{itemize}
    \item[Gap-1](static benchmarks vs dynamic operational use): While we acknowledge the usefulness of benchmarks, they are primarily designed for purposes such as comparing and ranking LLMs on given tasks under fixed datasets, neglecting the variability and dynamic nature of user's behaviors in real-world applications. Thus, benchmarks cannot serve the purpose of assessing reliability, especially when the OP of the users is different from the data distribution (implicitly) represented by the benchmark dataset.
    \item[Gap-2](independent benchmarks vs hierarchical dependencies): While LLMs are generally designed for broad use (covering various task types such as legal reasoning and coding), these tasks naturally exist at different levels of abstraction. It is often useful to organize them hierarchically, dividing the input space into domains and subdomains \cite{luettgau_hibayes_2025}. High-level domains\footnote{While the precise definition of what constitutes a domain or subdomain is an application-oriented question that may vary case by case reflecting the assessor's domain knowledge, in this reliability modeling work we do not prescribe specific taxonomies. Instead, we introduce a general hierarchical modeling structure that can accommodate different dependency assumptions (cf.~Section~\ref{sec_discussion} for more discussion).\label{footnote_domain}} (e.g., law vs. coding) can be treated as largely independent, whereas subdomains within a domain (e.g., coding in Python vs. coding in C++) are more likely to exhibit dependencies. In such cases, performance in one subdomain may influence users’ confidence in another, whereas cross-domain effects are minimal. Reliability assessment should therefore reflect these dependency relationships and support evaluation at multiple abstraction levels, depending on the assessor’s role. For instance, a LLM vendor may be interested in system-wide reliability across all domains, a programmer may focus on reliability in coding tasks, and a Python developer may be specifically concerned with the reliability of Python coding. Existing benchmarks, however, typically evaluate LLM capabilities either in isolation or by simply averaging performance across tasks, that obscure these dependencies relationships \cite{luettgau_hibayes_2025}.
    \item[Gap-3](descriptive statistics vs statistical inference): While descriptive statistics summarize the observed dataset by reporting point-estimate benchmark scores, they do not formally model \textit{how the data were generated}. Even when variance is reported as a way of ``quantifying uncertainty'', it quantifies variability \textit{within the sample itself}. In contrast, statistical inference treats the observed results as samples from \textit{an underlying stochastic process or population} and uses probabilistic modeling to estimate the parameters of that process \cite{miller_adding_2024}. This inferential perspective enables generalization beyond the fixed benchmark \textit{dataset} and supports principled uncertainty quantification about reliability which concerns a underlying \textit{population} of future inputs \cite{luettgau_hibayes_2025}.
\end{itemize}


Indeed, acknowledging these research gaps, recent studies \cite{miller_adding_2024,luettgau_hibayes_2025} attempt to bridge them by applying both ``frequentist'' and Bayesian statistical inference methods.
However, despite representing important first steps, they do not fully resolve the aforementioned gaps (e.g., not explicitly incorporate the OP) and introduce new gaps:
\begin{itemize}
  \item [Gap-4](failure probability vs future failure-free runs) While both \cite{luettgau_hibayes_2025,miller_adding_2024} examined the metric of failure probability\footnote{We later show that this is a special case of our reliability model.}, they did not extend it to the more practical and widely adopted definition of reliability that predicts the \textit{probability of failure-free runs} over a specified future operational time \cite{strigini_software_2013,bishop2022bootstrapping,zhao2020safety}.
    \item[Gap-5](non-informative vs informative priors): 
    The work \cite{miller_adding_2024} is a frequentist approach that cannot embed prior knowledge explicitly like Bayesian methods. The work \cite{luettgau_hibayes_2025} employs non-informative priors in its Bayesian models, without facilitating the embedding of any informative prior knowledge that end-users may have. 
\end{itemize}


To bridge the five research gaps collectively, we propose HIP-LLM, a hierarchical imprecise probability approach to assess LLM reliability. HIP-LLM devises a more rigorous and versatile assessment with the following key features. First, similar to \cite{luettgau_hibayes_2025}, it structures the model into independent high-level domains, with each domain further divided into dependent subdomains. In this way, reliability can be evaluated at multiple levels of detail while preserving statistical dependencies, so that information from one subdomain contributes to the inference about other subdomains within the same domain. We then incorporate OPs as operational weights at each level of the hierarchy (subdomain $\rightarrow$ domain $\rightarrow$ general-purpose LLM) to reflect the dynamic operational use of subdomains and domains in practice. 
\rv{We assume tasks constitute independent and identically distributed (i.i.d.) trials drawn from the specified OP. This assumption is appropriate for reset or single-task scenarios, in which each task is executed in a fresh session without shared context or memory---settings commonly encountered in offline evaluation like benchmarking\footnote{In contrast, \textit{long-context} or \textit{agentic workflows}, where task outcomes are sequentially dependent through memory or tool use, violate the i.i.d. assumption and require fundamentally different modeling approaches to our HIP-LLM. We provide more discussions later in Remark~\ref{rm_iid_vs_context} and Section~\ref{sec_discussion}.} Accordingly, HIP-LLM is intended for LLM usage scenarios in which each task is executed in a new chat session without retaining memory from previous sessions, thereby aligning with the assumption of a ``reset'' LLM.}

In contrast to \cite{luettgau_hibayes_2025}, which relies on non-informative priors, HIP-LLM enables the incorporation of informative prior knowledge. Since selecting a single prior in Bayesian inference is often controversial due to its encoding of epistemic uncertainty, we adopt \textit{Imprecise Probability} \cite{augustin2014introduction,troffaes2007decision} that represents uncertainty about the prior itself without committing to a single distribution. Consequently, reliability metrics (defined in terms of both failure probability and the probability of failure-free runs in future operations) are expressed through \textit{posterior distribution envelopes} at different levels of the hierarchy, reflecting both uncertainties from the data and the prior knowledge.


In summary the contributions of this work are as follows:
\begin{itemize}
\item We formally define the reliability assessment problem for LLMs in accordance with established software reliability standards, while delineating its distinction from related properties.
\item We propose HIP-LLM, a hierarchical imprecise probability model that explicitly addresses key research gaps in the current state-of-the-art, by modeling OPs, hierarchical dependencies, statistical principled uncertainty quantification on reliability (both failure probability and future failure-free runs over specified time), and imprecise prior knowledge.
\item We release a public repository containing all experimental data, source code, and models at \url{https://github.com/aghazadehchakherlou-web/llm-imprecise-bayes}.
\end{itemize}



The remainder of the paper is organized as follows. Section \ref{sec_related_work} introduces preliminaries and reviews related works on LLM assessment.
Section~\ref{sec_dependenSub_independenDom} provides the modeling details of HIP-LLM. 
Section~\ref{sec_numerical_example} evaluates and illustrates HIP-LLM via experiments on datasets. 
Section~\ref{sec_discussion} discusses the limitations and assumptions, and finally Section~\ref{sec_conclusion} concludes the paper.


\section{Preliminaries and Related Works}
\label{sec_related_work}

\subsection{Software Reliability}
\label{sec_related_work_OP}

According to American National Standards Institute (ANSI), software reliability is defined as \cite{ansi_1991}: 
\begin{definition}[Software reliability]
\label{def_sw_reliability}
    The probability of failure-free software operation for a specified period of time in a specified environment.
\end{definition}
This probabilistic definition is also largely adopted by software reliability engineering literature \cite{lyu_handbook_1996}, which makes reliability amenable to statistical modeling and permits risk-aware aggregation across heterogeneous operational uses. The specific probabilistic metric used to model reliability is domain-dependent and determined by the operational nature of the software. For example, continuous-time systems are continuously operated in the active control of a process, whereas on-demand systems are only invoked upon receipt of discrete demands. In the latter case, such as nuclear power protection systems, the probability of failure on demand (\textit{pfd}) has been adopted in standards \cite{strigini_guidelines_1997,iec_61508_2010,atwood2003handbook} and extensively studied \cite{littlewood_reasoning_2012,strigini_software_2013,zhao_modeling_2017}.

Although related, reliability is distinct from safety, security, and accuracy. Reliability focuses on the \textit{probability} of failures (relative to a specification), a stochastic and usage-weighted concept by OPs \cite{zhao_modeling_2017}. Safety concerns \textit{critical failures} with \textit{catastrophic consequences}, i.e., whether a failure can lead to unacceptable \textit{harm}. For instance, a system can be reliable but unsafe (if frequent but non-harmful failures are tolerated while rare failures produce catastrophic harm), or safe but unreliable (if it fails frequently but never produce catastrophic harm by fail-safe design)\footnote{We note that for systems in which all failures are catastrophic (e.g., safety-critical systems), safety and reliability assessments do not require different statistical reasoning, such as when measuring the probability that a system will operate safely (i.e., without critical failures) over a given mission.}. Security is concerned with resisting \textit{malicious threats}: it normally assumes an explicit threat model and seeks to prevent or mitigate intentional compromises). While security fundamentally requires the analysis of a malicious threat actor, reliability and safety can be studied within benign operational environments where failure originates from sources such as design flaws and implementation bugs. Within the AI/ML community, accuracy is typically a narrower performance metric, defined as a descriptive statistic (proportion of correct predictions) over a fixed and given dataset. However, such a dataset does not necessarily represent the unknown ground-truth population of inputs that the model will face upon deployment, and accuracy does not quantify the uncertainties of failure-free operations over future time like reliability.

A central insight from software reliability engineering is that reliability is \textit{user-context dependent}: the same software can exhibit very different delivered reliability under different patterns of use. The OP formalizes this dependency, which is defined as \cite{strigini_guidelines_1997,musa1993operational,lyu_handbook_1996}:
\begin{definition}[Operational Profile (OP)]
    An OP is a probability distribution of inputs to a software system, representing the relative frequencies with which different input sequences expected occur during actual operation.
\end{definition} 
There are some caveats regarding the term ``inputs'' in the definition of OP, particularly when considering factors such as software memory and dependencies\footnote{In practice, because software systems may possess memory, their success/failure often depends on entire sequences of inputs rather than isolated ones. Accordingly, ``inputs'' may denote a complete stimulus, which may encompass the full sequence of interactions within a task, or even the cumulative sequence of inputs since the system was last reinitialized.}. We refer readers to \cite{strigini_guidelines_1997} for a detailed discussion. For LLMs, OPs serve an analogous role \cite{chatterji2025people}: they specify the distribution of task types (e.g., factual question answering, coding, summarization) and their relative importance in deployment. Incorporating OPs into LLM reliability analysis ensures that evaluation metrics capture not just aggregate benchmark scores, but also how well a model performs across the realistic mix of tasks it is expected to face.

\subsection{Benchmark-Based Evaluation of LLMs via Descriptive Statistics}

Most evaluation methods for LLMs report only point estimates such as accuracy scores. Question-answering benchmarks such as MMLU \cite{hendrycks_measuring_2020} and RACE \cite{lai_race_2017} measure how often a model picks the right answer in multiple-choice questions. Code benchmarks like DS-1000 \cite{lai_ds_2023}, InterCode \cite{yang_intercode_2023}, and HumanEval Pro/MBPP Pro \cite{yu_humaneval_2024} test whether generated programs run correctly. 
Assistant and autonomy benchmarks such as GAIA \cite{mialon_gaia_2023} and H-CAST \cite{rein_hcast_2025} look at overall task completion---whether the system can finish a complex job successfully. Model cards and reports for Claude~3 \cite{anthropic_claude_2024}, GPT-4o \cite{hurst_gpt_2024}, and code models \cite{chen_evaluating_2021} publish these benchmark results as single numbers per model. Broader frameworks like HELM \cite{liang_holistic_2022} add other measures such as calibration, robustness, fairness, and efficiency, but still mainly give point values. In real-world use, Clio \cite{tamkin_clio_2024} collects statistics on what kinds of tasks people actually do with AI with a focus on privacy. Method studies like multilingual chain-of-thought \cite{shi_language_2022} and ReAct \cite{yao_react_2023} suggest ways to improve reasoning and interaction, judged by benchmark accuracy.

Recent studies widely acknowledge that benchmark-based evaluations for LLMs are valuable but inherently limited due to the new characteristics and challenges introduced by LLMs. Benchmarks facilitate progress tracking and model comparison, yet they capture only a partial view of LLM capabilities. Chang et al.~\cite{chang_surveyLLM_2024} note that traditional benchmarks (reusing fixed task sets for ranking models) have driven early advances but are now approaching their limits due to data repetition, task saturation\footnote{Over time, models are trained on similar data, inflating scores and diminishing benchmark usefulness.}, and limited assessment of reasoning or interactivity. Liu et al.~\cite{liu_trustworthy_2023} emphasize that standard benchmarks often overlook moral, safety, and social dimensions of model behavior. In domain-specific contexts, Croxford et al.~\cite{croxford_current_2025} show that medical benchmarks miss crucial aspects such as factual accuracy and clinical reasoning, highlighting the need for domain prior knowledge. Saleh et al.~\cite{saleh_evaluating_2025} observe that efficiency and performance benchmarks remain fragmented and inconsistent. From a safety perspective, Liu et al.~\cite{liu_scales_2025} warn that toxicity and robustness tests are often too narrow and easily gamed, recommending flexible, real-world evaluations. Finally, Ye et al.~\cite{ye_large_2025} introduce psychometric principles, arguing that current benchmarks frequently lack checks for reliability and fairness.

While we acknowledge the usefulness of LLM benchmarks for progress tracking of LLM versions, comparison and ranking of LLMs, they can be used to make claims on LLM reliability. Our HIP-LLM approach goes beyond benchmarks, by framing the reliability assessment of LLMs as statistical inference problems.

\subsection{Statistical Inference for Evaluating LLMs}
Benchmark-based descriptive statistics simply summarize the collected data, whereas statistical inference draws probabilistic conclusions about the \textit{underlying population or process} that generated those data, thereby enabling generalization, prediction and uncertainty quantification. In the AI/ML community, many studies have proposed statistically principled approaches to quantify uncertainty for \textit{individual inputs}, such as text prompts for LLMs, using methods like conformal prediction \cite{quachconformal,wang_2024_conu}. However, these approaches focus on instance-level uncertainty and do not capture uncertainty at the operational reliability level. To the best of our knowledge, only two existing works extend uncertainty quantification to this broader reliability perspective for LLMs.

In the Anthropic report \cite{miller_adding_2024}, Miller frames LLM evaluations as statistical experiments and argues for reporting uncertainty alongside benchmark scores. Using classical inference tools, the study constructs confidence intervals via the Central Limit Theorem, applies clustered standard errors for correlated items, recommends paired inference for model comparisons, and develops power analysis and variance reduction techniques. This work emphasizes that benchmark outcomes should be treated as data drawn from an underlying population, moving beyond descriptive statistics. 

HiBayES \cite{luettgau_hibayes_2025} applies hierarchical Bayesian generalized linear models that: (a) use a Bayesian statistical inference (multilevel Binomial/Poisson Generalized Linear Models) and captures the nested structure of LLM evaluations; (b) employ partial pooling to account for dependencies across levels; (c) provide a fully probabilistic framework, yielding full posterior distributions to quantify uncertainties, rather than relying on point-estimates.

Despite these advances, important limitations remain. Neither Miller’s frequentist framework nor HiBayES explicitly defines/models the OP (even though both implicitly assume that the data represent it). Moreover, neither approach can effectively incorporate (potentially imprecise) prior knowledge: Miller’s work is restricted to classical frequentist inference, while HiBayES relies on non-informative priors. Both also focus on the narrower notion of failure probability as reliability, overlooking the more general and standardized reliability definition in terms of the probability of failure-free runs over future operations. The latter requires predictive inference that accounts for the propagation of uncertainties into the future: a more demanding but practical reliability claim \cite{strigini_software_2013,bishop2022bootstrapping}. \rv{Miller accounts for clustered dependence through variance corrections, but does not introduce hierarchical latent variables or multi-level generative models as in HiBayES or HIP-LLM. Table~\ref{tab_comparison_Hip_Hi_Mill} summarizes the key features and provides a comparative overview of the three methods (HIP-LLM, HiBayES~\cite{luettgau_hibayes_2025}, and Miller~\cite{miller_adding_2024}).}

\begin{table}[t]
{\color{black}
\centering
\caption{Comparison of LLM evaluation frameworks highlighting differences in reliability definition, operational profile modeling, and uncertainty treatment}
\label{tab_comparison_Hip_Hi_Mill}

\renewcommand{\arraystretch}{1.25} 

\begin{tabular}{p{2.5cm} p{3.7cm} p{4.3cm} p{4.3cm}}
\toprule
\textbf{Aspect} 
& \textbf{Miller~\cite{miller_adding_2024}} 
& \textbf{HiBayES}~\cite{luettgau_hibayes_2025} 
& \textbf{HIP-LLM} \\
\midrule

\makecell[l]{Statistical\\ framework}
& Frequentist
& Bayesian 
& Bayesian  \\

\addlinespace[0.1cm]

Primary goal
& \makecell[l]{Uncertainty-aware\\ benchmark reporting} 
& \makecell[l]{Uncertainty-aware,\\hierarchical accuracy\\ estimation} 
& \makecell[l]{Uncertainty-aware\\reliability assessment\\ under OP with\\ embedded priors} \\

\addlinespace[0.2cm]

\makecell[l]{Metric under\\ estimation} 
& \makecell[l]{Failure probability\\ (per task)} 
& \makecell[l]{Failure probability\\ (per task)} 
& \makecell[l]{Reliability (probability\\ of failure-free\\ future tasks)} \\

\addlinespace[0.2cm]


\makecell[l]{OP}
& \makecell[l]{Implicit\\ (dataset assumed\\ representative)} 
& \makecell[l]{Implicit\\ (dataset assumed\\ representative)} 
& \makecell[l]{Explicit,\\ multi-level OPs} \\

\addlinespace[0.2cm]

\makecell[l]{Hierarchical\\ structure} 
& None 
& \makecell[l]{LLM, domains,\\ subdomains} 
& \makecell[l]{LLM, domains,\\ subdomains} \\

\addlinespace[0.2cm]

\makecell[l]{Dependence\\ handling} 
& \makecell[l]{Clustered standard\\ errors} 
& \makecell[l]{Partial pooling} 
& \makecell[l]{Partial pooling} \\

\addlinespace[0.2cm]

\makecell[l]{Prior\\ specification} 
& None 
& \makecell[l]{Uniform precise\\ priors} 
& \makecell[l]{Informative\\ imprecise priors\\ (credal sets)} \\

\addlinespace[0.2cm]

Output 
& \makecell[l]{Confidence\\ intervals} 
& \makecell[l]{Single posterior\\ distribution} 
& \makecell[l]{Posterior\\ envelopes} \\

\bottomrule
\end{tabular}
}
\end{table}



\subsection{Robust Bayesian Analysis}
Robust Bayesian analysis represents a general framework for investigating the sensitivity of posterior measures to uncertainties in the inputs of Bayesian inference \cite{Berger_1994, Insua2012robust}. Regarding uncertainties in priors, while several dedicated methods have been proposed \cite{bishop2010toward,strigini_software_2013,zhao_assessing_2020,Aghazadeh_impact_2024}, Imprecise Probability has emerged as one of the most widely adopted approaches \cite{augustin2014introduction,troffaes2007decision,utkin_imprecise_2018,imprecision_luck_walter_2009,zhao_interval_2020}. It addresses the problem of prior uncertainty by avoiding reliance on \textit{a single} prior distribution, rather representing \textit{a credal set} of plausible priors and deriving posterior bounds that reflect this epistemic uncertainty.

To illustrate the main idea of the Imprecise Probability framework, consider a simple coin-flipping problem where we wish to estimate the probability of heads $\theta$ with data $D$ ($n=10$, $k=3$ heads). As shown in Table~\ref{tab_ip_vs_classical}, instead of a single point estimate or a single Beta posterior distribution, the \textit{posterior envelope} (i.e., a set of posterior distributions) of Imprecise Probability\footnote{For simplicity and illustrative purpose, the example used here is based on the model of \textit{imprecise Linearly Updated Conjugate prior Knowledge (iLUCK)} \cite{imprecision_luck_walter_2009}) which leverages the conjugacy for analytical posterior results. Our HIP-LLM is not using this iLUCK model due to the non-linearity of the hierarchical probabilistic model proposed in Section \ref{sec_dependenSub_independenDom}.} reflects both the observed data and the epistemic uncertainty arising from imprecise prior knowledge (represented by the set of Beta priors). 

\begin{table}[H]
\centering
\caption{Classical Bayesian vs Imprecise Probability for a coin-flipping example.}
\label{tab_ip_vs_classical}
\resizebox{0.9\textwidth}{!}{
\begin{tabular}{lcc}
\toprule
\textbf{Feature} & \textbf{Classical Bayesian} & \textbf{Imprecise Probability} \\
\midrule
Prior & $\mathrm{Beta}(\alpha=2, \beta=2)$ &  $\mathrm{Beta}(\alpha, \beta), \; \alpha \in [1,3], \beta \in [1,3]$ \\
Posterior &  $\theta \mid D \sim \mathrm{Beta}(5,9)$ &  $\theta \mid D \sim \mathrm{Beta}(3+\alpha, 7+\beta), \; \alpha \in [1,3], \beta \in [1,3]$ \\
Poster. mean & $\mathbb{E}[\theta \mid D] = 0.36$ & $\mathbb{E}[\theta \mid D] \in [0.31, 0.38]$ \\
\bottomrule
\end{tabular}
}
\end{table}

Our hierarchical solution, HIP-LLM, is a robust Bayesian approach that addresses the aforementioned gaps. Similar to HiBayES, it adopts a hierarchical structure, but instead of yielding a single posterior, it reports posterior envelopes over imprecise priors and uncertain OPs (as variables).

\section{The Model: HIP-LLM}
\label{sec_dependenSub_independenDom}
The proposed method HIP-LLM stands for Hierarchical Imprecise Probability for Large Language Models reliability assessment. Before introducing the framework, we first formally define the reliability of LLMs.


LLMs are software; therefore, to ensure compatibility with the more general and standardized definition of software reliability (Def.~\ref{def_sw_reliability}), we define LLM reliability as follows:
\begin{definition}[LLM reliability]
\label{def_llm_reliability}
    The probability that an LLM produces failure-free responses over a specified number of future tasks (sequences of closely related queries), under specified (sub-)domains and operational environment.
\end{definition}
The definition of LLM reliability closely parallels the standardized definition of software reliability, as both emphasize \textit{probabilistic} reasoning and the requirement of \textit{failure-free performance} in future operations. The key distinctions arise from the nature of LLMs as ``on-demand'' software. Whereas classical software reliability is typically framed in terms of ``operational time'' that covers continuously operated systems like controllers over clock-time, LLM reliability is defined more explicitly in terms of a specified number of future discrete tasks (each consisting of sequences of related queries). Furthermore, while both definitions account for operational conditions, the LLM context requires explicit reference to domains and subdomains, reflecting the general-purpose and multi-domain design of LLMs. Finally, in traditional software, failure is typically defined as a deviation from the specification, whereas there is no specification for LLMs and thus ``failures of LLMs''\footnote{A complete and formal characterization of what constitutes a ``failure'' for LLMs remains an open research question, and out of the scope of this paper. Cf.~Section~\ref{sec_discussion} for discussions.} are often informally/implicitly characterized by factual errors (hallucinations) \cite{farquhar2024detecting,dahl2024hallucinating} or divergence from human-expert answers \cite{huang2024survey,zhang2024protip}.

\subsection{Problem Formulation}
\label{sec_problem_statement}

While we provide a table of notations in \ref{sec_notation_appendix}, according to Def.~\ref{def_llm_reliability}, LLM reliability can be formalised as follows:
\begin{definition}[Formalised LLM reliability]
\label{def_formal_llm_reliability}
Let $\mathcal{X}$ denote the input-space of all possible tasks for a given LLM, and let $\pi$ be the OP, i.e., a probability distribution over $\mathcal{X}$ that reflects the likelihood of encountering each task $x\in \mathcal{X}$ in practice. Consider a sequence of $n \geq 1$ tasks i.i.d. according to the OP $\pi$, let $\mathbb{I}(x_t) \in \{0,1\}$ indicate success (1) or failure (0) on the $t$-th task $x_t$, then the LLM reliability is:
\begin{equation}
\label{eq_llm_reliability}
R(n,\pi) = \Pr_{x_t \sim \pi}\left(\bigcap_{t=1}^n \{\mathbb{I}(x_t) = 1\}\right)
\end{equation}
\end{definition}
Intuitively, $R(n,\pi)$ represents the probability that the LLM will operate failure-free across the next $n$ i.i.d. tasks according to the OP $\pi$. Importantly, we make the following remarks:
\rv{
\begin{remark}[The i.i.d. assumption vs. contextual memory]
\label{rm_iid_vs_context}
    A key assumption of the above definition is the task failures/success are i.i.d. Bernoulli trials. Such modeling is not uncommon in reliability modeling, especially for critical on-demand systems. A typical justification is when demands are rare, and the states/memory of the software and its operational environment are effectively ``reset'' in-between \cite{strigini_guidelines_1997,salako2023unnecessity}. In the context of LLMs, we carefully define our reliability metric in terms of i.i.d ``tasks'', rather than individual prompts which are often contextually dependent. A task may consist of a sequence of related prompts aimed at achieving a single task goal. We acknowledge that modern LLMs (e.g., ChatGPT) typically retain chat history as contextual memory, which can violate the i.i.d. assumption between tasks. However, most LLMs provide the option to start a new chat session (for a new single task) without any memory and history from previous chat sessions, aligning with the assumption of ``resetting'' the LLM. We note such ``reset'' settings are commonly encountered in LLM offline evaluation like benchmarking \cite{chang_surveyLLM_2024}. In contrast, long-context or agentic workflows, where tasks are sequentially dependent through memory or tool use, violate this assumption and require alternative stateful reliability models which is out of the scope of this paper. 
\end{remark}
\begin{remark}[Failure probability vs. future reliability]
\label{rm_failure_prob_future_reliability}
While the reliability metric $R(n,\pi)$ is the general form of future reliability of processing $n$ tasks, the special case $1-R(1,\pi)$ (where $n=1$) represents the failure probability \cite{strigini_software_2013,bishop2022bootstrapping,zhao2020safety} that studied by, e.g., \cite{miller_adding_2024,luettgau_hibayes_2025}. Accordingly, throughout this paper, we use the term \emph{failure probability} to denote $1 - R(1,\pi)$ and \emph{failure-free runs (of $n$ future tasks)} to denote $R(n,\pi)$, in order to distinguish these two notions. The more general term \emph{reliability} is used to refer to either quantity when the intended meaning is clear from the surrounding context.
\end{remark}
\begin{remark}[General purpose reliability vs. domain-specific reliability]
\label{rm_gen_purp_local_domain}
Since the input-space $\mathcal{X}$ represents \textit{all} possible LLM tasks, so $R(n,\pi)$ is the \textit{general-purpose} reliability of the LLM under study. For (sub-)domain specific reliability, we need to partition $\mathcal{X}$ and derive ``local'' OPs; then the lower level  (sub-)domain specific reliability can be similarly derived like Eq.~\eqref{eq_llm_reliability}.
\end{remark}
\begin{remark}[Binary failure vs. non-binary scoring]
    \label{rm_non_binary_failure}
    In traditional software reliability engineering, failures are naturally defined as binary events, success or failure, according to an explicit system specification. Similarly, most LLM benchmarks introduce task-specific specifications, implemented via automated evaluators or human annotations, which yield binary outcomes for scoring each prompted task. Consistent with this established practice, we model LLM reliability based on binary failure in Def.~\ref{def_formal_llm_reliability}. However, unlike traditional software, what constitutes a ``failure'' for an LLM can be inherently subjective and domain dependent, and external scoring mechanisms may themselves be noisy or inconsistent. Accordingly, reliability in HIP-LLM in Def.~\ref{def_formal_llm_reliability} is formalized conditional on a specified failure definition and scoring process, rather than as an absolute, model-intrinsic property. Explicitly modeling scoring uncertainty or alternative failure definitions remains an important direction for future work.
\end{remark}
}

\rv{
To do statistical inference for the reliability metric defined in Eq.~\eqref{eq_llm_reliability} with assumptions in aforementioned remarks, a simplified ``textbook'' Bayesian model would be the Beta-Binominal one (which also used as one of the baselines in our experiments). This Beta-Binomial estimator applies to a single domain with
precise prior knowledge. Let $\theta := \Pr_{x\sim\pi}(I(x)=1)$ denote the (unknown) probability of success on a random task drawn from the OP $\pi$. Given $N$ i.i.d.\ evaluated tasks with $C$ successes and $N-C$ failures, assume a Binomial likelihood
\[
C, N \mid \theta \sim \mathrm{Binomial}(N,\theta)=\theta^{C}(1-\theta)^{N-C},
\]
and a prior $\Pr(\theta)$. By Bayes' rule, the posterior distribution of $\theta$ is
\begin{equation}
\label{eq_textbook_posterior_theta}
\Pr(\theta \mid C,N)
= \frac{\theta^{C}(1-\theta)^{N-C}\Pr(\theta)}
{\int_{0}^{1}\theta^{C}(1-\theta)^{N-C}\Pr(\theta)\,d\theta}.
\end{equation}
Similarly for the future reliability of passing $n_F$ tasks:
\begin{equation}
     \label{eq_text_book_beta_binomial}
    \Pr(R(n_F,\pi)\mid C, N)=\frac{\theta^{n_F}\theta^{C}(1-\theta)^{N-C}\Pr(\theta)}
    {\int_{0}^{1}\theta^{C}(1-\theta)^{N-C}\Pr(\theta)}
\end{equation}
If $\Pr(\theta)=\mathrm{Beta}(\alpha,\beta)$, conjugacy yields the closed-form posterior
\[
\theta \mid C, N \sim \mathrm{Beta}(\alpha + C,\; \beta + N - C).
\]
For future reliability, thanks to the conjugacy again, the posterior mean reliability
for $n_F$ future tasks is therefore:
\[
\mathbb{E}[R(n_F,\pi)\mid \mathcal{D}]
= \mathbb{E}[\theta^{n_F}\mid \mathcal{D}]
= \frac{B(\alpha + C + n_F,\; \beta + N - C)}
{B(\alpha + C,\; \beta + N - C)},
\]
where $B(\cdot,\cdot)$ denotes the Beta function. Similarly, the posterior PDF and CDF can also be derived and we omit them for brevity.
}

To more rigorously assess the formally defined LLM reliability, coping with the aforementioned Remarks and Gaps, the next subsection introduces our proposed solution HIP-LLM. It models the LLM as a hierarchical structure consisting of independent domains, each containing statistically dependent subdomains (cf.~Fig.~\ref{fig_general_structure}).
\rv{
\begin{remark}[The need of hierarchical modeling on (sub-)domain (in-)dependencies]
    \label{rm_domain_dependencies}
The modeled dependencies and independencies represent the epistemic structure of our hierarchical Bayesian model. That is, observing failures in one (sub-)domain may or may not update our beliefs about the reliability of other (sub-)domains. One possible example of justification\footnote{Cf.~Section~\ref{sec_discussion} for discussions on the validity of this hierarchal dependency assumption.} is: We model coding and law as independent because they rely on distinct competencies of the LLM---coding on formal, symbolic reasoning and syntax manipulation, and law on narrative understanding and normative interpretation. Since these skills draw from largely separate representations and training data, failures in one domain provide little information about failures in the other. In Bayesian terms, their failure probabilities can be treated as a \emph{priori independent parameters}, reflecting separate latent skill dimensions of the model. On the other hand, sub-domains failure probabilities are modeled as dependent parameters given their shared LLM competencies.
\end{remark}
}

Our goal is to infer the posterior distributions of future reliability (and its special case, the failure probability), at the subdomain, domain, and general-purpose LLM levels, based on observed correct responses from tasks within each subdomain.

\begin{figure}[H]
    \centering
    \includegraphics[width=1\linewidth]{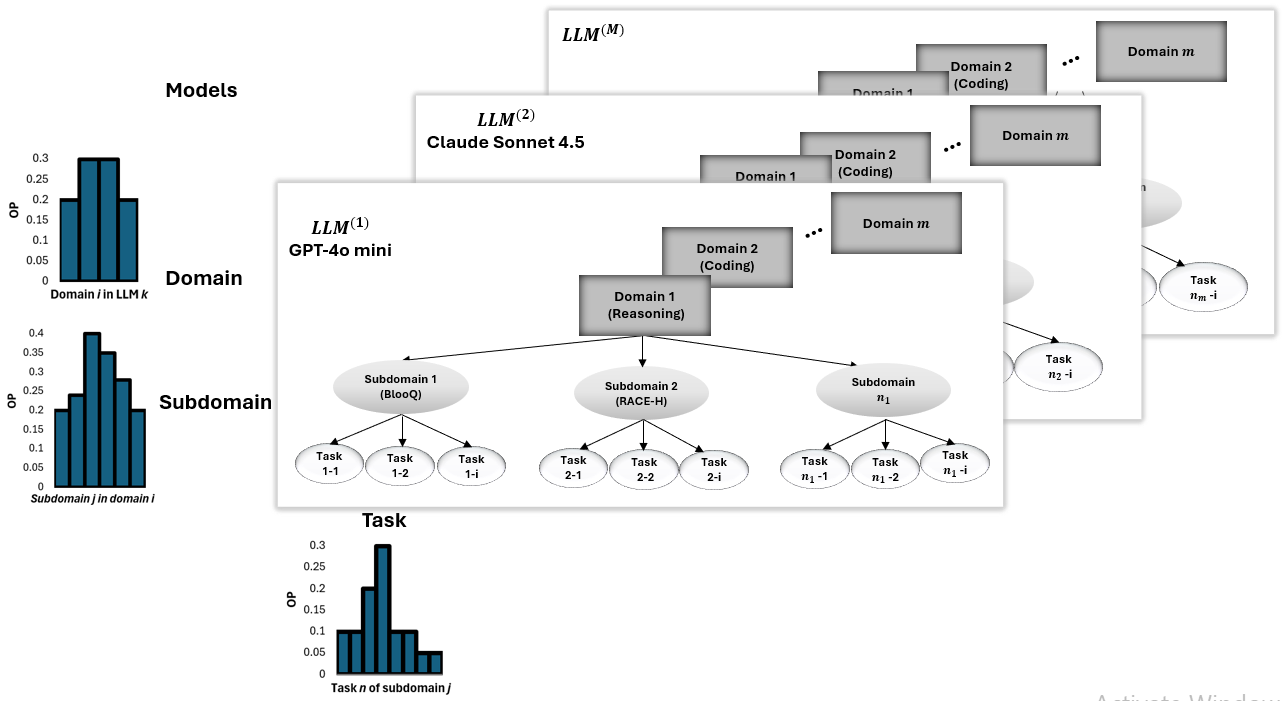}
    \caption{Schematic representation of the hierarchical LLM, domain, and subdomain structure for reliability estimation for M LLM models. \rv{Rectangles indicate independent components, while ovals indicate dependent components within the hierarchy.}}
    \label{fig_general_structure}
\end{figure}

\subsection{Proposed Solution}
\label{sec_solution_hierarchical_Bayes_model}
Consider a hierarchical structure of an LLM comprising independent domains 
$D_1, D_2, \dots,$ $ D_m$, where each domain $D_i$ contains statistically 
dependent subdomains $S_{i1}, \dots, S_{i n_i}$ (Fig.~\ref{fig_imprecise_dependentSub_independentDomain}). We wish to infer the posterior distributions over subdomain, domain and LLM level reliabilities by observing $C_{ij}$ correct responses out of $N_{ij}$ trials (tasks) in each subdomain.

Fig.~\ref{fig_imprecise_dependentSub_independentDomain} presents a detailed view of the hierarchical structure (subdomains $\rightarrow$ domains $\rightarrow$ LLM), with assumed priors and parameters.
We assume that subdomain reliabilities within a domain are \textit{dependent through a shared prior}, and that domain reliabilities are \textit{aggregated} from their subdomains according to task-specific OPs. Our goal is to construct a principled hierarchical Bayesian model that  supports information sharing across dependent subdomains through partial pooling (Sec.~\ref{sec_hierarchical_bayesian_framework}) and uncertainty quantification via Imprecise Probability (Sec.~\ref{sec_uncertainty_handling_imprecise_probability}).

Figures~\ref{fig_general_structure} and \ref{fig_imprecise_dependentSub_independentDomain} illustrate a general hierarchical structure comprising multiple LLM instances ($LLM^{(1)}, LLM^{(2)}, \ldots, LLM^{(M)}$). For clarity, however, Theorems~\ref{thm_subdomain_marginal}--\ref{thm_LLM_nF_dist} focus on the reliability assessment of a \emph{single} LLM system. Accordingly, we omit the superscript $(k)$ and use unindexed symbols ($D_i$, $S_{ij}$, $p_L$). Extending the framework to multiple LLMs is straightforward---apply it to each system independently and compare their posterior distributions.

\begin{figure}[H]
    \centering
    \includegraphics[width=1\linewidth]{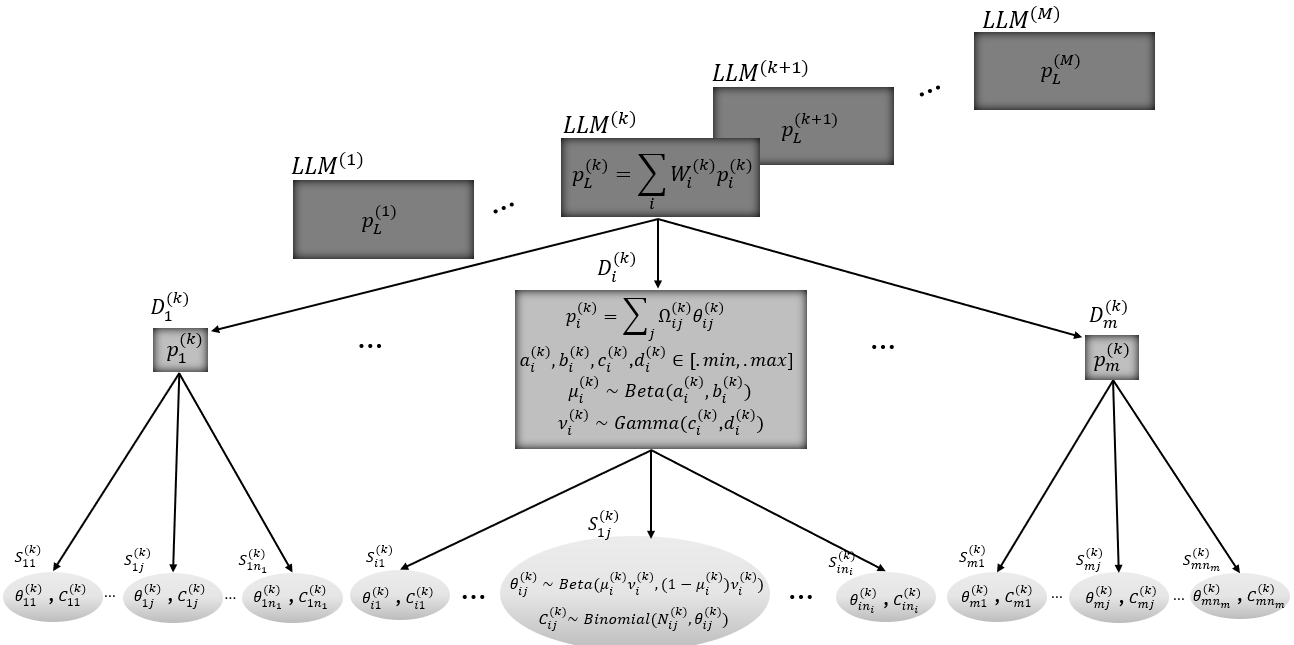}
    \caption{Hierarchical structure with independent domains and dependent subdomains. For readability, parameters and priors are shown only for one subdomain $S_{ij}$ and its parent domain $D_i$ under a representative $LLM_k$; the remaining subdomains, domains ($i=1, \dots, m$), and LLMs $(k=1, \dots ,M)$ are identical, differing only in their indices. \rv{Rectangles indicate independent components, while ovals indicate dependent components within the hierarchy.}}
\label{fig_imprecise_dependentSub_independentDomain}
\end{figure}

\subsubsection{Hierarchical Bayesian Framework for LLM Reliability Modeling}
\label{sec_hierarchical_bayesian_framework}
As seen in Fig.~\ref{fig_imprecise_dependentSub_independentDomain}, inference proceeds hierarchically from the bottom up. 
At the subdomain level, the observed data consist of the number of correct 
responses $C_{ij}$ out of total trials $N_{ij}$. These update the subdomain 
reliabilities $\theta_{ij}$, which are modeled with a Binomial likelihood\footnote{The assumption of independent Bernoulli trials with constant success probability $\theta_{ij}$ may not capture all real dependencies, but it serves as an effective approximation for modeling subdomain outcomes \cite{miller_adding_2024,luettgau_hibayes_2025,zhao_assessing_2020}.
Aggregating across $N_{ij}$ trials then yields the Binomial likelihood, 
whose support $\{0,\dots,N_{ij}\}$ matches the possible counts of correct 
responses observed in subdomain $S_{ij}$.
\[
C_{ij} \mid \theta_{ij}, N_{ij} \;\sim\; \mathrm{Binomial}(N_{ij}, \theta_{ij}).
\]} and a 
Beta prior\footnote{We place a Beta prior on subdomain reliability: 
\[
\theta_{ij} \sim \mathrm{Beta}(\alpha_i, \beta_i).
\] 
The Beta distribution is the most common choice for probabilities bounded in $[0,1]$, 
and it is conjugate to the Binomial likelihood, ensuring closed-form 
updates and computational stability.}. The Beta prior is parameterized by domain-level hyperparameters 
$(\mu_i,\nu_i)$\footnote{To make priors more intuitive and interpretable, we use a reparameterization. 
Instead of specifying the Beta prior directly in terms of $(\alpha_i, \beta_i)$, 
we express it as
\[
\theta_{ij}\mid \mu_i, \nu_i \;\sim\; \mathrm{Beta}(\mu_i\nu_i, (1-\mu_i)\nu_i),
\] 
where $\mu_i$ denotes the expected reliability (prior mean, 
$\mu_i = \mathbb{E}[\theta_{ij}\mid \mu_i,\nu_i]$) and $\nu_i$ denotes the prior 
strength, reflecting the confidence in $\mu_i$ (equivalent sample size or 
pseudo-counts, $\nu_i=\alpha_i+\beta_i$). This reparameterization makes prior 
beliefs easier to specify and justify.}, representing the expected reliability and the prior strength 
within domain $D_i$. These hyperparameters are in turn governed by 
domain-specific hyperpriors:
$\mu_i \sim \mathrm{Beta}(a_i,b_i)$\footnote{This choice reflects the idea that $\mu_i$ itself is a probability lying 
in $(0,1)$, and the Beta distribution provides a flexible family of shapes 
that can express different prior beliefs about domain reliability, ranging 
from diffuse to highly concentrated around particular values (e.g., favoring higher values, lower values, or balanced around 0.5).}, and $\nu_i \sim \mathrm{Gamma}\footnote{This treats $\nu_i$ as a positive random variable ($\nu_i>0$) reflecting how tightly 
subdomains within a domain are assumed to cluster around $\mu_i$. When $\nu_i$ is small,
the prior is diffuse and allows substantial variation across subdomains (weak
pooling). When $\nu_i$ is large, the prior concentrates mass near $\mu_i$ and
subdomains are tightly clustered (strong pooling).}(c_i, \text{rate}\footnote{The Gamma distribution is commonly written in two equivalent forms:
shape--rate and shape--scale. We use the \emph{shape--rate} form,
\(\nu_i \sim \mathrm{Gamma}(c_i,\mathrm{rate}=d_i)\), with
\(\mathbb{E}[\nu_i]=c_i/d_i\) and \(\mathrm{Var}[\nu_i]=c_i/d_i^{2}\).
If a library expects the \emph{shape--scale} form, set
\(\theta_i = 1/d_i\) and write
\(\nu_i \sim \mathrm{Gamma}(c_i,\mathrm{scale}=\theta_i)\);
the two parameterizations are mathematically identical under
\(\theta_i=1/d_i\). The only practical concern is clarity and
reproducibility—accidentally treating a rate as a scale (or vice versa)
would change the prior’s mean and variance.}=d_i)$.

Because the hyperparameters $(\mu_i, \nu_i)$ are shared across all subdomains in domain $D_i$, the information is pooled: subdomains with few observations are stabilized by drawing on evidence from other subdomains in the same domain, while subdomains with many observations are influenced mainly by their own data. This hierarchical setup achieves the desired dependence within domains, while domains remain independent (cf.~Section \ref{sec_discussion} for discussions on this setup).

Once posterior subdomain reliabilities $\theta_{ij}$ are inferred, they are 
aggregated into domain-level reliabilities 
$p_i = \sum_j \Omega_{ij}\theta_{ij}$ using OP weights 
$\Omega_{ij}$, which reflect the practical importance of subdomains. These 
domain-level posteriors are then further combined into the overall LLM reliability, 
$p_L = \sum_i W_i p_i$, using domain operational weights $W_i$. Note, OPs at different levels are represented by the variables $\Omega_{ij}$ and $W_i$, which, for a given LLM use case, can be instantiated either as fixed constants (when usage is certain) or as probability distributions that encode uncertainty about how the LLM will be used in practice \cite{bishop_deriving_2017,pietrantuono2020reliability,popov2025black,popov2025dynamic}.

\subsubsection{Uncertainty Handling via Imprecise Probability}
\label{sec_uncertainty_handling_imprecise_probability}
To address the epistemic uncertainty in prior specification, we adopt an Imprecise Probability approach:
we specify the hyper-hyper-parameters $a_i, b_i, c_i, d_i$ as intervals (Eq.~\ref{eq_hyper_hyper_intervals}) rather than point values. This produces posterior envelopes (lower and upper bounds) at the subdomain, domain, and overall LLM levels.
\begin{align}
\label{eq_hyper_hyper_intervals}
    & a_i \in [a_{i}^{\min}, a_{i}^{\max}], \quad b_i \in [b_{i}^{\min}, b_{i}^{\max}], \nonumber\\
    & c_i \in [c_{i}^{\min}, c_{i}^{\max}], \quad d_i \in [d_{i}^{\min}, d_{i}^{\max}] 
\end{align}

At the subdomain level, we compute posterior bounds for each task type by considering all admissible hyperparameter configurations:
\begin{align}
\underline{Pr}(\theta_{ij} \mid C_i) \leq Pr(\theta_{ij} \mid C_i) \leq \overline{Pr}(\theta_{ij} \mid C_i)
\end{align}
Note, $C_i$ denotes the data is domain $i$. Since we consider dependencies among subdomains, the posterior $\theta_{ij}$ is a function of all data $C_i$ in domain $i$.

Later (Theorem. \ref{thm_domain_posterior} and Theorem. \ref{thm_LLM_posterior}) we will discuss that the closed-form densities (like subdomain level) do not exist for domain and LLM level reliability. 
We represent the cumulative distribution function (CDF) envelopes for the domain and overall LLM levels.

At the domain level, uncertainty propagates upward through OP weights $p_i = \sum_j \Omega_{ij}\theta_{ij}$, 
producing domain reliability bounds:
\begin{align}
    \underline{F}_{p_i}(t \mid C_i) 
= \inf_{h_i \in \mathcal{A}_i} F_{p_i}(t \mid C_i, h_i),
\quad
\overline{F}_{p_i}(t \mid C_i) 
= \sup_{h_i \in \mathcal{A}_i} F_{p_i}(t \mid C_i, h_i),
\quad t \in [0,1] \nonumber
\end{align}
\rv{Here and in the following, $t \in [0,1]$ denotes a generic probability
threshold at which the CDF of the corresponding
(non-failure probability or reliability) random variable is evaluated.}

The domain level posterior $p_i$ only considers $C_i$, given our assumption on cross-domain independence.

At the LLM level, uncertainty aggregates across all $k$ domains through domain weights $p_L = \sum_{i=1}^k W_i p_i$,  while respecting cross-domain independence, resulting in system-level reliability bounds:
\begin{align}
    &\underline{F}_{p_L}(t \mid \text{data}) 
=  \inf_{\mathcal{H} \in \mathcal{A}_{\text{LLM}}} F_{p_L}(t \mid \text{data}, \mathcal{H}),
\nonumber \\
&\overline{F}_{p_L}(t \mid \text{data}) 
= \sup_{\mathcal{H} \in \mathcal{A}_{\text{LLM}}} F_{p_L}(t \mid \text{data}, \mathcal{H}) \nonumber
\end{align}
where, $t \in [0,1]$.

While in the next subsection, we develop Theorems 1--3 of deriving posterior sets for those non-failure probability variables $\theta_{ij}$s, $p_i$s and $p_L$, the posterior distribution sets for future reliability of passing next $n_F$ tasks at each level, e.g., for a domain $i$:
\begin{align}
\underline{Pr}((\sum_j \Omega_{ij}\theta_{ij})^{n_F} \mid C_i) \leq Pr((\sum_j \Omega_{ij}\theta_{ij})^{n_F} \mid C_i) \leq \overline{Pr}((\sum_j \Omega_{ij}\theta_{ij})^{n_F} \mid C_i)
\end{align}
can also be derived, as shown in our Theorems 4--6.

\subsubsection{Theorems}
\label{sec_theorems_one_future_opertaion}

The following theorems are the main mathematical results of HIP-LLM. Intuitively, given the imprecise prior knowledge encoded by the hyperparameters, the theorems derive the posterior distributions of non-failure probabilities at different hierarchical levels, as well as the future reliability, based on the probabilistic reasoning model illustrated in Fig.~\ref{fig_imprecise_dependentSub_independentDomain}, conditioned on the observed task failure data.

\begin{theorem}[Sub-domain level non-failure probability]
\label{thm_subdomain_marginal}
For subdomain $S_{ij}$ in domain $D_i$, let 
$C_i=\{(C_{ik},N_{ik})\}_{k=1}^{n_i}$ be the observed data. 
Let the admissible set of hyperparameters be
\[
\mathcal{A}_i \;=\;
[a_{i}^{\min},a_{i}^{\max}] \times 
[b_{i}^{\min},b_{i}^{\max}] \times 
[c_{i}^{\min},c_{i}^{\max}] \times 
[d_{i}^{\min},d_{i}^{\max}],
\]
and write $h_i=(a_i,b_i,c_i,d_i)$.
Then, \emph{for any} $h_i \in \mathcal{A}_i$, the marginal posterior density of $\theta_{ij}$ is
\[
Pr(\theta_{ij}\mid C_i,h_i) 
= \frac{f_{\mathrm{marg}}(\theta_{ij}, C_i; h_i)}{Z_{\mathrm{marg}}(h_i)},
\]
where $f_{\mathrm{marg}}$ (unnormalized posterior) and $Z_{\mathrm{marg}}$ (normalizing constant) are
\begin{align}
& \hspace{1cm}f_{\mathrm{marg}}(\theta_{ij}, C_i; h_i) 
=\nonumber \\
& \hspace{2cm} \int_0^1 \int_0^{\infty} 
\Bigg[\prod_{k \neq j} \int_0^1 d\theta_{ik}\Bigg] 
L(\boldsymbol{\theta}_i)\,
Pr(\boldsymbol{\theta}_i \mid \mu_i,\nu_i)\,
Pr(\mu_i,\nu_i\mid h_i)\,
d\mu_i\, d\nu_i \nonumber
\end{align}
\[
Z_{\mathrm{marg}}(h_i) 
= \int_0^1 \int_0^{\infty} 
Pr(C_i \mid \mu_i,\nu_i)\,Pr(\mu_i,\nu_i \mid h_i)\,
d\mu_i\, d\nu_i,
\]
with $L(\boldsymbol{\theta}_i)=Pr(C_i\mid \boldsymbol{\theta}_i)$.

The imprecise marginal posterior is characterized by the lower/upper envelopes
\[
\underline{Pr}(\theta_{ij}\mid C_i) 
= \inf_{h_i\in\mathcal{A}_i} Pr(\theta_{ij}\mid C_i,h_i),
\qquad
\overline{Pr}(\theta_{ij}\mid C_i) 
= \sup_{h_i\in\mathcal{A}_i} Pr(\theta_{ij}\mid C_i,h_i).
\]
\end{theorem}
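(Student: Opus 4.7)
The plan is to treat Theorem~\ref{thm_subdomain_marginal} as a two-stage derivation: first establish the precise marginal posterior density for each fixed $h_i \in \mathcal{A}_i$ via Bayes' rule and nuisance-parameter marginalization, then lift the result to the imprecise setting by taking pointwise envelopes over the admissible hyperparameter box.

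First I would fix an arbitrary $h_i=(a_i,b_i,c_i,d_i)\in\mathcal{A}_i$ and write the full joint posterior over $(\boldsymbol{\theta}_i,\mu_i,\nu_i)$ using Bayes' rule in the form
\[
Pr(\boldsymbol{\theta}_i,\mu_i,\nu_i\mid C_i,h_i)
\;=\;
\frac{L(\boldsymbol{\theta}_i)\,Pr(\boldsymbol{\theta}_i\mid\mu_i,\nu_i)\,Pr(\mu_i,\nu_i\mid h_i)}
{Z_{\mathrm{marg}}(h_i)},
\]
where the likelihood factorizes across subdomains as $L(\boldsymbol{\theta}_i)=\prod_k \mathrm{Binomial}(C_{ik};N_{ik},\theta_{ik})$, the conditional prior factorizes as $Pr(\boldsymbol{\theta}_i\mid\mu_i,\nu_i)=\prod_k \mathrm{Beta}(\theta_{ik};\mu_i\nu_i,(1-\mu_i)\nu_i)$, and the hyperprior factorizes as $Pr(\mu_i,\nu_i\mid h_i)=\mathrm{Beta}(\mu_i;a_i,b_i)\,\mathrm{Gamma}(\nu_i;c_i,d_i)$ by the conditional-independence structure encoded in Fig.~\ref{fig_imprecise_dependentSub_independentDomain}. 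The normalizer $Z_{\mathrm{marg}}(h_i)$ is the full evidence; noting that for fixed $(\mu_i,\nu_i)$ the $\theta_{ik}$-integrals of the Beta--Binomial factors collapse to Beta functions independently across $k$, the evidence reduces to the stated two-dimensional integral $\int_0^1\!\int_0^\infty Pr(C_i\mid\mu_i,\nu_i)\,Pr(\mu_i,\nu_i\mid h_i)\,d\mu_i\,d\nu_i$.

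Next I would obtain the marginal density of $\theta_{ij}$ by integrating out the nuisance parameters $\{\theta_{ik}\}_{k\neq j}$ together with $(\mu_i,\nu_i)$. Since all integrands are non-negative and the priors and likelihoods are integrable, Tonelli's theorem justifies swapping integration order, giving the stated expression for $f_{\mathrm{marg}}(\theta_{ij},C_i;h_i)$: the innermost products of Beta--Binomial kernels in the $\theta_{ik}$ ($k\neq j$) are integrated first, then the remaining mixture over $(\mu_i,\nu_i)$ against the hyperprior is taken. Dividing by $Z_{\mathrm{marg}}(h_i)$ yields the precise posterior density, and I would verify that this object is indeed a probability density in $\theta_{ij}$ by an immediate Fubini argument showing its integral over $[0,1]$ equals one.

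Finally, for the imprecise part, I would argue that because the construction above is pointwise in $h_i$, the credal set of priors indexed by $\mathcal{A}_i$ induces a credal set of posteriors, and the lower and upper envelopes are by definition the pointwise inf and sup of $Pr(\theta_{ij}\mid C_i,h_i)$ over $h_i\in\mathcal{A}_i$. The main obstacle I anticipate is not the algebraic marginalization, which is routine, but ensuring the envelope operation is well posed: specifically, I would need to check that $h_i\mapsto Pr(\theta_{ij}\mid C_i,h_i)$ is continuous on the compact box $\mathcal{A}_i$ (so that the inf/sup are attained and the envelope is itself a measurable function of $\theta_{ij}$), and that $Z_{\mathrm{marg}}(h_i)$ is uniformly bounded away from zero on $\mathcal{A}_i$ so the ratio is well defined throughout the optimization. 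Both follow from continuity of the Beta and Gamma densities in their parameters together with dominated convergence, but they are the only non-mechanical ingredients in the argument.
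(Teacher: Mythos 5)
Your proposal is correct and follows essentially the same route as the paper's own proof: Bayes' rule on the full joint over $(\boldsymbol{\theta}_i,\mu_i,\nu_i)$ with the factorized likelihood, conditional Beta prior, and Beta--Gamma hyperprior, collapse of the $\theta_{ik}$-integrals to Beta functions in the evidence, and pointwise inf/sup envelopes over $\mathcal{A}_i$. The only difference is that you explicitly flag the well-posedness of the envelope operation (continuity of $h_i\mapsto Pr(\theta_{ij}\mid C_i,h_i)$ on the compact box and $Z_{\mathrm{marg}}$ bounded away from zero), which the paper takes for granted; this is a minor but legitimate strengthening rather than a different approach.
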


The proof of Theorem~\ref{thm_subdomain_marginal} is presented at Appendix~\ref{sec_proof_subdomain_marginal_appendix}.
In this theorem, the subdomain posterior $Pr(\theta_{ij} \mid C_i, h_i)$
has a closed-form density because of conjugacy: the Beta prior combined with the Binomial likelihood yields a mixture of Beta distributions after marginalizing over the hyperparameters $(\mu_i, \nu_i)$, which can be expressed and evaluated as a proper probability density function.

However, for later Theorems 
closed-form densities do not exist because $p_i = \sum_{j} \Omega_{ij}\theta_{ij}$ and $p_L = \sum_{i} W_i p_i$ are weighted sums of dependent random variables---the distribution of a sum of Beta random variables has no analytical form except in trivial cases\footnote{Assuming the weights are constants. When the weights are modeled as random variables with their own probability distributions, the same problem persists (if not harder).}. Computing such densities would require intractable multi-dimensional integrals. The CDF formulation sidesteps this problem: 
$F_{p_i}(t \mid C_i, h_i) = \int_0^1 \int_0^\infty F_{p_i}(t \mid \mu_i, \nu_i, C_i) \, Pr(\mu_i, \nu_i \mid C_i, h_i) \, d\mu_i \, d\nu_i$ only requires a two-dimensional integral over $(\mu_i, \nu_i)$, where the conditional CDF can be computed via Monte Carlo sampling of independent Betas. Since CDFs provide all necessary information for practical reliability assessment (probabilities, quantiles, expectations), they are the natural representation when densities are unavailable. 



\begin{theorem}[Domain level posterior non-failure probability]
\label{thm_domain_posterior}
For domain $D_i$ with local OP weights $\Omega_{ij}$ 
(where $\sum_{j=1}^{n_i} \Omega_{ij} = 1$), 
let $p_i = \sum_{j=1}^{n_i} \Omega_{ij}\theta_{ij}$ be the domain-level non-failure probability. 
Define the admissible set of hyper-hyper-parameters
\[
\mathcal{A}_i =
[a_{i}^{\min},a_{i}^{\max}] \times 
[b_{i}^{\min},b_{i}^{\max}] \times 
[c_{i}^{\min},c_{i}^{\max}] \times 
[d_{i}^{\min},d_{i}^{\max}],
\]
and write $h_i=(a_i,b_i,c_i,d_i)$. 

Then, for any $h_i \in \mathcal{A}_i$, the posterior distribution of $p_i$ 
is characterized by its CDF:

\begin{align}
F_{p_i}(t \mid C_i, h_i)
  &= \Pr(p_i \le t \mid C_i, h_i) \nonumber\\
  & \hspace{2cm} = \int_0^1 \int_0^\infty F_{p_i}(t \mid \mu_i, \nu_i, C_i)\,
     Pr(\mu_i, \nu_i \mid C_i, h_i)\, d\mu_i\, d\nu_i \nonumber
\end{align}

where
\begin{itemize}
\item $F_{p_i}(t \mid \mu_i, \nu_i, C_i)$ is the conditional CDF of 
$p_i = \sum_{j=1}^{n_i} \Omega_{ij}\theta_{ij}$ given that 
$\theta_{ij} \mid \mu_i, \nu_i, C_i \stackrel{\text{ind.}}{\sim} 
\text{Beta}(C_{ij} + \mu_i\nu_i, N_{ij} - C_{ij} + (1-\mu_i)\nu_i)$ 
for $j=1,\ldots,n_i$,
\item $Pr(\mu_i, \nu_i \mid C_i, h_i)$ is the hyper-posterior obtained via 
Bayes' rule:
\begin{align}
    & Pr(\mu_i, \nu_i \mid C_i, h_i) = \nonumber \\
& \hspace{1cm}\frac{Pr(C_i \mid \mu_i, \nu_i) \, 
\text{Beta}(\mu_i \mid a_i, b_i) \, 
\text{Gamma}(\nu_i \mid c_i, \text{rate}=d_i)}
{\int_0^1 \int_0^\infty Pr(C_i \mid \mu, \nu) \, 
\text{Beta}(\mu \mid a_i, b_i) \, 
\text{Gamma}(\nu \mid c_i, \text{rate}=d_i) \, d\mu \, d\nu} \nonumber
\end{align}
\end{itemize}

The imprecise domain posterior is characterized by CDF envelopes:
\[
\underline{F}_{p_i}(t \mid C_i) 
= \inf_{h_i\in\mathcal{A}_i} F_{p_i}(t \mid C_i, h_i),
\qquad
\overline{F}_{p_i}(t \mid C_i) 
= \sup_{h_i\in\mathcal{A}_i} F_{p_i}(t \mid C_i, h_i).
\]
\end{theorem}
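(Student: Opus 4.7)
}
The plan is to derive the stated integral representation of the domain-level posterior CDF by combining the law of total probability with the conditional independence structure of the hierarchical model in Fig.~\ref{fig_imprecise_dependentSub_independentDomain}, and then to obtain the envelopes by optimizing over the admissible hyperparameter set $\mathcal{A}_i$. The argument mirrors the marginalization strategy already used for Theorem~\ref{thm_subdomain_marginal}, but now applied to the derived quantity $p_i=\sum_{j=1}^{n_i}\Omega_{ij}\theta_{ij}$ rather than to a single $\theta_{ij}$.

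First, I would start from the definitional identity $F_{p_i}(t\mid C_i,h_i)=\Pr(p_i\le t\mid C_i,h_i)$ and insert the latent hyperparameters $(\mu_i,\nu_i)$ by the law of total probability, obtaining
\[
F_{p_i}(t\mid C_i,h_i)=\int_0^1\!\int_0^\infty \Pr(p_i\le t\mid \mu_i,\nu_i,C_i)\,\Pr(\mu_i,\nu_i\mid C_i,h_i)\,d\mu_i\,d\nu_i .
\]
Next, I would invoke the directed acyclic graph implicit in Fig.~\ref{fig_imprecise_dependentSub_independentDomain}: conditional on $(\mu_i,\nu_i)$, the vector $\boldsymbol\theta_i$ factorizes as a product of independent $\mathrm{Beta}(\mu_i\nu_i,(1-\mu_i)\nu_i)$ priors, and each $C_{ij}$ depends only on its own $\theta_{ij}$ through the Binomial likelihood. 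Applying Bayes' rule subdomain by subdomain, Beta–Binomial conjugacy yields
\[
\theta_{ij}\mid \mu_i,\nu_i,C_i\stackrel{\text{ind.}}{\sim}\mathrm{Beta}\bigl(C_{ij}+\mu_i\nu_i,\;N_{ij}-C_{ij}+(1-\mu_i)\nu_i\bigr),
\]
so that $F_{p_i}(t\mid \mu_i,\nu_i,C_i)$ is precisely the CDF of the weighted sum of these independent Beta variables, establishing the form of the inner factor in the theorem.

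The hyper-posterior $\Pr(\mu_i,\nu_i\mid C_i,h_i)$ is then obtained by a standard Bayes' rule computation with prior $\mathrm{Beta}(\mu_i\mid a_i,b_i)\cdot\mathrm{Gamma}(\nu_i\mid c_i,\text{rate}=d_i)$ and marginal likelihood $\Pr(C_i\mid\mu_i,\nu_i)=\prod_j\int_0^1 \Pr(C_{ij}\mid\theta_{ij})\,\mathrm{Beta}(\theta_{ij}\mid\mu_i\nu_i,(1-\mu_i)\nu_i)\,d\theta_{ij}$, where each inner integral reduces to a Beta–Binomial marginal via the ratio of Beta functions. Substituting this expression into the outer integral yields exactly the form claimed in the theorem.

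Finally, for the envelope statement, I would define the map $h_i\mapsto F_{p_i}(t\mid C_i,h_i)$ on the compact box $\mathcal{A}_i$ and take pointwise infimum and supremum, giving $\underline{F}_{p_i}$ and $\overline{F}_{p_i}$. The main obstacle I anticipate is not the conditional-independence bookkeeping but rather the absence of a closed form for the CDF of a weighted sum of dependent (through the shared hyperprior) Beta variables: I would address this by keeping the inner conditional CDF in integral form and noting that it can be evaluated by Monte Carlo sampling from the Beta updates, so that the theorem's representation is both exact and practically computable, and that continuity of the integrand in $h_i$ over the compact $\mathcal{A}_i$ guarantees the inf/sup are attained.
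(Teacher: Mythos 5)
Your proposal is correct and follows essentially the same route as the paper's proof: both hinge on conditioning on the domain-level hyperparameters $(\mu_i,\nu_i)$, under which the subdomain reliabilities become independent Beta posteriors by conjugacy, and then mixing the resulting conditional CDF of the weighted sum against the hyper-posterior obtained from the Beta--Binomial marginal likelihood, before taking pointwise inf/sup over $\mathcal{A}_i$. The only cosmetic difference is that you apply the law of total probability directly to $\Pr(p_i\le t\mid C_i,h_i)$, whereas the paper decomposes the joint density of $\boldsymbol{\theta}_i$ (its Lemma~\ref{lem:hierarchical_decomposition}) and then exchanges integrals via Fubini---the two are equivalent.
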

For the proof details, we refer reads to \ref{sec_proof_domain_appendix}.

\begin{theorem}[LLM-level posterior non-failure probability]
\label{thm_LLM_posterior}
For the LLM system with domain weights $W_i$ (where $\sum_{i=1}^{m} W_i = 1$), 
let $p_L = \sum_{i=1}^{m} W_i p_i$ be the LLM-level failure probability and 
$\text{data} = \{C_1, \ldots, C_m\}$ the observed data across all domains. 
Assume cross-domain independence.

Define the domain-level admissible sets
\[
\mathcal{A}_i = 
[a_{i}^{\min}, a_{i}^{\max}] \times 
[b_{i}^{\min}, b_{i}^{\max}] \times 
[c_{i}^{\min}, c_{i}^{\max}] \times 
[d_{i}^{\min}, d_{i}^{\max}]
\]
and write $h_i = (a_i, b_i, c_i, d_i)$ for $i = 1, \ldots, m$. 
Define the LLM-level admissible set as the Cartesian product
\[
\mathcal{A}_{\text{LLM}} = \mathcal{A}_1 \times \cdots \times \mathcal{A}_m,
\]
and collect the domain hyperparameters as 
$\mathcal{H} = (h_1, \ldots, h_m) \in \mathcal{A}_{\text{LLM}}$.

Then, for any $\mathcal{H} \in \mathcal{A}_{\text{LLM}}$, the posterior distribution of $p_L$ 
is characterized by its CDF:
\begin{align}
    &F_{p_L}(t \mid \text{data}, \mathcal{H}) = 
 Pr(p_L \leq t \mid \text{data}, \mathcal{H})
= \nonumber \\
&  \hspace{1.5cm} \int \cdots \int G\big(t \mid \{\mu_i, \nu_i\}_{i=1}^m, \text{data}\big) 
\prod_{i=1}^{m} Pr(\mu_i, \nu_i \mid C_i, h_i) \prod_{i=1}^{m} d\mu_i \, d\nu_i \nonumber
\end{align}
where
\begin{itemize}
\item $G(t \mid \{\mu_i, \nu_i\}_{i=1}^m, \text{data})$ is the conditional CDF of 
$p_L = \sum_{i=1}^{m} W_i p_i$ given all hyperparameters, defined as
\[
G\big(t \mid \{\mu_i, \nu_i\}_{i=1}^m, \text{data}\big) 
= \int_{\mathcal{R}_L(t)} \prod_{i=1}^{m} f_{p_i}(p_i \mid \mu_i, \nu_i, C_i) \, 
dp_1 \cdots dp_m,
\]
where $\mathcal{R}_L(t) := \{(p_1, \ldots, p_m) \in (0,1)^m : \sum_{i=1}^{m} W_i p_i \leq t\}$, 
and $f_{p_i}(\cdot \mid \mu_i, \nu_i, C_i)$ is the conditional density of 
$p_i = \sum_{j=1}^{n_i} \Omega_{ij}\theta_{ij}$ under 
$\theta_{ij} \mid \mu_i, \nu_i, C_i \stackrel{\text{ind.}}{\sim} 
\text{Beta}(C_{ij} + \mu_i\nu_i, N_{ij} - C_{ij} + (1-\mu_i)\nu_i)$,

\item $Pr(\mu_i, \nu_i \mid C_i, h_i)$ is the domain-level hyper-posterior for domain $i$:
\begin{align}
    & Pr(\mu_i, \nu_i \mid C_i, h_i) = \nonumber \\
& \hspace{1cm}\frac{Pr(C_i \mid \mu_i, \nu_i) \, 
\text{Beta}(\mu_i \mid a_i, b_i) \, 
\text{Gamma}(\nu_i \mid c_i, \text{rate}=d_i)}
{\int_0^1 \int_0^\infty Pr(C_i \mid \mu, \nu) \, 
\text{Beta}(\mu \mid a_i, b_i) \, 
\text{Gamma}(\nu \mid c_i, \text{rate}=d_i) \, d\mu \, d\nu} \nonumber
\end{align}

\item Cross-domain independence ensures 
$Pr(\{\mu_i, \nu_i\}_{i=1}^m \mid \text{data}, \mathcal{H}) = 
\prod_{i=1}^{m} Pr(\mu_i, \nu_i \mid C_i, h_i)$.
\end{itemize}

The imprecise LLM posterior is characterized by CDF envelopes:
\begin{align}
    &\underline{F}_{p_L}(t \mid \text{data}) 
= \inf_{\mathcal{H} \in \mathcal{A}_{\text{LLM}}} F_{p_L}(t \mid \text{data}, \mathcal{H}) \nonumber \\
&\overline{F}_{p_L}(t \mid \text{data}) 
= \sup_{\mathcal{H} \in \mathcal{A}_{\text{LLM}}} F_{p_L}(t \mid \text{data}, \mathcal{H}) \nonumber
\end{align}
For more details, see Section~\ref{sec_proof_LLM_appendix}.
\end{theorem}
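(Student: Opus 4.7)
\textbf{Proof proposal for Theorem~\ref{thm_LLM_posterior}.}
The plan is to extend the domain-level argument of Theorem~\ref{thm_domain_posterior} one level higher by exploiting cross-domain independence to factorize the joint hyper-posterior, and then express the CDF of $p_L$ as a nested marginalization. Concretely, I would begin from the definition $F_{p_L}(t\mid \text{data},\mathcal H)=\Pr\big(\sum_{i=1}^m W_i p_i \le t\mid \text{data},\mathcal H\big)$ and condition on the full collection of domain hyperparameters $\{\mu_i,\nu_i\}_{i=1}^m$. Given these hyperparameters and the data, each $p_i=\sum_j \Omega_{ij}\theta_{ij}$ is a deterministic function of the conditionally independent Beta variables $\theta_{ij}\mid \mu_i,\nu_i,C_i\sim\text{Beta}(C_{ij}+\mu_i\nu_i,\,N_{ij}-C_{ij}+(1-\mu_i)\nu_i)$ established in Theorem~\ref{thm_subdomain_marginal}, so the conditional joint density of $(p_1,\dots,p_m)$ factorizes as $\prod_i f_{p_i}(p_i\mid \mu_i,\nu_i,C_i)$. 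Integrating this product over the region $\mathcal R_L(t)=\{(p_1,\dots,p_m)\in(0,1)^m:\sum_i W_i p_i\le t\}$ yields exactly the conditional CDF $G(t\mid\{\mu_i,\nu_i\}_{i=1}^m,\text{data})$ in the statement.

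Next, I would marginalize over the hyperparameters by applying the tower property of conditional expectation,
\begin{equation*}
F_{p_L}(t\mid \text{data},\mathcal H)
= \mathbb{E}\bigl[G(t\mid\{\mu_i,\nu_i\}_{i=1}^m,\text{data})\,\big|\,\text{data},\mathcal H\bigr],
\end{equation*}
and use cross-domain independence to write $\Pr(\{\mu_i,\nu_i\}_{i=1}^m\mid \text{data},\mathcal H)=\prod_i \Pr(\mu_i,\nu_i\mid C_i,h_i)$. The latter factorization holds because, by the modeling assumption in Fig.~\ref{fig_imprecise_dependentSub_independentDomain}, the domain-specific hyperparameters $h_i$ are a priori independent across $i$, the likelihoods factor as $\Pr(\text{data}\mid\{\mu_i,\nu_i\})=\prod_i \Pr(C_i\mid \mu_i,\nu_i)$, and therefore Bayes' rule separates into $m$ independent domain-level updates. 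Substituting the expression for each $\Pr(\mu_i,\nu_i\mid C_i,h_i)$ obtained in Theorem~\ref{thm_domain_posterior} gives the claimed integral representation of $F_{p_L}(t\mid\text{data},\mathcal H)$.

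Finally, I would construct the envelopes by taking the infimum and supremum of $F_{p_L}(t\mid\text{data},\mathcal H)$ over $\mathcal H\in\mathcal A_{\text{LLM}}$. Because $\mathcal A_{\text{LLM}}=\mathcal A_1\times\cdots\times\mathcal A_m$ is a compact product set and each factor $\Pr(\mu_i,\nu_i\mid C_i,h_i)$ depends continuously only on $h_i$, the optimization decouples into domain-wise problems once the sensitivity direction is fixed: for a given threshold $t$, the lower (upper) envelope is attained by choosing each $h_i$ so as to minimize (maximize) the contribution of domain $i$ to the integrand. I would argue that the infimum and supremum are achieved inside $\mathcal A_{\text{LLM}}$ via compactness and continuity of the mapping $h_i\mapsto \Pr(\mu_i,\nu_i\mid C_i,h_i)$ under the Beta--Gamma hyperpriors.

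The main obstacle will be the last step: although the inner integrand $G$ is identical across choices of $\mathcal H$, the optimal $h_i$ generally depends on $t$ and on the joint interaction of domains through the constraint $\sum_i W_i p_i\le t$, so the decoupling is only partial and a fully closed-form optimizer is not expected. I would therefore treat the envelopes as numerically computable quantities, with the theoretical content being the well-defined representation and measurability of the supremum/infimum, leaving the actual evaluation to the Monte Carlo scheme described in Section~\ref{sec_uncertainty_handling_imprecise_probability}.
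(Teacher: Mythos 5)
Your proposal is correct and follows essentially the same route as the paper's proof: both express the CDF of $p_L$ via the factorized conditional law of $(p_1,\dots,p_m)$ given the hyperparameters (yielding $G$), marginalize over $\{\mu_i,\nu_i\}$ using the cross-domain factorization of the hyper-posterior (tower property / Fubini), and then define the envelopes as pointwise infima and suprema over $\mathcal{A}_{\text{LLM}}$. Your added remarks on compactness, continuity, and the only-partial decoupling of the optimization over $\mathcal{H}$ go slightly beyond what the paper argues but do not change the substance.
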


Theorems~\ref{thm_subdomain_marginal}, \ref{thm_domain_posterior}, and \ref{thm_LLM_posterior} characterize the posterior distribution of non-failure probability $\theta_{ij}$ or aggregated $p_i$, $p_L$. However, in practice, we often care about the reliability over \emph{a specified number of consecutive future operations}, e.g., ``what is the probability that an LLM succeeds on the next 10 tasks in a row?'' or ``what is the probability that the LLM succeeds on the next 20 coding tasks?''.

The following set of theorems extends the aforementioned theorems to characterize the full posterior distribution of reliability for $n_F$ consecutive future operations, i.e., the probability that the LLM operates \(n_F\) consecutive failure-free tasks. Conditioning is on the observed evaluation data across subdomains \(S_{ij}\): \(C_{ij}\) correct generations out of \(N_{ij}\) prompts.

\begin{theorem}[Subdomain posterior reliability for $n_F$ future operations]
\label{thm_subdomain_nF_dist}
For subdomain $S_{ij}$ in domain $D_i$, let $C_i = \{(C_{ik}, N_{ik})\}_{k=1}^{n_i}$ denote all observed data in the domain. Let the admissible set of hyperparameters be
\begin{equation}
\mathcal{A}_i = [a_i^{\min}, a_i^{\max}] \times [b_i^{\min}, b_i^{\max}] \times [c_i^{\min}, c_i^{\max}] \times [d_i^{\min}, d_i^{\max}], \nonumber
\end{equation}
and write $h_i = (a_i, b_i, c_i, d_i)$. Define the reliability random variable\footnote{Comparing to the reliability definition in Def.~\ref{def_formal_llm_reliability}, the OPs are omitted as we assume they are fixed constants in these theorems.} $R_{ij}(n_F) =\theta_{ij}^{n_F}$, representing the probability of $n_F$ consecutive failure-free operations in subdomain $S_{ij}$.

For any $h_i \in \mathcal{A}_i$, the posterior CDF of $R_{ij}(n_F)$ is:
\begin{equation}
F_{R_{ij}(n_F)}(t \mid C_i, h_i) = \Pr(\theta_{ij}^{n_F} \leq t \mid C_i, h_i) = \int_0^{r^{1/n_F}} Pr(\theta_{ij} \mid C_i, h_i) \, d\theta_{ij}, \nonumber
\end{equation}
where $Pr(\theta_{ij} \mid C_i, h_i)$ is the marginal posterior density from Theorem~\ref{thm_subdomain_marginal}.

The imprecise posterior distribution is characterized by the CDF envelopes:
\begin{equation}
\underline{F}_{R_{ij}(n_F)}(t \mid C_i) = \inf_{h_i \in \mathcal{A}_i} F_{R_{ij}(n_F)}(t \mid C_i, h_i), \nonumber
\end{equation}
\begin{equation}
\overline{F}_{R_{ij}(n_F)}(t \mid C_i) = \sup_{h_i \in \mathcal{A}_i} F_{R_{ij}(n_F)}(t \mid C_i, h_i) \nonumber
\end{equation}
\end{theorem}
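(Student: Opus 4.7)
The plan is to reduce Theorem~\ref{thm_subdomain_nF_dist} to Theorem~\ref{thm_subdomain_marginal} via the monotone transformation $R_{ij}(n_F) = \theta_{ij}^{n_F}$. Since $\theta_{ij}\in[0,1]$ and $n_F\geq 1$, the map $\theta_{ij}\mapsto \theta_{ij}^{n_F}$ is a deterministic, strictly increasing bijection from $[0,1]$ onto $[0,1]$ that does not depend on the hyperparameters $h_i$. Consequently, for any $t\in[0,1]$, the event $\{\theta_{ij}^{n_F}\leq t\}$ is equivalent to $\{\theta_{ij}\leq t^{1/n_F}\}$, and this equivalence is the structural fact that drives the entire argument.

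First, I would fix an arbitrary $h_i\in\mathcal{A}_i$ and apply this event equivalence to write
$$F_{R_{ij}(n_F)}(t\mid C_i,h_i) \;=\; \Pr\bigl(\theta_{ij}\leq t^{1/n_F}\,\big|\,C_i,h_i\bigr) \;=\; F_{\theta_{ij}}\bigl(t^{1/n_F}\,\big|\,C_i,h_i\bigr).$$
I would then substitute the closed-form marginal posterior density $Pr(\theta_{ij}\mid C_i,h_i)$ provided by Theorem~\ref{thm_subdomain_marginal} to express the right-hand side as the integral $\int_0^{t^{1/n_F}} Pr(\theta_{ij}\mid C_i,h_i)\,d\theta_{ij}$, matching the stated expression (interpreting the upper limit $r^{1/n_F}$ as $t^{1/n_F}$, which appears to be a typo in the statement). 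A brief check at the boundaries $t=0$ and $t=1$ confirms that the resulting CDF is proper.

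Next, I would establish the envelope formulas. Because the transformation is deterministic and independent of $h_i$, the pointwise infimum and supremum over $\mathcal{A}_i$ commute with the change of variable, so
$$\inf_{h_i\in\mathcal{A}_i} F_{R_{ij}(n_F)}(t\mid C_i,h_i) \;=\; \inf_{h_i\in\mathcal{A}_i} F_{\theta_{ij}}\bigl(t^{1/n_F}\,\big|\,C_i,h_i\bigr),$$
and analogously for the supremum. Applying the envelope construction of Theorem~\ref{thm_subdomain_marginal} at the evaluation point $s = t^{1/n_F}$ then yields the lower and upper CDF envelopes claimed in the statement.

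The main obstacle I anticipate is not the reduction itself (which is essentially a monotone change-of-variables) but rather ensuring that the envelope inf/sup are well-defined and well-behaved as functions of $t$. Since $Pr(\theta_{ij}\mid C_i,h_i)$ from Theorem~\ref{thm_subdomain_marginal} is a ratio of multi-dimensional integrals with no closed form in general, verifying that the envelopes are attained (or at least approached) over the compact box $\mathcal{A}_i$ requires a short continuity argument: the Beta and Gamma densities are jointly continuous in their parameters on the compact set, and dominated convergence transfers this continuity to $Pr(\theta_{ij}\mid C_i,h_i)$ and hence to its CDF. This technical step, together with the monotonicity argument above, completes the proof.
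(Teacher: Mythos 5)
Your proposal is correct and follows essentially the same route as the paper's own derivation in the appendix: the event equivalence $\{\theta_{ij}^{n_F}\le t\}=\{\theta_{ij}\le t^{1/n_F}\}$ via monotonicity of the $n_F$-th power on $[0,1]$, substitution of the marginal posterior density from Theorem~\ref{thm_subdomain_marginal}, and pointwise inf/sup over $\mathcal{A}_i$ for the envelopes (the paper likewise treats the upper limit $r^{1/n_F}$ as the typo $t^{1/n_F}$). Your added continuity/attainment remark for the envelopes goes slightly beyond what the paper records, but it does not change the argument.
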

The transformation $\theta_{ij} \mapsto \theta_{ij}^{n_F}$ generally does not yield a closed-form density. We therefore characterize $R_{ij}(n_F)$ through its CDF. See~\ref{appendix_nF_reliabilit_appendix} computational methods.

\begin{theorem}[Domain posterior reliability for $n_F$ future operations]
\label{thm_domain_nF_dist}
For domain $D_i$ with local OP weights $\Omega_{ij}$ (where $\sum_{j=1}^{n_i} \Omega_{ij} = 1$), let $C_i = \{(C_{ik}, N_{ik})\}_{k=1}^{n_i}$ denote all observed data in the domain. Let the admissible set of hyperparameters be
\begin{equation}
\mathcal{A}_i = [a_i^{\min}, a_i^{\max}] \times [b_i^{\min}, b_i^{\max}] \times [c_i^{\min}, c_i^{\max}] \times [d_i^{\min}, d_i^{\max}], \nonumber
\end{equation}
and write $h_i = (a_i, b_i, c_i, d_i)$. Define the domain-level reliability
\begin{equation}
p_i = \sum_{j=1}^{n_i} \Omega_{ij} \theta_{ij}, \quad R_i(n_F) = p_i^{n_F}, \nonumber
\end{equation}
representing the probability of $n_F$ consecutive failure-free operations at the domain level.

For any $h_i \in \mathcal{A}_i$, the posterior distribution of $R_i(n_F)$ is characterized by its CDF:
\begin{equation}
F_{R_i(n_F)}(t \mid C_i, h_i) = \Pr\left(\left[\sum_{j=1}^{n_i} \Omega_{ij} \theta_{ij}\right]^{n_F} \leq t \,\bigg|\, C_i, h_i\right), \nonumber
\end{equation}
computed by integrating over the joint posterior $Pr(\boldsymbol{\theta}_i \mid C_i, h_i)$.

The imprecise posterior distribution is characterized by the CDF envelopes:
\begin{equation}
\underline{F}_{R_i(n_F)}(t \mid C_i) = \inf_{h_i \in \mathcal{A}_i} F_{R_i(n_F)}(t \mid C_i, h_i), \nonumber
\end{equation}
\begin{equation}
\overline{F}_{R_i(n_F)}(t \mid C_i) = \sup_{h_i \in \mathcal{A}_i} F_{R_i(n_F)}(t \mid C_i, h_i)\nonumber
\end{equation}
\end{theorem}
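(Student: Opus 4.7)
The plan is to reduce the domain-level $n_F$-step reliability $R_i(n_F) = p_i^{n_F}$ to the domain non-failure probability $p_i$ via the monotone transformation $g(x) = x^{n_F}$ on $[0,1]$, then invoke Theorem~\ref{thm_domain_posterior} directly. For any fixed $h_i \in \mathcal{A}_i$, since $p_i \in [0,1]$ and $g$ is strictly increasing there, one has $\{R_i(n_F) \le t\} = \{p_i \le t^{1/n_F}\}$ for every $t \in [0,1]$, which immediately gives
\begin{equation}
F_{R_i(n_F)}(t \mid C_i, h_i) \;=\; F_{p_i}(t^{1/n_F} \mid C_i, h_i). \nonumber
\end{equation}
Theorem~\ref{thm_domain_posterior} already supplies $F_{p_i}$ as an integral over the hyper-posterior $\Pr(\mu_i,\nu_i \mid C_i, h_i)$ of the conditional CDF of a weighted sum of conditionally independent Beta variables; collapsing those two levels via Fubini recovers the integral representation over the joint posterior $\Pr(\boldsymbol{\theta}_i \mid C_i, h_i)$ that is asserted in the theorem statement.

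For the imprecise envelopes, the key observation is that $s := t^{1/n_F}$ is free of $h_i$, so the $\inf$ and $\sup$ over $\mathcal{A}_i$ transfer term-by-term from Theorem~\ref{thm_domain_posterior}:
\begin{equation}
\underline{F}_{R_i(n_F)}(t \mid C_i) \;=\; \underline{F}_{p_i}(t^{1/n_F} \mid C_i),
\qquad
\overline{F}_{R_i(n_F)}(t \mid C_i) \;=\; \overline{F}_{p_i}(t^{1/n_F} \mid C_i). \nonumber
\end{equation}
Existence of the extrema follows from compactness of $\mathcal{A}_i$ together with continuity of $h_i \mapsto F_{p_i}(s \mid C_i, h_i)$, which I would justify by dominated convergence under the hyper-posterior integral: the integrand is uniformly bounded by $1$, and the Beta/Gamma hyperprior densities are jointly continuous in $(a_i,b_i,c_i,d_i)$.

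The main obstacle will be computational rather than analytical. The conditional CDF of a weighted sum of (conditionally independent but parameter-coupled) Beta variables admits no closed form, so a nested Monte Carlo scheme is required: an outer loop over draws $(\mu_i,\nu_i)$ from the hyper-posterior and an inner loop drawing $\theta_{ij}$ from their Beta full conditionals to estimate $\mathbf{1}\{(\sum_{j} \Omega_{ij}\theta_{ij})^{n_F} \le t\}$. The $\inf$/$\sup$ over the four-dimensional box $\mathcal{A}_i$ is generally non-convex and non-smooth in $h_i$, so I would combine a dense grid search with local refinement (e.g., dual annealing) and exploit any monotonicity of the hyper-posterior in $(a_i,b_i,c_i,d_i)$ to restrict the search to box corners whenever such structure can be verified; this mirrors the strategy used for Theorem~\ref{thm_domain_posterior} and is elaborated in the referenced appendix.
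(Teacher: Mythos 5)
Your proposal is correct and follows essentially the same route as the paper: the paper's own justification (in the appendix on $n_F$-step reliability) reduces $\Pr(p_i^{n_F}\le t\mid C_i,h_i)$ to $F_{p_i}(t^{1/n_F}\mid C_i,h_i)$ via the strictly increasing map $x\mapsto x^{n_F}$ on $[0,1]$, invokes Theorem~\ref{thm_domain_posterior} for the mixture/joint-posterior representation of $F_{p_i}$, and notes that the envelopes and the conditional CDF must be obtained numerically by Monte Carlo. Your additional remarks on existence of the extrema (compactness of $\mathcal{A}_i$ plus continuity via dominated convergence) and on the nested sampling scheme go slightly beyond what the paper writes down, but they are consistent with it.
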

The CDF can be computed via numerical integration over the joint posterior $Pr(\boldsymbol{\theta}_i \mid C_i, h_i)$ and Monte Carlo sampling, cf.~\ref{appendix_nF_reliabilit_appendix}.

\begin{theorem}[LLM posterior reliability for $n_F$ future operations]
\label{thm_LLM_nF_dist}
For the LLM system with an OP of domain weights $W_i$ (where $\sum_{i=1}^m W_i = 1$), let $\text{data} = \{C_1, \ldots, C_m\}$ denote all observed data across domains. Assume cross-domain independence.

Define the domain-level admissible sets
\begin{equation}
\mathcal{A}_i = [a_i^{\min}, a_i^{\max}] \times [b_i^{\min}, b_i^{\max}] \times [c_i^{\min}, c_i^{\max}] \times [d_i^{\min}, d_i^{\max}] \nonumber
\end{equation}
and write $h_i = (a_i, b_i, c_i, d_i)$ for $i = 1, \ldots, m$. Define the LLM-level admissible set as the Cartesian product
\begin{equation}
\mathcal{A}_{\text{LLM}} = \mathcal{A}_1 \times \cdots \times \mathcal{A}_m, \nonumber
\end{equation}
and collect the domain hyperparameters as $\mathcal{H} = (h_1, \ldots, h_m) \in \mathcal{A}_{\text{LLM}}$.

Define the LLM-level reliability
\begin{equation}
p_L = \sum_{i=1}^m W_i p_i, \quad R_L(n_F) = p_L^{n_F}, \nonumber
\end{equation}
representing the probability of $n_F$ consecutive failure-free operations at the LLM level.

For any $\mathcal{H} \in \mathcal{A}_{\text{LLM}}$, the posterior distribution of $R_L(n_F)$ is characterized by its CDF:
\begin{equation}
F_{R_L(n_F)}(t \mid \text{data}, \mathcal{H}) = \Pr\left(\left[\sum_{i=1}^m W_i p_i\right]^{n_F} \leq t \,\bigg|\, \text{data}, \mathcal{H}\right), \nonumber
\end{equation}
computed by integrating over $\prod_{i=1}^m Pr(p_i \mid C_i, h_i)$.

The imprecise posterior distribution is characterized by the CDF envelopes:
\begin{equation}
\underline{F}_{R_L(n_F)}(t \mid \text{data}) = \inf_{\mathcal{H} \in \mathcal{A}_{\text{LLM}}} F_{R_L(n_F)}(t \mid \text{data}, \mathcal{H}), \nonumber
\end{equation}
\begin{equation}
\overline{F}_{R_L(n_F)}(t \mid \text{data}) = \sup_{\mathcal{H} \in \mathcal{A}_{\text{LLM}}} F_{R_L(n_F)}(t \mid \text{data}, \mathcal{H}) \nonumber
\end{equation}
\end{theorem}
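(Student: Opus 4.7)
}
The plan is to reduce the statement to Theorem~\ref{thm_LLM_posterior} via a monotone transformation, then invoke cross-domain independence exactly as in Theorem~\ref{thm_domain_nF_dist}. The key observation is that $R_L(n_F)=p_L^{n_F}$ is a deterministic, strictly increasing transformation of $p_L$ on the unit interval, since $p_L=\sum_{i=1}^m W_i p_i\in[0,1]$ (as a convex combination of $p_i\in[0,1]$) and $n_F\ge 1$. Hence for any $t\in[0,1]$,
\[
\{R_L(n_F)\le t\}=\{p_L^{n_F}\le t\}=\{p_L\le t^{1/n_F}\},
\]
so that for every fixed $\mathcal{H}\in\mathcal{A}_{\text{LLM}}$,
\[
F_{R_L(n_F)}(t\mid \text{data},\mathcal{H}) \;=\; F_{p_L}\!\bigl(t^{1/n_F}\,\big|\,\text{data},\mathcal{H}\bigr).
\]

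Next, I would substitute the integral representation of $F_{p_L}$ supplied by Theorem~\ref{thm_LLM_posterior}, evaluated at $s=t^{1/n_F}$. This yields
\[
F_{R_L(n_F)}(t\mid \text{data},\mathcal{H})
= \int\!\cdots\!\int G\!\bigl(t^{1/n_F}\,\big|\,\{\mu_i,\nu_i\}_{i=1}^m,\text{data}\bigr)\prod_{i=1}^m Pr(\mu_i,\nu_i\mid C_i,h_i)\prod_{i=1}^m d\mu_i\,d\nu_i,
\]
where the factorization of the joint hyper-posterior across domains is justified by the cross-domain independence assumption, exactly as in Theorem~\ref{thm_LLM_posterior}. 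The inner conditional CDF $G(\,\cdot\,\mid\{\mu_i,\nu_i\},\text{data})$ is the CDF of $\sum_i W_i p_i$ where each $p_i=\sum_j \Omega_{ij}\theta_{ij}$ is built from conditionally independent Beta draws $\theta_{ij}\mid \mu_i,\nu_i,C_i\sim \mathrm{Beta}(C_{ij}+\mu_i\nu_i,\,N_{ij}-C_{ij}+(1-\mu_i)\nu_i)$. This is precisely the conditional quantity introduced in Theorem~\ref{thm_LLM_posterior}, evaluated at the transformed argument $t^{1/n_F}$.

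Finally, to obtain the envelopes I would exploit monotonicity of the transformation $s\mapsto s^{n_F}$ one more time. Since $t^{1/n_F}$ does not depend on $\mathcal{H}$, taking $\inf$ and $\sup$ over $\mathcal{A}_{\text{LLM}}$ commutes with the substitution:
\[
\underline{F}_{R_L(n_F)}(t\mid\text{data})=\inf_{\mathcal{H}\in\mathcal{A}_{\text{LLM}}} F_{p_L}(t^{1/n_F}\mid\text{data},\mathcal{H})=\underline{F}_{p_L}(t^{1/n_F}\mid\text{data}),
\]
and likewise for $\overline{F}_{R_L(n_F)}$. This gives both the stated integral formula and the envelope characterisation.

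The conceptual steps are essentially bookkeeping on top of Theorem~\ref{thm_LLM_posterior}; the genuine difficulty is computational rather than analytical. Specifically, the inner CDF $G$ has no closed form because $p_L$ is a weighted sum of weighted sums of dependent Beta variates, so evaluating $F_{R_L(n_F)}$ at any $(t,\mathcal{H})$ requires nested Monte Carlo (sample $(\mu_i,\nu_i)$ from each domain hyper-posterior, then sample $\theta_{ij}$ and aggregate). The infimum/supremum over the product admissible set $\mathcal{A}_{\text{LLM}}$ then compounds this cost, and care must be taken that the same monotone map $s\mapsto s^{n_F}$ is used consistently so that the envelopes for $R_L(n_F)$ are obtained directly from those of $p_L$ without re-optimising the hyperparameters—this equivalence, though elementary, is the crux that keeps the computational pipeline of Theorem~\ref{thm_LLM_posterior} reusable here.
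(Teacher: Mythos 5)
Your proposal is correct and follows essentially the same route as the paper: the appendix derives the identity $\{x^{n_F}\le t\}=\{x\le t^{1/n_F}\}$ at the subdomain level and states that the same monotone-transformation logic, combined with the mixture/independence representation of Theorem~\ref{thm_LLM_posterior} and Monte Carlo evaluation, extends to the domain and LLM levels. Your additional observation that the $\inf/\sup$ over $\mathcal{A}_{\text{LLM}}$ commutes with the substitution $t\mapsto t^{1/n_F}$ (so the envelopes of $R_L(n_F)$ are the envelopes of $p_L$ at the transformed argument) is a correct and slightly more explicit statement of what the paper leaves implicit.
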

Again, while the closed-form density is not available, the CDF can be computed via numerical integration and Monte Carlo sampling. See~\ref{appendix_nF_reliabilit_appendix} for details.

\rv{
\begin{remark}[Translate vague prior knowledge to hyperparameters]
    \label{rm_pk_narrative}
In HIP-LLM, hyperparameters are not tuning constants but variables used to represent uncertainty about prior knowledge. At the lowest level, $\theta_{ij}$ indicates ``how reliable'' (i.e., the probability that the model succeeds on a randomly drawn task from the subdomain) of $S_{ij}$. Subdomains within the same domain $D_i$ are assumed to be related, and this dependence is captured by assuming that their non-failure probabilities are drawn from a shared Beta distribution governed by two domain-level parameters: $\mu_i$ and $\nu_i$. Here, $\mu_i$ represents the expected reliability of the LLM in domain $D_i$, while $\nu_i$ represents how confident we are in this expectation, or equivalently, how strongly subdomain reliabilities are expected to cluster around $\mu_i$. Since assessors typically do not know the exact values of $\mu_i$ and $\nu_i$, HIP-LLM places distributions on them: $\mu_i \sim \mathrm{Beta}(a_i,b_i)$ and $\nu_i \sim \mathrm{Gamma}(c_i,d_i)$. The hyperparameters $(a_i,b_i,c_i,d_i)$ therefore encode high-level and (likely) vague/imperfect expert beliefs rather than performance data. For example, suppose an assessor believes that the LLM’s reasoning-domain reliability is likely between $0.8$ and $0.9$, but with only moderate confidence corresponding to roughly $30$--$100$ observations (either from previous-use/historical-data or structured expert elicitation experiments based on hypothetical usage scenarios). This belief can be expressed by choosing $a_i \in [8,18]$ and $b_i \in [2,4]$, which yields prior means $\mathbb{E}[\mu_i] \in [0.8,0.9]$, and by choosing\footnote{
\textcolor{black}{The numerical ranges are obtained using the standard interpretations of the Beta and Gamma distributions. 
For the domain-level mean reliability $\mu_i \sim \mathrm{Beta}(a_i,b_i)$, the prior mean is $\mathbb{E}[\mu_i]=a_i/(a_i+b_i)$. 
Choosing $a_i$ and $b_i$ such that this ratio lies between $0.8$ and $0.9$ ensures that the assessor’s belief about the expected reliability is respected (e.g., $8/(8+2)=0.8$ and $18/(18+2)=0.9$). The exact values of $(a_i,b_i)$ are not unique; the interval reflects uncertainty about the precise prior shape. For the confidence parameter $\nu_i \sim \mathrm{Gamma}(c_i,d_i)$ (shape--rate form), the prior mean is $\mathbb{E}[\nu_i]=c_i/d_i$, which can be interpreted as an equivalent number of pseudo-observations supporting the prior belief. 
Setting $d_i=0.1$ and letting $c_i$ vary between $3$ and $10$ yields $\mathbb{E}[\nu_i]\in[30,100]$, matching the assessor’s statement of moderate confidence.
}} $c_i \in [3,10]$ and $d_i = 0.1$, which gives $\mathbb{E}[\nu_i] \in [30,100]$. By allowing these hyperparameters to vary within intervals rather than fixing them to single values, HIP-LLM captures realistic epistemic uncertainty (following the imprecise probability framework \cite{augustin2014introduction,troffaes2007decision} designed to facilitate elicitation of imperfect/vague prior knowledge) and propagates this uncertainty through the hierarchy, yielding posterior reliability bounds rather than overconfident point estimates.
\end{remark}
}


\section{Evaluation}
\label{sec_numerical_example}
To demonstrate and evaluate our HIP-LLM, we empirically investigate five research questions (RQs) in this section. 

\subsection{Research Questions}\label{sec_research_question}

    \textbf{RQ1 (Effectiveness):} How effectively can HIP-LLM assess and compare posterior reliability distributions across different levels of the hierarchy, considering uncertainties propagated from subdomains to domains, and finally to general-purpose LLMs? In this RQ, we aim to demonstrate the use case of our HIP-LLM as a reliability assessment tool. 
    
    \textbf{RQ2 (Sensitivity to hyperparameters):} How sensitive are the posterior estimates of HIP-LLM to the hyperparameters ($a_i, b_i, c_i, d_i$)? These hyperparameters represent the assessors’ (imprecise) prior knowledge, thus understanding their sensitivity to the posteriors may provide insights on how prior knowledge can be elicited.
    
    \textbf{RQ3 (Sensitivity to OPs):} How sensitive are the posterior reliability estimates of HIP-LLM to variations in the OPs that characterize operational usage of LLMs at various levels? The delivered and perceived reliability of an LLM depends on how it will be used, i.e. the OP. We hypothesize that general-purpose LLMs may exhibit lower sensitivity to variations in OPs, whereas LLMs trained for specific (sub-)domains are likely to be more sensitive. We investigate and demonstrate how HIP-LLM can characterize and quantify such OP-dependent reliability variations.
    

    \textbf{RQ4 (Predictability):} How can HIP-LLM predicts future reliability of passing next $n_F$ tasks? While RQ1–RQ3 focus on failure probabilities, which is a special case of reliability, we additionally aim to demonstrate HIP-LLM’s capability to predict the probability of successfully completing $n_F$ future tasks and to quantify how this reliability varies as the reliability requirement $n_F$ changes.


\rv{\textbf{RQ5 (Comparison to baselines):} How does HIP-LLM compare to established and state-of-the-art Bayesian reliability estimators? To compare the accuracy of different models against a ``ground truth'' reliability, we conduct synthetic
simulation experiments where the ``ground truth'' OP and failure probabilities of sub-domains are assumed known.}

\rv{\textbf{RQ6 (Failure definitions):} How can HIP-LLM cope with alternative definitions of task success, such as different pass@k criteria? As per Remark~\ref{rm_non_binary_failure}, the definition of failures of a LLM may vary and subject to uncertainties. While we do not formally study LLM failure definitions, we demonstrate how different failure definitions can be incorporated in HIP-LLM. }

\rv{\textbf{RQ7 (Robustness to memory effects):} How robust are HIP-LLM’s posterior estimates to violations of the i.i.d. assumption caused by memory-induced dependence during LLM evaluation? While HIP-LLM is designed for reset, single-task scenarios, real-world evaluation data may not be strictly generated under this assumption. It is therefore critical to assess whether its posterior estimates remain meaningful when independence is only approximately satisfied or partially violated by memory-induced dependencies.}

\rv{\textbf{RQ8 (Scalability):} How does the computational time and hardware RAM cost of HIP-LLM scale with the number of domains, subdomains, hyperparameter configurations, and Monte Carlo samples? Understanding the scalability of HIP-LLM with these key parameters is essential to determine if it can be applied to large-scale evaluations in practice.}

\subsection{Experimental Setup}

\paragraph{OPs} 
To emphasize the role of OPs, \textit{we ``simulate'' the OPs by assigning probability distributions over these datasets for sampling and by specifying operational weights across (sub-)domains}. Without loss of generality, and consistent with our three-level hierarchy, we define the task-level OP as a uniform distribution and assign operational weights at the subdomain and domain levels proportionally to their dataset sizes. That said, assessors may use alternative distributions as the 3-level OPs when additional information is available, such as those approximated from historical usage data \cite{dong2023reliability,huang2023hierarchical} or user behavior reports \cite{chatterji2025people}. \rv{As demonstrated in RQ3, where we vary the operational weights.}

\rv{
\begin{remark}[Simulated OP from benchmarks vs. real-world OP]
\label{rm_OP_simulation}
Similar to many software reliability modeling studies~\cite{lyu_handbook_1996}, HIP-LLM assumes that the OP is specified and reflects real usage conditions. In the absence of real operational data, and following the same experimental practice as~\cite{luettgau_hibayes_2025,miller_adding_2024}, we leverage existing benchmark datasets to ``simulate'' OPs for fair comparison with baselines. Specifically, benchmark datasets are treated as sampling frames, and proportional dataset sizes are adopted as a proxy OP. This choice neither implies that OP acquisition is solved nor suggests that such proxies are suitable for real deployments; rather, it serves to demonstrate how HIP-LLM integrates OPs when they are provided and to enable comparison with existing benchmark-based methods. Estimating realistic and dynamically evolving OPs from operational data is an established problem in software reliability and is typically studied separately from reliability modeling~\cite{smidts2014software,musa1993operational}. While we believe existing OP estimation techniques may be applicable to LLMs, deriving OPs from real-world LLM usage data introduces new challenges and warrants dedicated future investigation. 
\end{remark}
}

\paragraph{Data} Specifically, we evaluate our hierarchical framework by simulating LLM operational data from four widely used benchmarks, structured into two domains with two subdomains each (same as HiBayEs \cite{luettgau_hibayes_2025}), instantiated by the following datasets:
\begin{itemize}
    \item \textbf{Domain 1 (Coding)}
        \begin{itemize}
             \item \textbf{$\text{Subdomain}_{11}$}: basic Python programming tasks with unit-test based evaluation (MBPP \cite{yu_humaneval_2024}).
             \item \textbf{$\text{Subdomain}_{12}$}: data-science oriented Python problems involving libraries such as \texttt{pandas} and \texttt{numpy} (DS-1000 \cite{lai_ds_2023}).
        \end{itemize}
    \item \textbf{Domain 2 (Reasoning)}:
        \begin{itemize}
             \item \textbf{$\text{Subdomain}_{21}$}: reading comprehension problems where the LLM must answer yes/no questions given short passages (BoolQ \cite{clark_boolq_2019}).
            \item \textbf{$\text{Subdomain}_{22}$}: high school level reading comprehension problems with multiple-choice answers (RACE-H \cite{lai_race_2017}).
        \end{itemize}
\end{itemize}
All experiments were conducted using publicly available APIs of GPT-4o, GPT-4o-mini (from OpenAI), Claude Sonnet 4.5, and Claude Haiku 3.5 
(from Anthropic)\footnote{For academic research purposes only which is permitted under both vendors' terms of service for research publications.}. Table~\ref{tab_eval_data_results} reports sample accuracies per subdomain, i.e., the proportions of correct responses under \texttt{Pass@1}\footnote{Produce 1 solution per task. Score as correct if that solution passes verification.}. To operationalize the i.i.d. Bernoulli trail assumption on tasks, all LLM evaluations were conducted with cleared context between tasks. Each task in MBPP, DS-1000, BoolQ, and RACE-H was processed as an independent API call with no conversational history, ensuring that outcomes are not influenced by previous interactions.

\begin{table}[H]
\caption{Evaluation results using Pass@1. Entries are per-subdomain accuracies ($C_{ij}/N_{ij}$).
Rightmost column shows the row mean across models.}
\centering
\begin{tabular}{llrrrrr}
\toprule
\textbf{Dom} & \textbf{Subdom \scriptsize (Data.)} &
\makecell{4o-mini} &
\makecell{4o} &
\makecell{sonnet-4.5} &
\makecell{haiku-3.5} &
\makecell{Mean}\\
\midrule
$\text{Dom}_1$ & $\text{Subdom}_{11}$ \scriptsize (MBPP)    & 0.440 & 0.471 & 0.450 & 0.447 & 0.452 \\
               & $\text{Subdom}_{12}$ \scriptsize (DS-1000) & 0.490 & 0.420 & 0.493 & 0.483 & 0.472 \\
               & Mean          & 0.465 & 0.446 & 0.472 & 0.465 & -- \\
\midrule
$\text{Dom}_2$ & $\text{Subdom}_{21}$ \scriptsize (BoolQ)   & 0.890 & 0.909 & 0.900 & 0.883 & 0.896 \\
               & $\text{Subdom}_{22}$ \scriptsize (RACE-H)  & 0.820 & 0.552 & 0.840 & 0.859 & 0.768 \\
               & Mean         & 0.855 & 0.731 & 0.87 & 0.871 & -- \\
\midrule
\multicolumn{2}{l}{\textbf{LLM   Mean}} &
0.661 & 0.585 & 0.671 & 0.668 & -- \\
\bottomrule
\end{tabular}
\label{tab_eval_data_results}
\end{table}

From the benchmark accuracy scores, all LLMs appear to perform better in the Reasoning domain than in the Coding domain. Their performances across most subdomains are similar, except for RACE-H, where model OpenAI-4o shows a clear weakness. The small variance between accuracy values for corresponding subdomains across models suggests that the performance differences are more domain-driven than model-driven.

\paragraph{Hyperparameters}
All the hyperparameters we are using for generating the set of figures in the next subsection is shown in \ref{sec_numerical_setting}. 

\paragraph{Interpretation of Posterior CDF Envelopes}
\rv{CDF envelopes are used to jointly represent data uncertainty and epistemic uncertainty arising from imprecise prior knowledge. Since closed-form densities are generally unavailable for weighted sums of dependent reliability parameters, posterior uncertainty is characterized numerically through CDF envelopes, which support probabilistic queries and risk-aware interpretation across all levels of the hierarchy. Unless stated otherwise, all figures in this section follow this convention.}

\rv{Throughout the experimental section, where appropriate, we visualize posterior uncertainty using CDF envelopes. These CDFs are constructed from observed benchmark evaluation data, consisting of $C_{ij}$ correct responses out of $N_{ij}$ tasks for each subdomain $S_{ij}$. Depending on the level of aggregation shown, a CDF represents the posterior distribution of a non-failure probability at the subdomain level ($\theta_{ij}$), the domain level ($p_i$), or the overall LLM level ($p_L$). Domain- and LLM-level CDFs are obtained by aggregating lower-level reliability parameters using the specified OP weights.}

\rv{Monte Carlo sampling is used because the hierarchical aggregation of dependent subdomain reliabilities and imprecise priors leads to posterior distributions that are analytically intractable but can be evaluated efficiently and accurately via numerical sampling.}

\paragraph{Baseline Experimental Configuration}
\rv{Unless otherwise stated, all empirical experiments are conducted using a fixed baseline configuration with
$m=2$ domains, $n=2$ subdomains per domain, $K=160$ hyperparameter configurations per domain,
$S=3000$ Monte Carlo samples per configuration, a $(\mu,\nu)$ integration grid of size $G=2000$,
a fixed CDF evaluation grid of size $T=201$, and a capped number of LLM-level configuration pairings
$K_{\text{total}} \le 512$.
Domain-level parameters $\mu_i$ and $\nu_i$ are treated as latent variables and numerically integrated over a grid, while the hyper-hyperparameters $(a_i,b_i,c_i,d_i)$ are sampled from fixed intervals,
with $a_i,b_i \in [1,12]$ and $c_i,d_i \in [1,25]$, to represent imprecise prior knowledge.
}

\subsection{Results and Analysis}
\label{sec_experiments}
This section reports and analyses the empirical results to the RQs.

\subsubsection{ \textbf{RQ1}(Effectiveness)}
\label{sec_RQ1}
Fig.~\ref{fig_subdomain_CDF_real_data_3Model} shows posterior CDF envelopes for the four subdomains (MBPP, DS--1000, BoolQ, RACE--H) across the four LLMs where a right–shifted CDF indicates higher reliability and a tighter band indicates greater certainty. 

\paragraph{Coding tasks} On MBPP (top left), the envelopes for GPT--4o--mini and Haiku~3.5 are nearly indistinguishable and lie to the left of GPT--4o, while Sonnet~4.5 is right–most but still overlapping with--4o. 
On DS--1000 (top right), GPT--4o--mini lies clearly to the left, while Haiku~3.5 overlaps almost entirely with GPT--4o---their envelopes coincide so closely that Haiku 3.5 is barely distinguishable. Sonnet~4.5 remains right–most, with only partial overlap with GPT--4o and Haiku 3.5.

\paragraph{Reasoning tasks}
On BoolQ (bottom left), GPT--4o--mini is almost identical to Haiku~3.5 on the left, while Sonnet~4.5 and GPT--4o nearly coincide on the right.
Hence the envelopes separate into two close pairs:
$\text{mini} \approx \text{Haiku 3.5} < \text{Sonnet 4.5} \approx 4\text{o}$.
On RACE--H (bottom right), Haiku~3.5 lies clearly left--most,
GPT--4o--mini overlaps partly with Haiku but extends rightward,
followed by GPT--4o, and finally Sonnet~4.5 on the far right.

 \begin{figure}[H]
    \centering
    \begin{subfigure}[t]{0.45\textwidth}
        \centering
        \includegraphics[width=\linewidth]{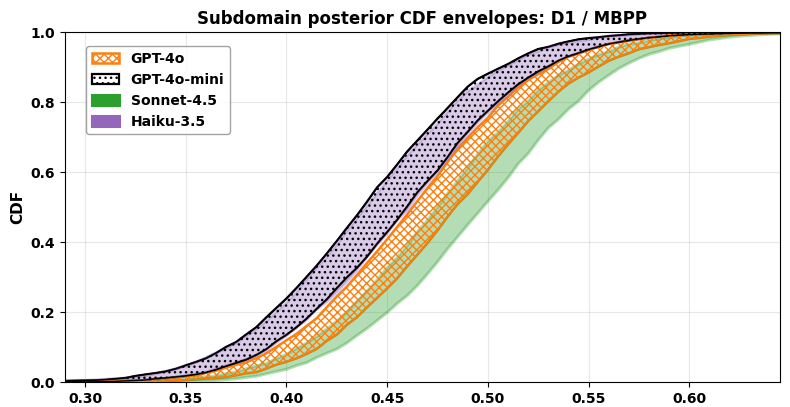}
        \caption{\rv{Posterior CDF envelopes of the non-failure probability for $\text{Subdom}_{11}$ (MBPP dataset in $\text{Dom}_1$) across four models.}}
        \label{fig_subdomain11_CDF}
    \end{subfigure}
    \hfill
    \begin{subfigure}[t]{0.45\textwidth}
        \centering
        \includegraphics[width=\linewidth]{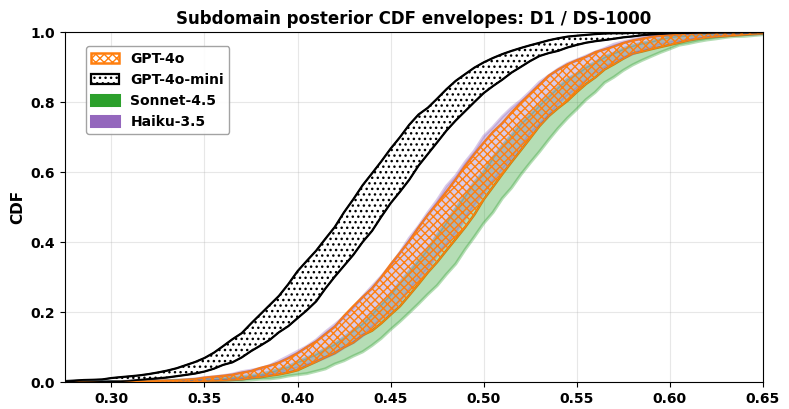}
        \caption{\rv{Posterior CDF envelopes of the non-failure probability for $\text{Subdom}_{12}$ (DS-1000 dataset in $\text{Dom}_1$) across four models.}}
        \label{fig_subdomain12_CDF}
    \end{subfigure}
     \hfill
    \begin{subfigure}[t]{0.45\textwidth}
        \centering
        \includegraphics[width=\linewidth]{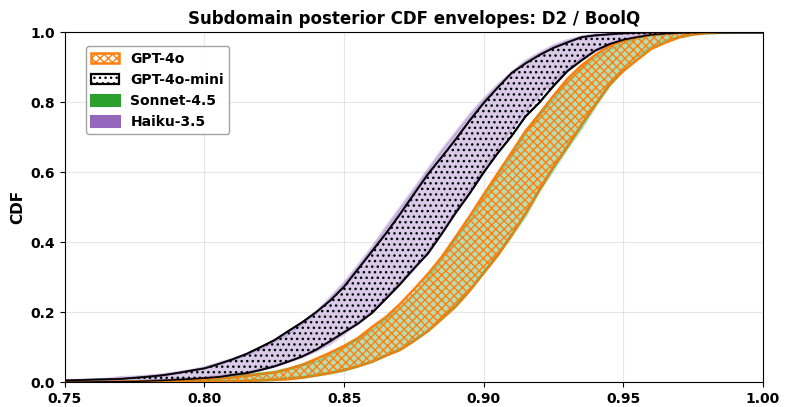}
        \caption{\rv{Posterior CDF envelopes of the non-failure probability for $\text{Subdom}_{21}$ (BlooQ dataset in $\text{Dom}_2$) across four models.}}
        \label{fig_subdomain21_CDF}
    \end{subfigure}
     \hfill
    \begin{subfigure}[t]{0.45\textwidth}
        \centering
        \includegraphics[width=\linewidth]{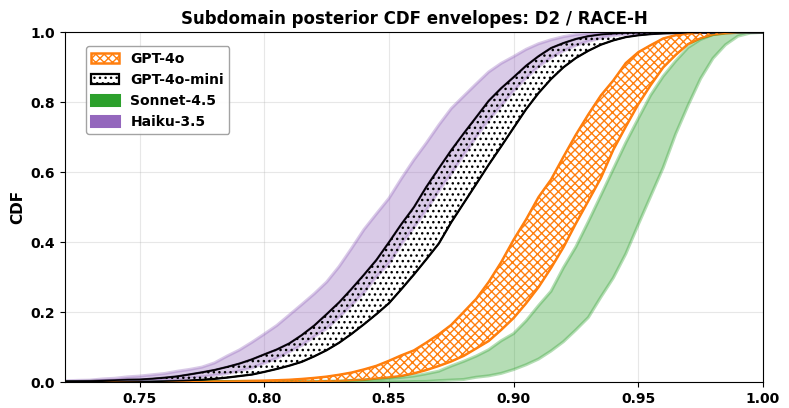}
        \caption{\rv{Posterior CDF envelopes of the non-failure probability for $\text{Subdom}_{22}$ (RACE-H dataset in $\text{Dom}_2$) across four models.}}
        \label{fig_subdomain22_CDF} 
    \end{subfigure}
    \caption{\rv{Posterior CDF envelopes of non-failure probability at the subdomain level. Rows correspond to domains (D$_1$ (coding): MBPP, DS-1000; D$_2$ (reasoning): BoolQ, RACE-H). Subdomains within the same domain are statistically dependent through shared domain-level hyperparameters $(\mu_i,\nu_i)$, where $\mu_i$ represents the average domain reliability and $\nu_i$ controls the strength of coupling (partial pooling) among subdomains. As a result, observations from one subdomain inform the inferred reliability of the others. Domains are assumed statistically independent. A question-
oriented interpretation of figure: After observing the evaluation data, how do the four models differ in terms
of their subdomain-level non-failure probability?
}
}
    \label{fig_subdomain_CDF_real_data_3Model}
\end{figure}

Fig.~\ref{fig_domain_CDF_real_data_3Models} aggregates subdomains within each domain via the operational weights ($\Omega_{ij}$) and reports the posterior CDFs of the domain reliabilities
($p_i$). Note, for simplicity, we just assign the operational weights ($\Omega_{ij}$) proportionally according to the dataset sizes of the two sub-domains in each domain $i$. Similarly to sub-domain results, we may observe and compare domain-level non-failure probabilities $p_i$. Again, comparing to accuracy scores (point estimates) and HiBayEs (single posterior distributions), our HIP-LLM yields distribution envelopes that considers more types of uncertainties. 



 \begin{figure}[H]
    \centering
    \begin{subfigure}[t]{0.46\textwidth}
        \centering
        \includegraphics[width=\linewidth]{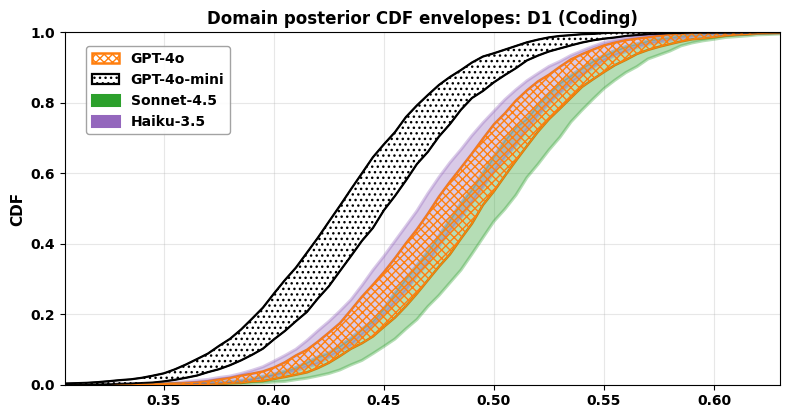}
        \caption{\rv{Posterior CDF envelopes of non-failure probability  $p_1 = \sum_j \Omega_{1j}\theta_{1j}$ for $\text{Dom}_1$ (Coding) across four LLMs,
with subdomain-level operational weights
$\Omega_{1\cdot} = (0.204,\,0.796)$.}}
        \label{fig_domain1_CDF}
    \end{subfigure}
    \hfill
    \begin{subfigure}[t]{0.45\textwidth}
        \centering
        \includegraphics[width=\linewidth]{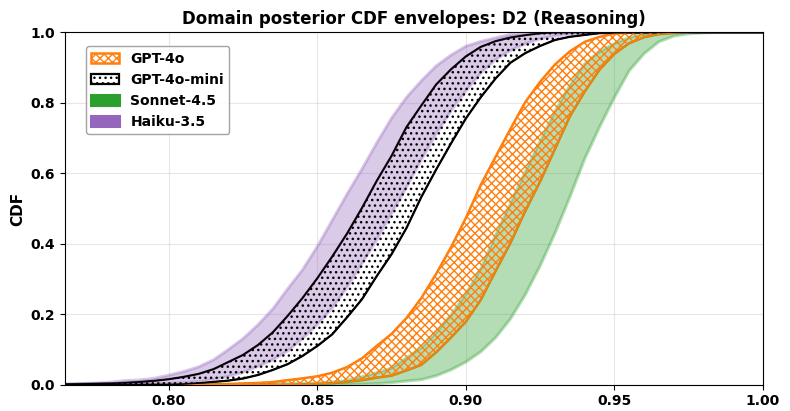}
        \caption{\rv{Posterior CDF envelopes of non-failure probability  $p_2=\sum_j\Omega_{2j}\theta_{2j}$ for $\text{Dom}_2$ (reasoning) across four models, with subdomain-level operational weights (0.483, 0.517).}}
        \label{fig_domain2_CDF}
    \end{subfigure}
    \caption{\rv{Posterior CDF envelopes of non-failure probability of domain-level ($p_i=\sum_j\Omega_{ij}\theta_{ij}$). A question-oriented interpretation of figure: After observing the evaluation data, how do the four models differ in terms of their domain level non-failure probability?}}
    \label{fig_domain_CDF_real_data_3Models}
\end{figure}

Fig.~\ref{fig_LLM_CDF_real_data_3Models} aggregates both domains into the overall LLM reliability $p_L$ using the cross–domain operational weights $W=[0.149,\,0.851]$ for
[Coding, Reasoning]. As before, the weights are simply assigned proportionally according to the dataset sizes of domains. The figure shows, 4o-mini and Haiku bands overlap almost completely, indicating near–equivalent reliability, while GPT--4o and Sonnet~4.5 partially overlap, reflecting moderate but consistent uncertainty between them. Overall, Sonnet~4.5 remains most reliable (which is consistent with Table~\ref{tab_eval_data_results}).

  \begin{figure}[H]
    \centering
    \includegraphics[width=0.56\linewidth]{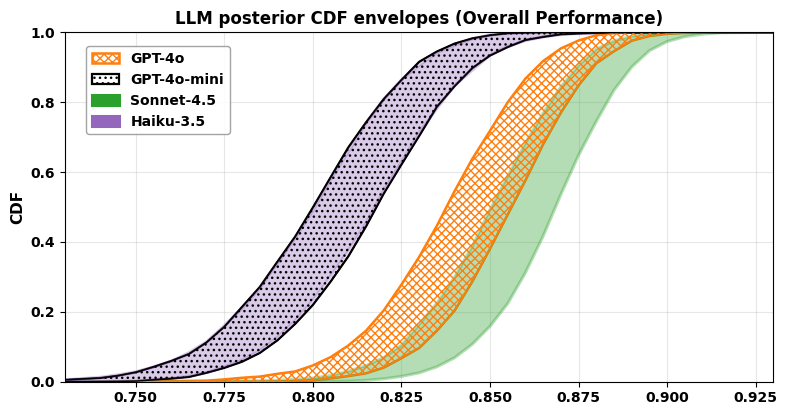}
    \caption{\rv{Posterior CDF envelope of non-failure probability of overall LLM level ($p_L=\sum_i W_i p_i$) with domain weights $W=(0.149,\,0.851)$ across four models. A question-oriented interpretation of figure: \textit{Engineering question:}
After observing the evaluation data, how do the four LLMs differ in terms of their LLM-level non-failure probability?
}}
    \label{fig_LLM_CDF_real_data_3Models}
    \end{figure}

\subsubsection{\textbf{RQ2} (Sensitivity to hyperparameters)}
\label{sec_RQ2}

         \begin{figure}[H]
    \centering
    \begin{subfigure}[t]{0.45\textwidth}
        \centering
        \includegraphics[width=\linewidth]{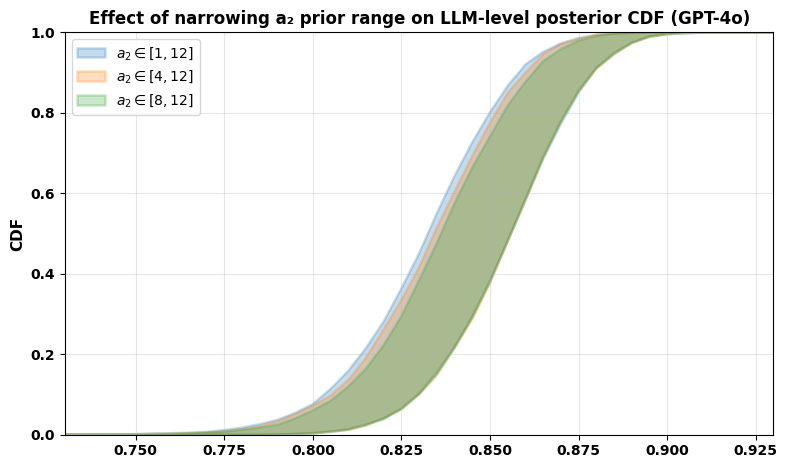}
        \caption{\rv{Sensitivity of posterior CDF envelops of non-failure probability ($p_L=\sum_{i=1}^m W_i p_i$) to $a$.}}
        \label{fig_cdf_ai_effect}
    \end{subfigure}
    \hfill
    \begin{subfigure}[t]{0.45\textwidth}
        \centering
        \includegraphics[width=\linewidth]{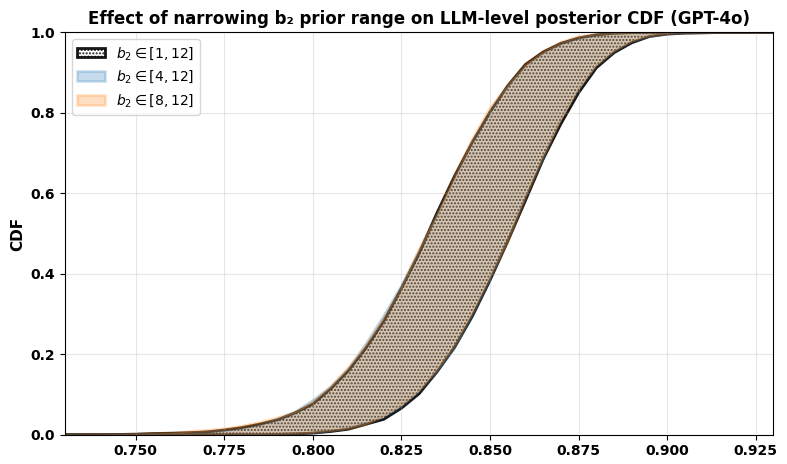}
        \caption{\rv{Sensitivity of posterior CDF envelops of non-failure probability of LLM ($p_L=\sum_{i=1}^m W_i p_i$) to $b$.}}
        \label{fig_cdf_bi_effect}
    \end{subfigure}
     \hfill
    \begin{subfigure}[t]{0.45\textwidth}
        \centering
        \includegraphics[width=\linewidth]{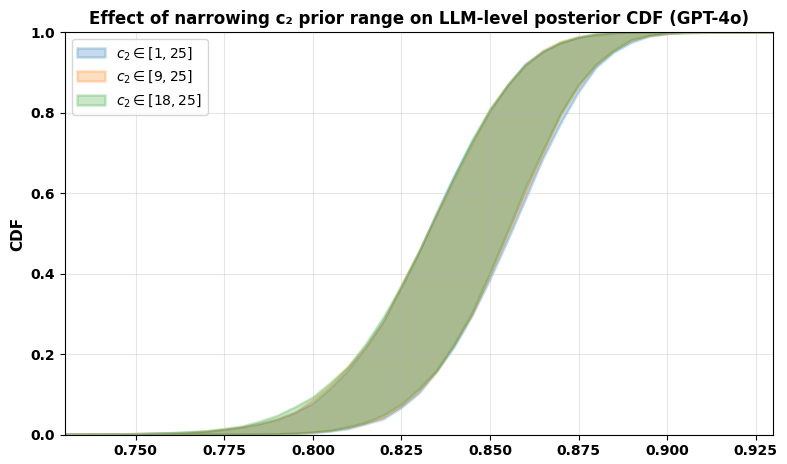}
        \caption{\rv{Sensitivity of posterior CDF envelops of non-failure probability of LLM ($p_L=\sum_{i=1}^m W_i p_i$) to $c$.}}
        \label{fig_cdf_ci_effect}
    \end{subfigure}
     \hfill
    \begin{subfigure}[t]{0.45\textwidth}
        \centering
        \includegraphics[width=\linewidth]{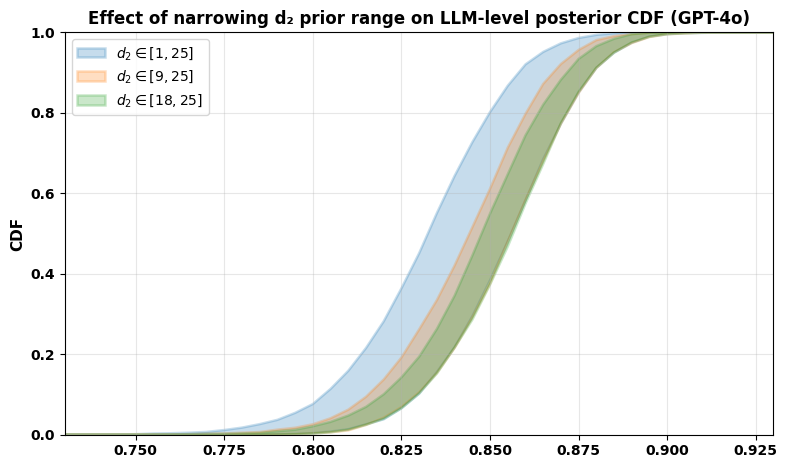}
        \caption{\rv{Sensitivity of posterior CDF envelops of non-failure probability of LLM ($p_L=\sum_{i=1}^m W_i p_i$) to $d$.}}
        \label{fig_cdf_di_effect} 
    \end{subfigure}
    \caption{\rv{Effect of variations in the hyperparameter values ($a$, $b$, $c$, and $d$ ) on the posterior CDF envelops of non-failure probability of LLM ($p_L=\sum_{i=1}^m W_i p_i$). A question-oriented interpretation of the figures: ``What are the LLM-level posterior CDF envelopes of the non-failure probability as domain-level hyperparameters vary within their admissible ranges, after observing the testing data?''}}
    \label{fig_cdf_HyperHyper_effect}
\end{figure}    


\rv{Corresponding to Remark~\ref{rm_pk_narrative}, the sensitivity analysis in Fig.~\ref{fig_cdf_HyperHyper_effect} highlights that different hyperparameters encode qualitatively different aspects of expert elicitation and therefore require different levels of care when specified. Variations in $a_i$ and $b_i$ reflect differences in expert expectations about the average domain-level reliability and mainly affect the location of the posterior reliability envelopes. 
As such, specifying these parameters requires careful consideration of how optimistic or pessimistic prior beliefs are justified by available evidence. In contrast, the hyperparameters $c_i$ and $d_i$ encode expert beliefs about confidence and pooling strength across subdomains, and directly control the width of the posterior envelopes.}


\rv{Fig.~\ref{fig_cdf_ai_effect} illustrates the sensitivity of the LLM-level posterior reliability to the hyperparameter $a_i$ in the prior $\mu_i \sim \mathrm{Beta}(a_i,b_i)$. For example, keeping $b_i=4$, an expert belief corresponding to $a_i=8$ yields a prior mean $\mu_i=0.67$, while a more optimistic belief expressed by $a_i=16$ increases the mean to $\mu_i=0.80$. When propagated through the hierarchical model and updated with the same evaluation data, these beliefs lead to a rightward shift of the posterior CDF envelope, indicating a higher central estimate of $p_L$, while the envelope width remains similar due to the constraining effect of the data.
Fig.~\ref{fig_cdf_bi_effect} shows the complementary effect of varying $b_i$. With $a_2=12$, allowing $b_2 \in [1,12]$ corresponds to prior means $\mu_2 \in [0.50,0.92]$, whereas narrowing the range to $b_2 \in [4,12]$ and $[8,12]$ restricts $\mu_2$ to $[0.50,0.75]$ and $[0.50,0.60]$, respectively. 
}


\rv{Fig.~\ref{fig_cdf_ci_effect} and Fig.~\ref{fig_cdf_di_effect} illustrate the sensitivity of the LLM-level posterior reliability distributions to the hyperparameters $c_i$ and $d_i$. 
Allowing wide ranges such as $c_2,d_2 \in [1,25]$ corresponds to admitting both weak and strong confidence scenarios, while progressively narrowing these ranges (e.g., $c_2 \in [9,25]$ or $[18,25]$, and similarly for $d_2$) excludes low- or high-confidence assumptions. When propagated through the hierarchical model and updated with the same evaluation data, these changes leave the central location of the posterior CDF largely unchanged but systematically tighten or widen the envelope, indicating that $c_i$ and $d_i$ primarily control epistemic uncertainty rather than the expected reliability level. The stronger effect of $d_i$ occurs because it directly reduces the pooling strength $\nu_i$, while $c_i$ mainly affects how concentrated this belief is. As a result, changing $d_i$ more strongly weakens pooling across subdomains and leads to larger changes in the width of the posterior reliability envelopes.
}

\subsubsection{ \textbf{RQ3} (Sensitivity to OPs)} 
\label{sec_RQ3}

 \begin{figure}[H]
    \centering
    \begin{subfigure}[t]{0.49\textwidth}
        \centering
        \includegraphics[width=\linewidth]{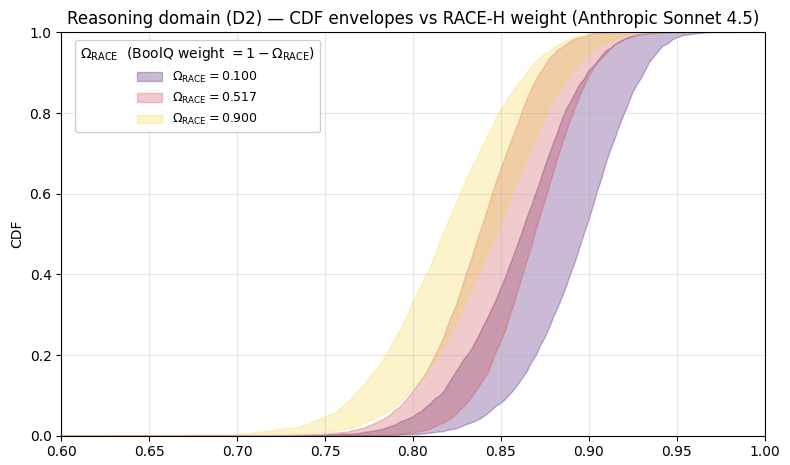}
        \caption{\rv{Posterior CDF envelopes of non-failure probability for the Reasoning domain ($\mathrm{D}_2$) under alternative operational weights on RACE--H, for Anthropic Sonnet~4.5.}}
        \label{fig_OP_effect_Sonnet}
    \end{subfigure}
    \hfill
    \begin{subfigure}[t]{0.49\textwidth}
        \centering
        \includegraphics[width=\linewidth]{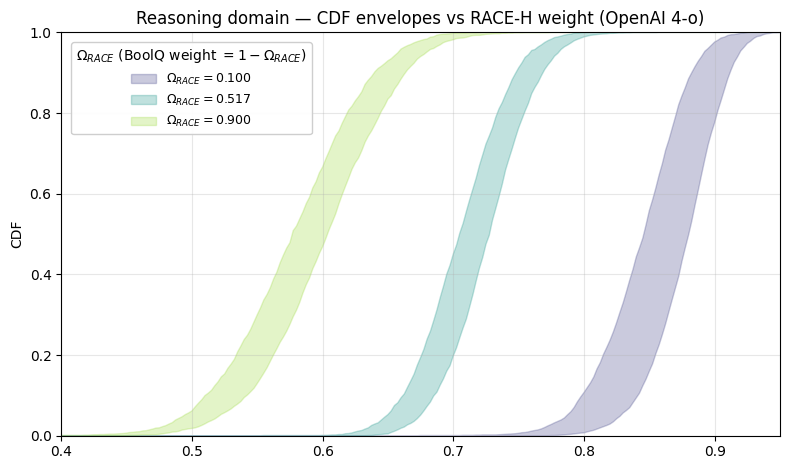}
        \caption{\rv{Posterior CDF envelopes of non-failure probability for the Reasoning domain ($\mathrm{D}_2$) under alternative operational weights on RACE--H, for OpenAI 4-o.}}
        \label{fig_OP_effect_4-o}
    \end{subfigure}
    \caption{\rv{Posterior CDF envelopes of non-failure probability for the Reasoning domain ($\mathrm{D}_2$) under alternative operational weights on RACE--H,
$\Omega_{\mathrm{RACE}} \in \{0.10,\,0.517,\,0.90\}$, with
$\Omega_{\mathrm{BoolQ}} = 1 - \Omega_{\mathrm{RACE}}$.
\textbf{(a)} Sonnet~4.5; \textbf{(b)} OpenAI~4-o. A question-oriented interpretation of figure: How does the inferred domain-level non-failure probability change as the operational profile shifts emphasis between RACE–H and BoolQ?}
}
    \label{fig_operational_profile_posterior}
\end{figure}

Fig.~\ref{fig_operational_profile_posterior} shows posterior CDF envelopes for the
Reasoning-domain reliability $p_2$ under three OPs,
$\Omega_{\mathrm{RACE}}\in\{0.10,\,0.517,\,0.90\}$ with
$\Omega_{\mathrm{BoolQ}}=1-\Omega_{\mathrm{RACE}}$.
Fig.~\ref{fig_OP_effect_Sonnet} reports Anthropic Sonnet~4.5;
Fig.~\ref{fig_OP_effect_4-o} reports OpenAI 4-o.

\textit{(a) Sonnet~4.5.} As $\Omega_{\mathrm{RACE}}$ increases the CDF shifts slightly
left (lower delivered $p_2$), with strong overlap among bands. This weak sensitivity
to the OP is consistent with the subdomain accuracies being both high
and similar ($\theta_{\text{BoolQ}}\approx0.90$ vs.\ $\theta_{\text{RACE-H}}\approx0.84$):
the mixture $p_2=\Omega_{\mathrm{BoolQ}}\theta_{\text{BoolQ}}+\Omega_{\mathrm{RACE}}\theta_{\text{RACE-H}}$
changes only modestly as weight moves from BoolQ to RACE--H. The mild widening of the
envelope at higher $\Omega_{\mathrm{RACE}}$ reflects slightly larger posterior
uncertainty on RACE--H relative to BoolQ.

\textit{(b) OpenAI 4-o.} In contrast, the envelopes are well separated and move
substantially left as $\Omega_{\mathrm{RACE}}$ increases. This strong dependence on
the operational weighting arises because 4-o’s subdomain accuracies differ widely
($\theta_{\text{BoolQ}}\approx0.91$ vs.\ $\theta_{\text{RACE-H}}\approx0.55$), so the
mixture $p_2$ is dominated by whichever subdomain receives more weight. 

Overal, the posterior CDF envelopes for OpenAI 4-o exhibit a strong
dependence on the BoolQ/RACE--H weighting, whereas Sonnet~4.5 is comparatively
invariant to $\Omega_{\mathrm{RACE}}$. This arises because GPT-4-o’s subdomain accuracies differ widely ($0.91$ vs $0.55$), producing a hierarchical posterior with pronounced weight sensitivity. These results suggest that GPT-4-o’s reasoning performance is more uneven across benchmark types, while Sonnet-4.5 demonstrates domain-level robustness.

\rv{From a practical point of view, these results show that changing the operational weights can meaningfully change the reliability experienced by users, especially for OpenAI~4-o. The operational weights represent how often different types of tasks occur in real use, so changing them corresponds to a change in user behavior or application context. For 4-o, different task mixes lead to clearly different reliability levels, meaning that the same model can appear reliable in one application but much less so in another. In contrast, Sonnet~4.5 shows relatively stable reliability across different task mixes, indicating more uniform performance. This highlights that understanding the expected task distribution is important when deploying LLMs, particularly for models whose performance varies strongly across subdomains.
}

\subsubsection{\textbf{RQ4} (Predictability)}
\label{sec_RQ4}
Fig.~\ref{fig_reliability_vs_nF} presents the expected LLM-level reliability $\mathbb{E}[R_L(n_F)]$ as a function of the operational horizon $n_F$, where $n_F$ denotes the number of consecutive operations in the future. The horizontal axis displays $n_F$ on a logarithmic scale, while the vertical axis shows the expected reliability---the posterior mean probability that the LLM successfully completes all next $n_F$ consecutive tasks without failure. 

\begin{figure}[H]
    \centering
    \begin{subfigure}[t]{0.45\textwidth}
        \centering
        \includegraphics[width=\linewidth]{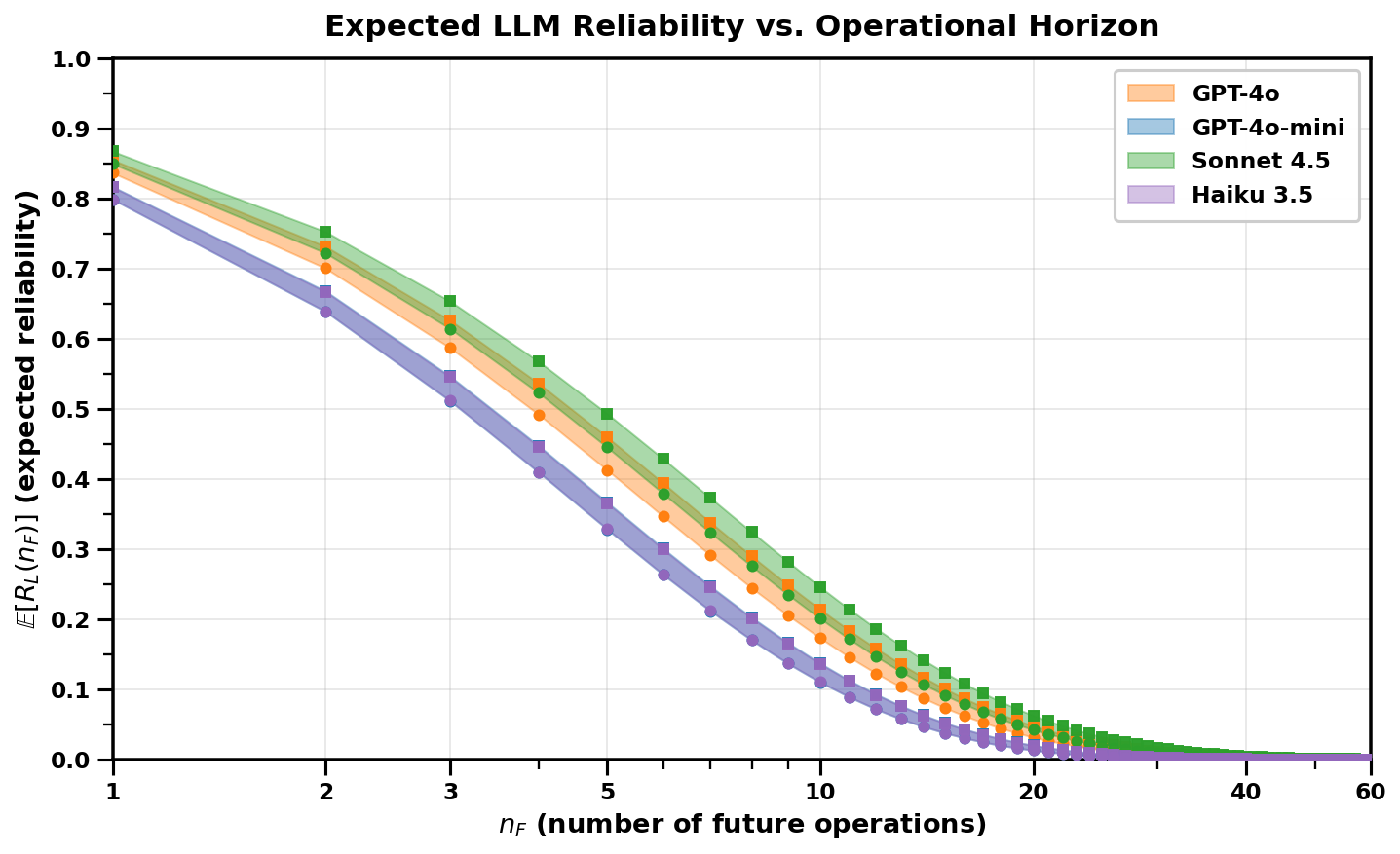}
        \caption{ 
        \rv{Imprecise posterior envelopes of the expected LLM reliability ($\mathbb{E}[R_L(n_F)]$) versus the required number of failure-free future tasks $n_F$ for reliability claims.  Imprecise posterior envelopes show decay in reliability over consecutive tasks across four models.}
}
        \label{fig_reliability_vs_nF}
    \end{subfigure}
    \hfill
    \begin{subfigure}[t]{0.45\textwidth}
        \centering
        \includegraphics[width=\linewidth]{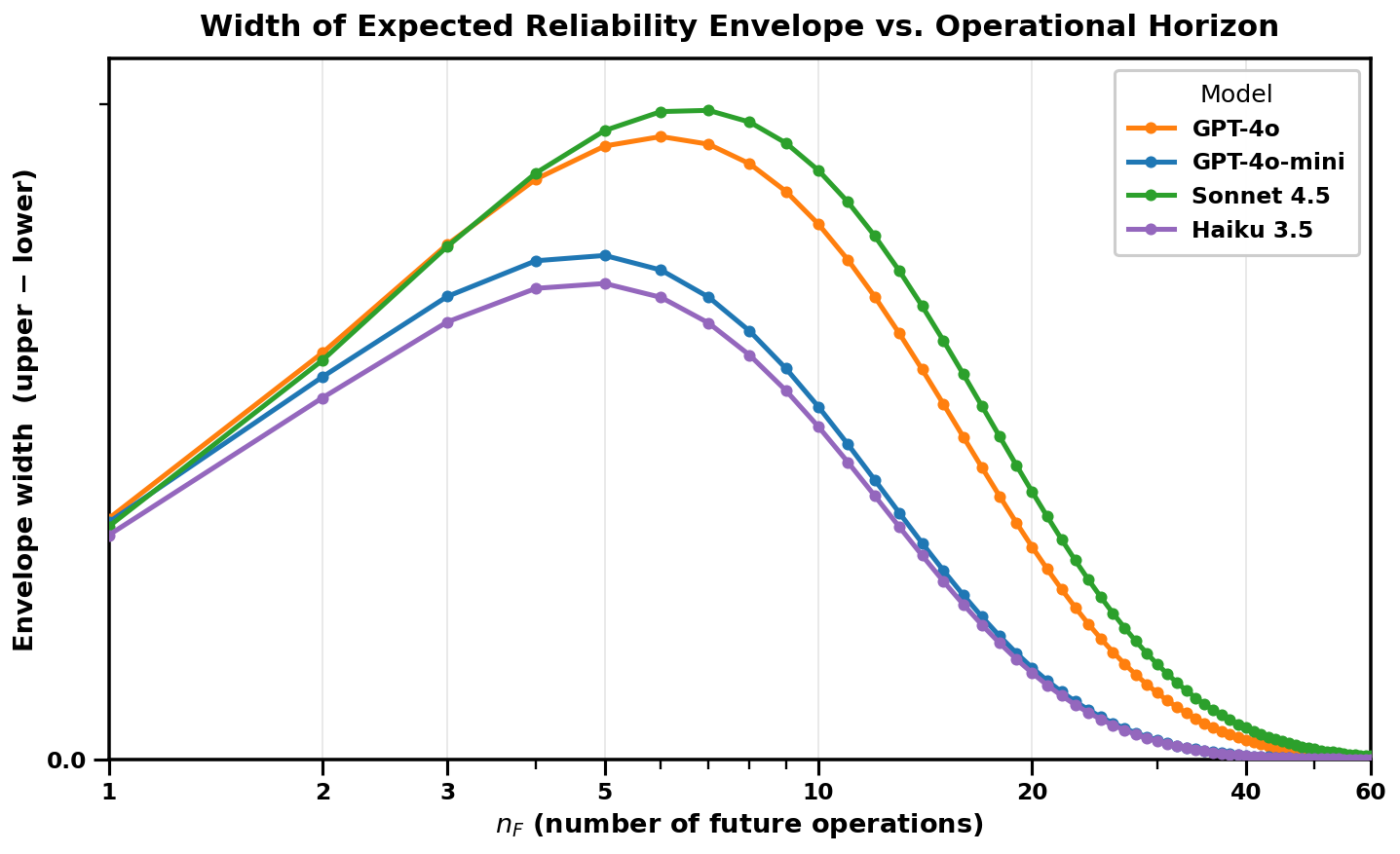}
        \caption{
        \rv{Envelope widths (upper–lower) of expected reliability $\mathbb{E}[R_L(n_F)]$. Each curve represents the spread (upper–lower bound) of expected reliability as the number of future operations increases. The bell-shaped profile indicates that uncertainty grows with operational horizon before decaying as overall reliability converges toward zero.
}
        }
        \label{fig_Width_Expected_reliability}
    \end{subfigure}
    \caption{\rv{Expected reliability envelopes and their width as functions of number of failure-free future tasks required by assessors (i.e., $\mathbb{E}[R_L(n_F)]$ and $[\text{upper bound of envelop} - \text{lower bound of envelop}]$).  Question-oriented interpretation of the figure: ``what is the posterior  expected reliability of passing next $n_F$ future tasks, after seeing testing data?''}
}
    \label{fig_reliability}
\end{figure}


At $n_F=1$ all models exhibit high expected reliability,
$\mathbb{E}[R_L(1)] \approx 0.80\text{--}0.87$, i.e., a high
probability of producing a correct answer on the very next task
(the non‐failure probability).
As the horizon length $n_F$ increases, expected reliability drops
sharply for every model. 
Across all horizons the ordering is stable: Sonnet-4.5 achieves the highest reliability, GPT-4o follows closely,
while Haiku-3.5 and GPT-4o-mini form the lower
pair (with GPT-4o-mini typically the lowest). However, as shown in As Fig.~\ref{fig_Width_Expected_reliability}, the uncertainty (measured by the interval width) of Sonnet-4.5's reliability prediction is also the highest.

As Fig.~\ref{fig_Width_Expected_reliability} shows, at $n_F=1$ the envelopes are quite tight, indicating low uncertainty. The bands widen modestly for small–to–medium horizons (i.e., when $N_F$ increases from 2 to $\sim6$), as a result of compounded prediction uncertainties. As $n_F$ increases, all curves converge toward zero, indicating reduced uncertainty in the reliability prediction. Intuitively, this suggests greater confidence that the LLMs are unlikely to pass all $n_F$ future tasks.

\rv{From a user perspective, Fig.~\ref{fig_reliability} illustrates how reliability claims vary with the required number of future failure-free tasks $n_F$: as $n_F$ increases, the claim becomes more stringent, leading to a corresponding decrease in confidence. 
This provides practical guidance for selecting models based on intended usage. For example, in healthcare decision support, an assessor may require a very high reliability of failure-free performance for a small number of critical diagnostic queries, favoring models with stronger short-horizon reliability guarantees (i.e., a higher but more steeply declining curve). In legal document analysis, a compliance team may comprise reliability a bit over a longer sequence of tasks (i.e., a lower but flatter curve), such as cross-checking clauses or summarizing precedents. In this way, Fig.~\ref{fig_reliability} illustrates how model reliability scales with the required number of future failure-free tasks and supports informed deployment decisions across different application scenarios.
}

{\color{black}
\subsubsection{\textbf{RQ5} (Comparison to baselines)}
\label{sec_RQ5}
We evaluate the proposed \textsc{HIP-LLM} method against three Bayesian baselines
using controlled synthetic experiments in which the \emph{ground-truth system
reliability} (GT) is known. The system consists of two domains, each comprising two
subdomains. Each subdomain is characterized by a reliability
$\theta_{ij}$ and an OP weight $\text{OP}_{ij}$. The overall
system reliability is defined as the OP-weighted average of subdomain reliabilities\footnote{In the paper, the OP is defined hierarchically, with $W_i$
representing the importance of each domain and $\Omega_{ij}$ representing the
relative importance of subdomain $j$ within domain $i$, so that the system
reliability is given by $p_L = \sum_i W_i \sum_j \Omega_{ij}\,\theta_{ij}$. In the
experiments, this structure is written in an equivalent but simpler form by
specifying directly the combined subdomain weights
$\mathrm{OP}_{ij} = W_i\,\Omega_{ij}$. The system reliability is then computed as
$p_L = \sum_{i,j} \mathrm{OP}_{ij}\,\theta_{ij}$, which is algebraically identical
to the hierarchical expression. Under this representation, domain weights are
not fixed or ignored: they are implicitly given by $W_i = \sum_j \mathrm{OP}_{ij}$,
and the within-domain weights are recovered as
$\Omega_{ij} = \mathrm{OP}_{ij}/W_i$. This choice simplifies the experimental
setup by defining the OP at the subdomain level, while remaining
fully consistent with the hierarchical formulation used in the paper.
},
\begin{equation}
p_L = \sum_{i,j} \text{OP}_{ij}\,\theta_{ij}.
\end{equation}

In Bayesian inference, the posterior distribution reflects a combination of prior beliefs and observed data, with their relative influence governed by the amount of available evidence. As the sample size increases, the posterior becomes increasingly dominated by the data rather than the prior\footnote{In the limiting cases, when no data are observed the posterior coincides with the prior, whereas with an infinite amount of data the posterior converges to ``frequentist'' estimates driven solely by the data.}. Thus, to illustrate HIP-LLM’s ability to incorporate informative prior and to examine the influence of priors on posterior estimates, we consider two distinct sample-size regimes:
\begin{itemize}
    \item \textbf{Small-$N$:}
    $[100,\,500,\,1000,\,300]$ samples per subdomain, where priors may
    have more influence on the posterior.
    \item \textbf{Large-$N$:}
    $[1000,\,5000,\,10000,\,3000]$ samples per subdomain, where data are expected to dominate more than the priors.
\end{itemize}
For each regime, synthetic binomial observations are generated from the same ground-truth subdomain reliabilities. Inference is performed under three OP scenarios:
\begin{itemize}
    \item \textbf{Dataset-based OP ($\text{OP}^{\text{data}}$):} OP weights are set proportional to dataset sizes, introducing a structural mismatch with respect to the ground truth OP. This setting represents a naive scenario in which the OP is ignored and assessors rely solely on (benchmark) datasets for reliability assessment.
    \item \textbf{Approximated OP ($\text{OP}^\text{approx}$):} The ground truth OP is perturbed by $\pm 20\%$ noise and renormalized, representing a scenario in which assessors recognize the importance of OPs and apply existing OP estimation methods to approximate them, rather than relying solely on evaluation datasets (i.e., $\text{OP}^{\text{data}}$), while still incurring estimation errors.
    \item \textbf{Ground-truth OP ($\text{OP}^{\text{GT}}$)}: The ground truth OP is used directly, representing an idealized oracle scenario in which the assessors know the OP for certain.
\end{itemize}

We compare the following inference methods:
\begin{itemize}
    \item \textbf{BB-UnInf}: Independent Beta--Binomial model with non-informative
    priors (Beta(1,1)).
    \item \textbf{BB-Inf}: Independent Beta--Binomial model with informative
    priors centered on the GT reliabilities.
    \item \textbf{HiBayES (\cite{luettgau_hibayes_2025}):} A hierarchical Bayesian model with partial pooling
    across subdomains.
    \item \textbf{HIP-LLM:} The proposed imprecise-probability approach, producing
    an envelope over posterior distributions induced by a family of priors.
\end{itemize}


The criteria for comparison are as follows:
\begin{itemize}
    \item \textbf{Median ($\hat{p}_L^{\mathrm{med}}$):} For methods producing a single posterior distribution (BB-UnInf, BB-Inf, HiBayES), the median is defined as the empirical median of the posterior samples of $p_L$. For HIP-LLM, which induces a \emph{set of posterior distributions} indexed by hyperparameter configurations $h \in \mathcal{H}$, we first compute the median of each posterior,
    \[
    m_h = Q_{0.50}\big(p_L^{(h)}\big),
    \]
    and then report the envelope of posterior medians,
    \[
    \hat{p}_L^{\mathrm{med}} \in \left[ \min_{h \in \mathcal{H}} m_h,\; \max_{h \in \mathcal{H}} m_h \right].
    \]
    \item \textbf{Error:} For single-posterior methods, the error is defined as the absolute deviation between the posterior median and the ground-truth reliability,
    \[ 
    \text{Error} = \left| \hat{p}_L^{\mathrm{med}} - p_L^{\mathrm{GT}} \right|.
    \]
    For HIP-LLM, the error is computed \emph{per posterior envelope} as
    \[
    e_h = \left| m_h - p_L^{\mathrm{GT}} \right|,
    \]
    and the reported error corresponds to the envelope of admissible errors,
    \[
    \text{Error} \in 
    \left[ \min_{h \in \mathcal{H}} e_h,\; \max_{h \in \mathcal{H}} e_h \right].
    \]

    \item \textbf{Credible Interval (CI):} For single-posterior methods, a Bayesian $90\%$ credible interval is defined as
    \[
    \text{CI}_{0.90}
    =
    \left[
    Q_{0.05}\big(p_L\big),\;
    Q_{0.95}\big(p_L\big)
    \right],
    \]
    where $Q_q(\cdot)$ denotes the empirical quantile.

    For HIP-LLM, which produces a family of posteriors $\{p_L^{(h)}\}_{h \in \mathcal{H}}$, we report the widest credible interval compatible with the assumed prior uncertainty,
    \[
    \left[
    \min_{h \in \mathcal{H}} Q_{0.05}\big(p_L^{(h)}\big),\;
    \max_{h \in \mathcal{H}} Q_{0.95}\big(p_L^{(h)}\big)
    \right].
    \]

\end{itemize}

Tables~\ref{tab_smallN_single} and \ref{tab_largeN_single} illustrate how system
reliability estimates are influenced by the OP and prior
assumptions. In the Small-$N$ regime, where data are
limited, all methods slightly underestimate the ground-truth reliability
($p_L^{\mathrm{GT}} = 0.5860$), with posterior medians typically between $0.57$
and $0.58$. In this setting, both OP mismatch and prior choice matter: using an
approximate or dataset-based OP increases error compared to the ground-truth OP,
and informative priors can reduce point error when they are correctly aligned
with the GT subdomain reliabilities. This reduction, however, is due to favorable
prior specification rather than inherent robustness and would not persist if the
prior were misspecified.

\begin{table}[H]
\centering
\caption{Posterior estimates under the Small-N ($[100,\,500,\,1000,\,300]$) regime
(\(p_L^{\text{GT}} = 0.5860\)).}
\label{tab_smallN_single}
\begin{tabular}{llccc}
\toprule
OP & Method &  \makecell{Median \\ value/ bound} & \makecell{Error \\ value/bound} & 90\% Interval  \\
\midrule
\multirow{4}{*}{$\text{OP}^{\text{data}}$}
 & BB-UnInf  & 0.5733 & 0.0127 & [0.5549, 0.5917] \\
 & BB-Inf & 0.5749 & 0.0111 & [0.5569, 0.5924] \\
 & HiBayES    & 0.5732 & 0.0128 & [0.5552, 0.5915] \\
 & HIP-LLM    & [0.5713, 0.5751] & [0.0109, 0.0147] & [0.5528, 0.5937] \\
\midrule
\multirow{4}{*}{$\text{OP}^{\text{approx}}$}
 & BB-UnInf  & 0.5740 & 0.0120 & [0.5460, 0.6009] \\
 & BB-Inf & 0.5756 & 0.0104 & [0.5509, 0.5995] \\
 & HiBayES    & 0.5712 & 0.0148 & [0.5438, 0.5966] \\
 & HIP-LLM    & [0.5698, 0.5774] & [0.0086, 0.0162] & [0.5421, 0.6048] \\
\midrule
\multirow{4}{*}{$\text{OP}^\text{GT}$}
 & BB-UnInf  & 0.5782 & 0.0078 & [0.5486, 0.6062] \\
 & BB-Inf & 0.5796 & 0.0064 & [0.5538, 0.6042] \\
 & HiBayES    & 0.5750 & 0.0110 & [0.5459, 0.6011] \\
 & HIP-LLM    & [0.5738, 0.5816] & [0.0044, 0.0122] & [0.5446, 0.6100] \\
\bottomrule
\end{tabular}
\end{table}

HIP-LLM behaves differently in the Small-$N$ regime by reporting intervals rather
than single reliability values. For example, it produces ranges of plausible
medians such as $[0.5713,\,0.5751]$ under the dataset-based OP and
$[0.5738,\,0.5816]$ under the ground-truth OP. These wider bounds reflect
epistemic uncertainty about prior assumptions and make explicit that multiple
reliability estimates are consistent with the data. Hierarchical modeling through
HiBayES also smooths estimates across subdomains, but can introduce additional
bias when data are scarce.

In the Large-$N$ regime, the effect of priors and modeling choices largely
disappears. All methods produce nearly identical posterior medians across OP
scenarios, and uncertainty intervals become very narrow. For HIP-LLM, the median
interval collapses to a tight range (e.g., $[0.5860,\,0.5866]$ under the
dataset-based OP), showing that prior uncertainty no longer plays a significant
role. 

When comparing methods, HIP-LLM intervals should be interpreted as
robustness bounds rather than standard credible intervals; for conservative
decision making, the lower bound of the HIP-LLM median interval provides a
worst-case reliability estimate consistent with the available data.

\begin{table}[t]
\centering
\caption{Posterior estimates under the Large-N ($[1000,\,5000,\,10000,\,3000]$) regime
(\(p_L^{\text{GT}} = 0.5860\)).}
\label{tab_largeN_single}
\begin{tabular}{llccc}
\toprule
OP & Method &  \makecell{Median \\ value/ bound} & \makecell{Error \\ value/bound} & 90\% Interval  \\
\midrule
\multirow{4}{*}{$\text{OP}^{\text{data}}$}
 & BB-UnInf  & 0.5907 & 0.0047 & [0.5849, 0.5965] \\
 & BB-Inf & 0.5908 & 0.0048 & [0.5849, 0.5965] \\
 & HiBayES    & 0.5909 & 0.0049 & [0.5846, 0.5969] \\
 & HIP-LLM    & [0.5860, 0.5866] & [0.0000, 0.0006] & [0.5846, 0.5968] \\
\midrule
\multirow{4}{*}{$\text{OP}^{\text{approx}}$}
 & BB-UnInf  & 0.5887 & 0.0027 & [0.5801, 0.5973] \\
 & BB-Inf & 0.5885 & 0.0025 & [0.5799, 0.5970] \\
 & HiBayES    & 0.5887 & 0.0027 & [0.5800, 0.5980] \\
 & HIP-LLM    & [0.5752, 0.5764] & [0.0096, 0.0108] & [0.5794, 0.5978] \\
\midrule
\multirow{4}{*}{$\text{OP}^\text{GT}$}
 & BB-UnInf  & 0.5928 & 0.0068 & [0.5836, 0.6017] \\
 & BB-Inf & 0.5926 & 0.0066 & [0.5837, 0.6015] \\
 & HiBayES    & 0.5929 & 0.0069 & [0.5838, 0.6025] \\
 & HIP-LLM    & [0.5784, 0.5797] & [0.0063, 0.0076] & [0.5832, 0.6024] \\
\bottomrule
\end{tabular}
\end{table}

\subsubsection{\textbf{RQ6} (Sensitivity to the definition of success/failure)}
\label{sec_RQ6}

A key modeling assumption in HIP-LLM is the use of binary success/failure labels at the subdomain level, summarized by the observed success counts $C_{ij}$ and total trials $N_{ij}$. In practice, however, the notion of ``success'' is not unique and may depend on the chosen evaluation criterion.

To assess the robustness of HIP-LLM to alternative success definitions, we conduct a controlled experiment on a single dataset (MBPP) and a single model (Claude Sonnet~4.5). We compare two common criteria: \emph{Pass@1}, where only the first generated solution is evaluated, and \emph{Pass@3}, where a task is considered successful if at least one out of three independent generations passes all tests. All other experimental conditions are held fixed.

Figure~\ref{fig_pass1_pass3} shows the subdomain-level posterior CDF envelopes for the true success probability 
$\frac{C}{N}$ under Pass@1 and Pass@3. As expected, the more permissive Pass@3 criterion yields a higher observed success count and a right-shifted posterior envelope. However, the two envelopes exhibit substantial overlap, indicating that the inferred reliability is only moderately sensitive to the choice of success definition.

\begin{figure}
    \centering
    \includegraphics[width=0.68\linewidth]{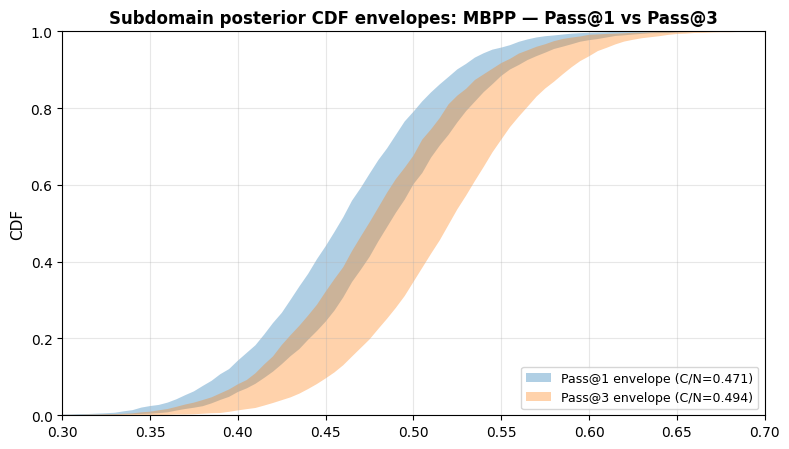}
    \caption{Subdomain-level posterior CDF envelopes of non-failure probability for MBPP under alternative success definitions.
Results are shown for Claude Sonnet~4.5 on MBPP ($N=257$), using identical tasks, prompts, and tests.
Pass@1 yields 
$\frac{C}{N}=0.471$
, while Pass@3 yields $\frac{C}{N}=0.494$.
Envelopes reflect imprecise hierarchical inference with identical hyper-hyperparameter intervals; only the success definition differs. A question-oriented interpretation of figure: How does the inferred subdomain-level non-failure probability change when alternative success (failure) definitions, such as Pass@1 versus Pass@3, are used on the same evaluation data?}
    \label{fig_pass1_pass3}
\end{figure}

\subsubsection{\textbf{RQ7} (Robustness to Memory Effects)}
\label{sec_RQ7}
Our formal reliability definition models task outcomes as i.i.d.\ Bernoulli trials under a fixed OP.
In practice, however, LLM usage may violate independence because the model can condition on previously observed conversational context (i.e., \emph{memory}). To assess the practical impact of such dependence on HIP-LLM, we conducted two complementary experiments: (i) an empirical evaluation to quantify memory growth when memory is not reset after each task and its relation to dependence strength; and (ii) a targeted sensitivity analysis that injects memory-induced dependence into one of subdomains ($\text{Subdom}_{21}$-BoolQ) and examines how the resulting dependence shifts posterior envelopes across hierarchy levels.

We evaluated one subdomain ($\text{Subdom}_{21}$-BoolQ)\footnote{We conducted this experiment on 300 tasks from the BoolQ dataset. To preserve the OP proportions across subdomains, we kept the same relative weights for the other datasets as in the main evaluation.} in a single continuous session in which the entire conversation history was retained in every API call. For each query, we recorded the API-reported 
memory.
In this setting, memory grows monotonically with the number of questions and provides an operational proxy for the strength of cross-item dependence introduced by persistent context (Fig.~\ref{fig_Memory_relation}). After 299 questions, the retained context reached 
$\approx 2.06\times 10^5$ bytes, i.e., $\approx 201$ KB, illustrating that long-context evaluation can induce non-negligible memory accumulation even within a single benchmark run.

Using the above proxy, we observed that the dependence strength induced by memory increases approximately linearly with the retained context size. This motivates a parsimonious sensitivity model in which the dependence parameter is a linear function of memory. 
The linear mapping is used only as a stress-test mechanism to quantify how departures from independence could propagate into the inferred reliability distributions; it does not change the definition of reliability itself, but probes the robustness of posterior conclusions when the i.i.d.\ approximation is imperfect.

We then injected memory-induced dependence into BoolQ according to the linear dependence model and compared posterior CDF envelopes under three BoolQ $\theta$ settings,
\[
\theta_{\text{BoolQ}} \in \{0.915,\ 0.940,\ 0.945\},
\]
where $\theta_{\text{BoolQ}}=0.915$ corresponds to the \emph{i.i.d. assumption} (no memory-induced dependence), while $\theta_{\text{BoolQ}}=0.940$ and $0.945$ represent non-i.i.d. regimes induced by increasing retained context. The remaining subdomain accuracies (MBPP, DS-1000, RACE-H) were kept fixed, and the same hierarchical structure and operational-profile aggregation were preserved (subdomain $\rightarrow$ domain via $\Omega_{ij}$, and domain $\rightarrow$ LLM via $W_i$). For each $\theta_{\text{BoolQ}}$, we propagated uncertainty through the hierarchy under the same imprecise hyperparameter intervals and computed CDF envelopes at three levels:
(i) BoolQ subdomain,
(ii) Reasoning domain ($D2$),
and (iii) overall LLM level.

 \begin{figure}
    \centering
    \begin{subfigure}[t]{0.42\textwidth}
        \centering
        \includegraphics[width=\linewidth]{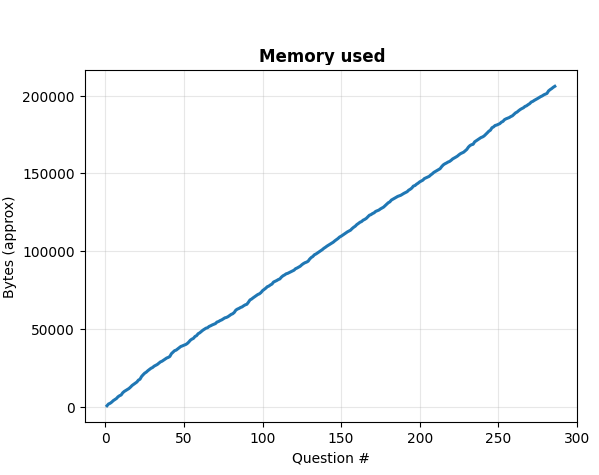}
        \caption{Growth of retained conversational context as a function of the number of BoolQ tasks evaluated in a single continuous session. Memory size (in bytes) increases monotonically and serves as an operational proxy for the strength of memory-induced dependence.}
        \label{fig_Memory_relation}
    \end{subfigure}
    \hfill
    \begin{subfigure}[t]{0.495\textwidth}
        \centering
        \includegraphics[width=\linewidth]{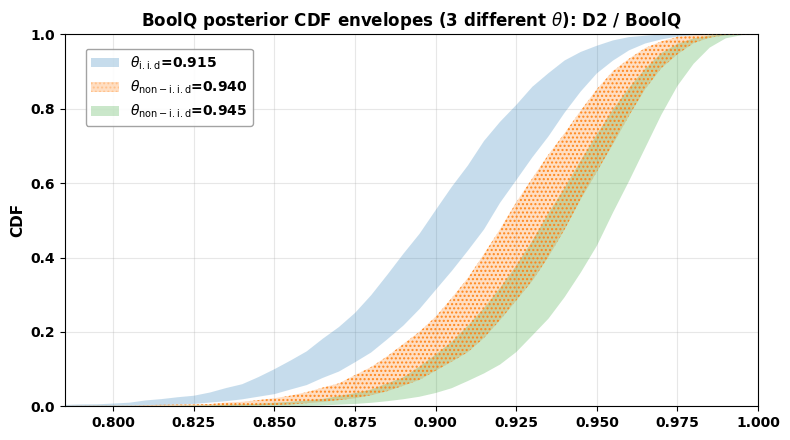}
        \caption{Posterior CDF envelopes of non-failure probability for the BoolQ subdomain under three accuracy settings: $\theta_{\text{BoolQ}}=0.915$ (i.i.d.), and $\theta_{\text{BoolQ}}=0.940, 0.945$ (non-i.i.d.). Increasing memory-induced dependence produces smooth rightward shifts of the envelope.}
        \label{fig_non-iid_CDF_subdomain}
    \end{subfigure}
    \hfill
    \begin{subfigure}[t]{0.49\textwidth}
        \centering
        \includegraphics[width=\linewidth]{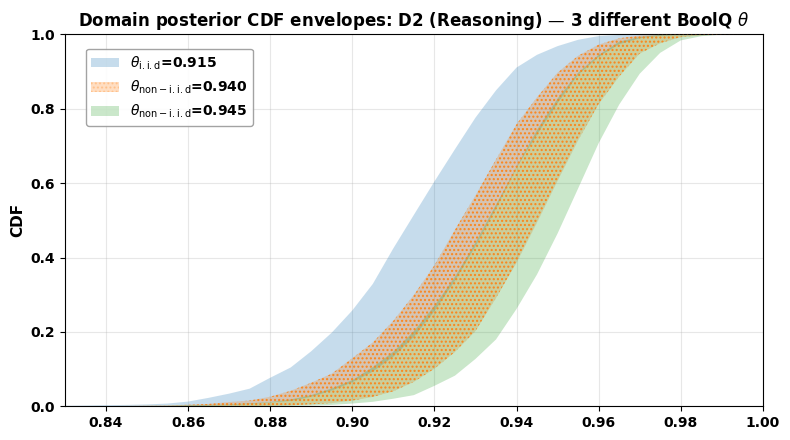}
        \caption{Posterior CDF envelopes of non-failure probability at the Reasoning domain level ($\text{Dom}2$) resulting from propagation of BoolQ dependence through subdomain aggregation with weights $\Omega_{ij}$.}
        \label{fig_non-iid_CDF_domain}
    \end{subfigure}
    \hfill
    \begin{subfigure}[t]{0.49\textwidth}
        \centering
        \includegraphics[width=\linewidth]{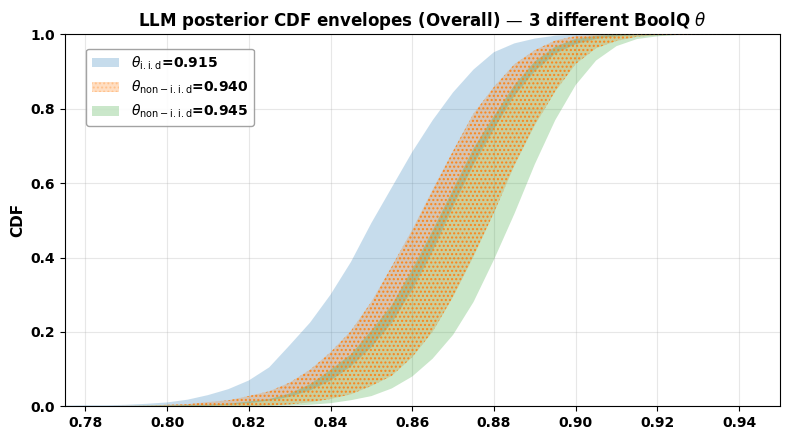}
        \caption{Posterior CDF envelopes of non-failure probability at the LLM level after domain aggregation with operational-profile weights $W_i$, illustrating the attenuated but consistent impact of subdomain-level dependence.}
        \label{fig_non-iid_CDF_LLM}
    \end{subfigure}
    \caption{Sensitivity of posterior CDF envelopes of non-failure probability to memory-induced non-i.i.d.\ dependence. The top-left panel (\ref{fig_Memory_relation}) quantifies memory accumulation during same-session evaluation, motivating a linear dependence proxy. The remaining panels show how increasing dependence in the BoolQ subdomain propagates through the hierarchical reliability model, yielding smooth and controlled shifts of envelopes at subdomain, domain, and LLM levels.
    }
    \label{fig_non-iid_CDF}
\end{figure}

Across all three levels, the envelopes shift smoothly with $\theta_{\text{BoolQ}}$: increasing $\theta_{\text{BoolQ}}$ yields a consistent rightward shift (stochastically larger reliability), while decreasing $\theta_{\text{BoolQ}}$ shifts the envelopes leftward.

These experiments support two practical conclusions: (1) persistent context can induce measurable dependence during benchmark-style evaluation, and the effect scales with memory; and (2) within a realistic range around the observed reliability levels, our posterior envelope conclusions remain stable, with dependence primarily producing controlled shifts rather than qualitative changes in domain- and LLM-level reliability characterization.

\subsubsection{\textbf{RQ8} (Scalability)}
\label{sec_RQ8}
This section characterizes the computational cost of the proposed HIP-LLM inference pipeline. The analysis is performed at the level of the algorithmic procedure (rather than for any single experimental figure) and is structured to (i) derive theoretical time and memory scaling laws from the inference steps, and (ii) empirically validate these scaling laws by controlled parameter sweeps.

Let $m$ denote the number of domains and $\bar{n}$ the average number of subdomains per domain\footnote{Although different domains may have different numbers of subdomains, the computational cost is mainly determined by the typical workload of a domain. For this reason, we describe the scaling in terms of the average number of subdomains $\bar{n}$. This keeps the formulas simple, reflects the observed runtime in practice, and avoids overly pessimistic worst-case estimates.}. For each domain $i$, HIP-LLM represents hyperparameters on a discretized grid over $(\mu_i,\nu_i)$ of size $G=n_\mu n_\nu$. Imprecision at the hyperparameter level is handled by evaluating a finite set of $K$ hyperparameter configurations (or candidates) over the grid. To propagate uncertainty from subdomain to domain and from domain to system level, the method uses Monte Carlo sampling with $S$ samples per configuration. Posterior envelopes are reported on a fixed evaluation grid (e.g., a CDF grid) of size $T$.

The inference is decomposed into three hierarchical stages:
\begin{enumerate}
    \item \textbf{Subdomain level:} For each domain and each hyperparameter configuration, HIP-LLM evaluates subdomain posteriors conditional on $(\mu_i,\nu_i)$ on the grid and constructs the corresponding imprecise posterior set.
    \item \textbf{Domain level:} For each configuration, HIP-LLM propagates subdomain uncertainty to the domain level via Monte Carlo sampling by drawing subdomain reliability variables and aggregating them using OP weights $\Omega_{ij}$.
    \item \textbf{LLM level:} HIP-LLM aggregates domain-level reliability to the system level using domain weights $W_i$, while preserving imprecision by computing lower/upper envelopes across the explored hyperparameter configurations.
\end{enumerate}

Theoretical time complexity is obtained by counting the dominant operations required by the above stages as functions of $(m,\bar{n},G,K,S,T)$. Subdomain-level cost is driven by grid-based hyperposterior evaluation and any intra-domain coupling required to compute subdomain posteriors under shared hyperparameters. Domain-level cost is driven by Monte Carlo propagation, which scales with the number of subdomains drawn per sample and the number of samples. LLM-level cost is driven by repeating aggregation across hyperparameter configurations and computing envelope statistics on the evaluation grid. Memory complexity is characterized by the storage required for (i) grid-based posterior arrays per domain/configuration, (ii) Monte Carlo samples (if stored) or streaming statistics (if not stored), and (iii) envelope representations on the evaluation grid.

To validate the theoretical scaling, we use controlled experiments in which all numerical parameters are fixed to baseline values and then varied one at a time. Specifically, we run repeated executions while sweeping $m$, $\bar{n}$, $K$, $S$, and $G$ independently, and record wall-clock time and peak memory usage for each run. This isolates the marginal effect of each parameter and enables direct comparison with the predicted scaling laws. To quantify scaling exponents, we fit a power-law model of the form $\text{time}=c\,x^\alpha$ (and analogously for memory), where $x$ is the swept parameter\footnote{The parameter we change (e.g. number of subdomains, samples, domains, etc.)} and $\alpha$\footnote{The parameter that shows how fast time grows when 
$x$ increases. for instance, $\alpha=1$ corresponds to linear growth, while $\alpha=2$ corresponds to quadratic growth, and so on.} is the inferred scaling exponent. Here $c$ is a constant which sets the scale.

Many steps of HIP-LLM can be computed independently across domains and hyperparameter configurations, making the method naturally parallelizable. Parallel execution reduces wall-clock runtime in practice, while the underlying asymptotic computational complexity remains unchanged.

The following experiment empirically validates the theoretical computational complexity analysis of the HIP-LLM inference pipeline. We report controlled scalability experiments that measure wall-clock runtime and peak memory usage while varying key algorithmic parameters one at a time. The results are used to confirm the predicted asymptotic scaling laws\footnote{Do the measured runtimes grow the way the theory says they should?} and to identify the dominant computational stages in practice.

\paragraph{Scalability plots} (Fig.~\ref{fig_scaling_G}- Fig.~\ref{fig_scaling_s} ) showing runtime as a function of each swept parameter ($m$, $\bar n$, $K$, $S$, and $G$), together with fitted scaling laws. Across all sweeps, the empirical results closely match the theoretical scaling predictions and confirm that subdomain-level inference dominates the computational cost of HIP-LLM in practice.

     \begin{figure}
    \centering
    \begin{subfigure}[t]{0.49\textwidth}
        \centering
        \includegraphics[width=\linewidth]{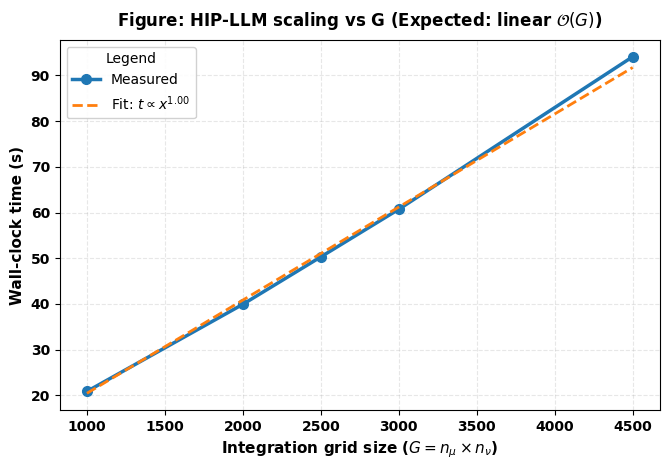}
        \caption{Scaling with the $(\mu,\nu)$ integration grid size $G=n_\mu n_\nu$, showing linear growth in wall-clock time, consistent with the theoretical $\mathcal{O}(G)$ complexity.}

        \label{fig_scaling_G}
    \end{subfigure}
    \hfill
    \begin{subfigure}[t]{0.49\textwidth}
        \centering
        \includegraphics[width=\linewidth]{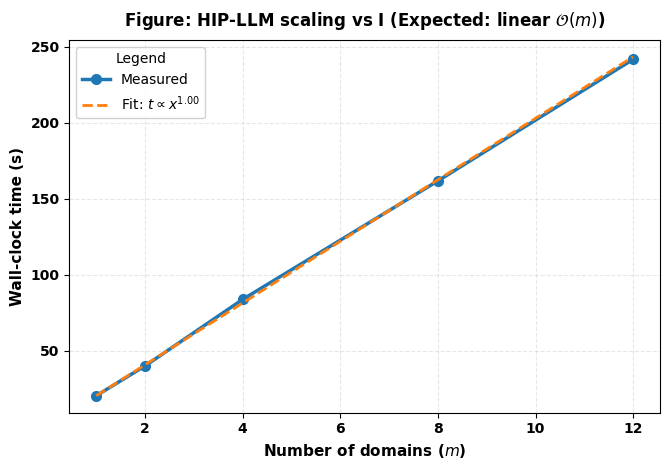}
        \caption{Scaling with the number of domains $m$, demonstrating linear runtime growth due to independent per-domain inference, in agreement with the expected $\mathcal{O}(m)$ behavior.}
        \label{fig_scaling_I}
    \end{subfigure}
     \hfill
    \begin{subfigure}[t]{0.49\textwidth}
        \centering
        \includegraphics[width=\linewidth]{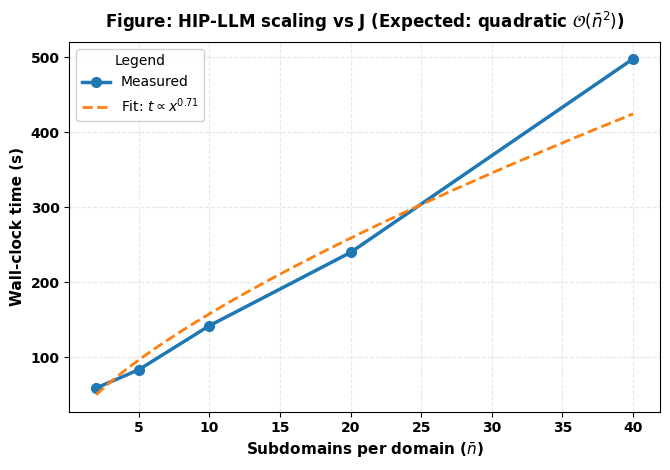}
        \caption{Scaling with the average number of subdomains per domain $\bar n$, exhibiting sub-quadratic growth over the tested range, reflecting practical efficiencies relative to the worst-case $\mathcal{O}(\bar n^2)$ bound.}

        \label{fig_scaling_J}
    \end{subfigure}
     \hfill
    \begin{subfigure}[t]{0.49\textwidth}
        \centering
        \includegraphics[width=\linewidth]{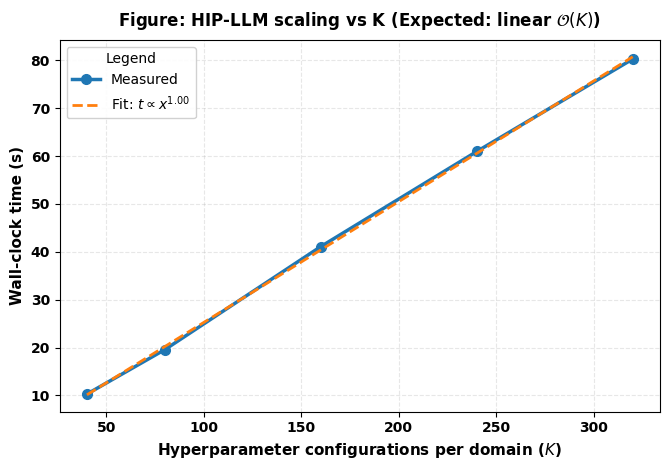}
        \caption{Scaling with the number of hyperparameter configurations per domain $K$, confirming linear complexity due to independent evaluation across configurations.}

        \label{fig_scaling_K}
    \end{subfigure}
     \hfill
    \begin{subfigure}[t]{0.49\textwidth}
        \centering
        \includegraphics[width=\linewidth]{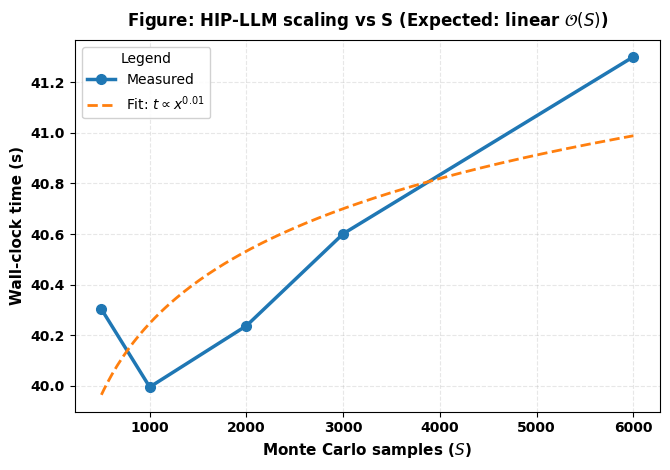}
        \caption{Scaling with the number of Monte Carlo samples $S$, showing weak dependence on $S$ and indicating that Monte Carlo propagation is not the dominant computational cost in the tested regime.}

        \label{fig_scaling_s}
    \end{subfigure}
     \begin{subfigure}[t]{0.49\textwidth}
        \centering
        \includegraphics[width=\linewidth]{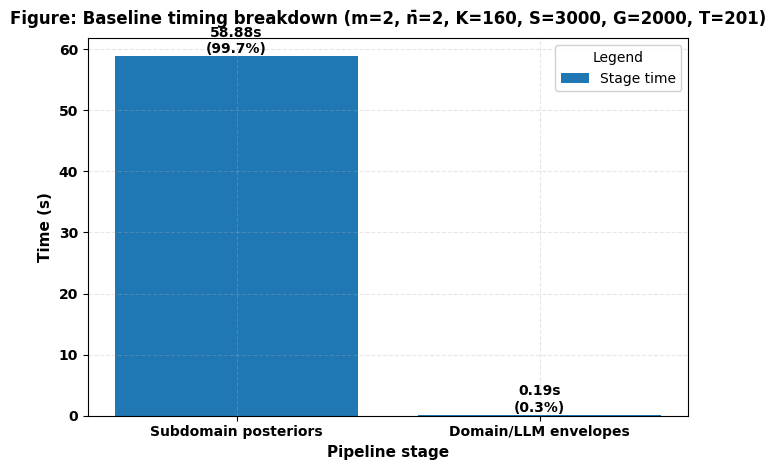}
        \caption{Baseline timing breakdown for the full HIP-LLM pipeline, showing that subdomain posterior computation dominates the total runtime, while domain- and LLM-level envelope aggregation contributes a negligible fraction.}

        \label{fig_base_time_break}
    \end{subfigure}
    \caption{Empirical scalability of the HIP-LLM inference pipeline under controlled parameter sweeps. Each subfigure varies a single parameter while keeping all others fixed at the baseline configuration ($m=2$, $\bar n=2$, $K=160$, $S=3000$, $G=2000$, $T=201$, $K_{\text{total}}\le512$). Measured runtimes are compared against power-law fits to validate theoretical complexity predictions and to identify dominant computational stages.}
    \label{fig_scaling}
\end{figure} 

\paragraph{Baseline timing breakdown} (Fig.~\ref{fig_base_time_break}) reporting the relative contribution of each inference stage under the reference configuration. The results show that subdomain-level posterior computation dominates runtime, accounting for over $99\%$ of total execution time, while domain- and LLM-level aggregation contributes a negligible fraction. This explains the weak empirical dependence on $S$ and highlights the subdomain inference stage as the primary computational bottleneck.

\paragraph{Memory Complexity}
According to Tab.~\ref{tab_memory_scaling} empirical results confirm the expected memory behavior of HIP-LLM. 
Peak memory usage scales approximately linearly with the number of domains $m$, 
the average number of subdomains $\bar n$, the number of hyperparameter configurations $K$, 
and the number of Monte Carlo samples $S$, due to the storage of posterior quantities and sampled reliabilities. 
In contrast, memory usage is effectively independent of the integration grid size $G$, 
indicating that the $(\mu,\nu)$ grid is reused across configurations rather than stored per sample.

\begin{table}[t]
\centering
\caption{Peak memory usage under controlled scalability experiments.
Each row reports the measured peak memory while sweeping one parameter and
keeping all others fixed at the baseline configuration.}
\label{tab_memory_scaling}
\begin{tabular}{lcc}
\toprule
Swept parameter & Range & Peak memory (MB) \\
\midrule
Number of domains $m$ & $1 \rightarrow 12$ & $24.6 \rightarrow 146.7$ \\
Subdomains per domain $\bar n$ & $2 \rightarrow 40$ & $35.8 \rightarrow 314.2$ \\
Monte Carlo samples $S$ & $500 \rightarrow 6000$ & $7.6 \rightarrow 69.4$ \\
Hyperparameter configs $K$ & $40 \rightarrow 320$ & $19.1 \rightarrow 57.8$ \\
Grid size $G$ & $1000 \rightarrow 4500$ & $\approx 35.7$ (constant) \\
\bottomrule
\end{tabular}
\end{table}

}

\section{Discussion}
\label{sec_discussion}

In this section, we discuss the assumptions and limitations of the proposed HIP-LLM.

\paragraph{Assumptions on hierarchical dependencies of failure probabilities}
Inspired by HiBayES \cite{luettgau_hibayes_2025}, we introduce a hierarchical dependency structure where dependencies are defined at the level of failure probabilities across related task groups. Specifically, high-level domains (e.g., coding vs. law) are modeled as independent because their underlying competencies and training data sources are substantially distinct. In contrast, subdomains within a domain (e.g., coding in C++ vs. coding in Python) are modeled as having dependent failure probabilities, reflecting shared latent skills, representations, or reasoning patterns that influence performance across related tasks. This abstraction allows tractable modeling of correlated reliability while acknowledging that the true dependencies among knowledge areas are more complex and fluid. While we do not prescribe specific taxonomies of domain/sub-domains for LLMs in this paper, our formulation is intended to be without loss of generality: users of HIP-LLM can redefine their own ``domains'' to reflect their own assumptions about dependence. For example, if users believe that two domains (say, A and B) are not fully independent, they may treat them as two subdomains under a newly introduced higher-level synthetic domain. Such flexibility allows HIP-LLM to adapt to different operational interpretations or empirical evidence of dependency, reflecting the assessor's domain knowledge.

Note, such (in)dependencies at the parameter level (i.e., correlations or independence between the failure probabilities of domains or subdomains) are different from task-level outcome dependencies, which we discuss next.

\paragraph{Assumptions of i.i.d. Bernoulli Trails for Task Outcomes} 
It is important to distinguish the assumption of independent task outcomes from the (in)dependencies among failure probabilities discussed earlier. The latter concern parameter-level relationships, i.e., how the long-run failure rates across domains or subdomains may co-vary due to shared competencies, whereas the former concerns instance-level independence of individual task results, given those parameters. Even if the failure probabilities of two sub-domains are correlated random variables, we can still assume that, conditional on those probabilities, each task outcome within a sub-domain is an independent Bernoulli trial.

As discussed earlier in Def.~\ref{def_formal_llm_reliability} and Remark~\ref{rm_iid_vs_context}, the i.i.d. assumption is pragmatically justified in the context of LLM evaluations when tasks are executed in isolation, such as \textit{initiating new chat sessions without contextual memory}. \rv{Nevertheless, caution is warranted: in practice, the way tasks are designed, sequenced, or batched can introduce subtle dependencies that violate this idealization, e.g., in long-context, agentic workflows. Assessors should therefore ensure testing procedures (e.g., resetting each task session) align with the independence assumption, a common setup in offline benchmarking.}

Indeed, the i.i.d. Bernoulli trial assumption remains a simplification, motivated by mathematical convenience and interpretability. \rv{However, this assumption is not introduced uniquely by our HIP-LLM, rather that models with the i.i.d. assumption have a long history of being used in software reliability assessment. Thayer \emph{et al.}'s~\cite{ThayerEtAl_1975} model was one of the earliest, used in early works on random testing \cite{DuranEtal_1984_EvaluationofRandomTesting}. Regulatory bodies also have recommended using the i.i.d. assumption in reliability assessments when appropriate~\cite{atwood2003handbook}. Having said that, weakening the i.i.d. assumption has been a long-standing research topic in the software reliability community. Numerous models have been proposed to capture dependence among successive tests, including a binary Markov chain model~\cite{chen_binary_1996}, a Markov renewal process~\cite{goseva_popstojanova_failure_2000,Xie_2005_ModellingCorrelatedFailures}, the Markov-based model with benign-failure states~\cite{bondavalli_dependability_1995}, and Bayesian models examining how sensitive reliability claims are to potential dependencies of test outcomes \cite{salako2023unnecessity,salako2024demonstrating}.}

\rv{A road-map for extending HIP-LLM beyond i.i.d. task outcomes can build directly on aforementioned, established software reliability models. More concretely, a first step is treating a multi-turn interaction as the unit of analysis and allowing within-session dependence while modeling across-session dynamics via latent states. One can introduce Markov/hidden-state structures, e.g., hierarchical hidden Markov model or Markov renewal formulations where latent ``interaction states'' evolve over time and govern task-level failure probabilities.}


\paragraph{What constitutes a failure for LLMs}
Reliability, by definition, concerns ``failures''. In traditional software systems, failures are relatively straightforward to define, as they can be explicitly linked to violations of formal specifications or requirements. However, for AI systems, and particularly for LLMs, no such explicit specification exists. As a result, a complete and formal characterization of what constitutes a failure for LLMs remains an open research challenge. The notion of ``failure'' can vary across applications and stakeholders. For instance, factual errors, inconsistency\footnote{Most existing LLM evaluation studies focus on one-shot accuracy metrics (e.g., Pass@1), whereas real-world usage often involves iterative prompting. Inconsistencies across multiple outputs can undermine user trust and thus regarded as a failure.}, harmful or biased content, or divergence from human judgment may all be considered failures depending on the context \cite{chang_surveyLLM_2024}. 

\rv{The choice of scoring criteria for defining LLM failures is inherently application dependent and should align with domain-specific specifications (if any) and evaluation norms. For example, in code generation, failures are typically defined in terms of functional correctness, such as compilation success or passing predefined test cases. In mathematical problem solving, correctness is often determined by agreement with a unique ground-truth answer or a formally verifiable solution. In contrast, in legal analysis, failure definitions may rely on expert judgment regarding factual accuracy, consistency with legal principles, or omission of critical considerations. For more open-ended tasks such as email drafting or creative writing, failures are often defined relative to stylistic, pragmatic, or preference-based criteria, frequently assessed through human evaluation. These examples illustrate that failure definitions for LLMs are not universal but must be specified in accordance with the intended use and domain requirements, reinforcing the need to interpret reliability estimates as conditional on the chosen scoring criteria.}

That said, in this work, our focus is not on prescribing a specific definition of LLM failure but on \textit{providing a reliability assessment modeling framework that is agnostic to failure definitions}. \rv{For example, the reliability definition in Def.~\ref{def_formal_llm_reliability} can be readily extended from binary failures to non-binary scores by replacing the indicator function with a continuous scoring function (e.g., taking values in $[0,1]$ through probabilistic or multi-rater failure models), thereby capturing scoring noise or subjectivity.}
HIP-LLM can accommodate any well-defined notion of failure, enabling users to instantiate the model with the criteria most relevant to their domain or evaluation objectives.

\paragraph{How can OPs be specified in practice}

\rv{OPs are fundamental to software reliability assessment, and the same principle applies to LLM reliability. HIP-LLM provides explicit hierarchical ``interfaces'' for integrating OPs; however, in the absence of access to real operational data, this study relies on simulated OPs derived from existing benchmark datasets. Specifically, benchmark datasets are treated as sampling frames that approximate the unknown operational data distribution (cf.~Remark~\ref{rm_OP_simulation}), with task-level distributions assumed within each subdomain and operational weights assigned across subdomains and domains. This experimental choice is made solely to demonstrate how HIP-LLM incorporates OPs and to enable fair comparison with existing benchmark-based methods, and \textit{should not be interpreted as a solution to OP acquisition}. Accurately estimating OPs from real usage data is beyond the scope of this reliability modeling study and is a well-recognized, separate research problem in software reliability engineering, for which numerous methodologies have been proposed (e.g., \cite{musa1993operational,musa_adjusting_1994,lyu_handbook_1996}). The concept of OP has also been discussed and measured in AI software contexts~\cite{dong2023reliability,huang2023hierarchical,zhao2021assessing,zhao_statistical_2025}.}

\rv{Importantly, the challenge of modeling LLM OPs does not undermine the proposed framework; rather, it highlights why OPs are essential to reliability assessment. In established software reliability practice, OPs are progressively refined as new information becomes available (e.g., execution logs, usage telemetry, or expert-informed adjustments) without requiring changes to the underlying reliability model~\cite{musa1993operational,musa_adjusting_1994,lyu_handbook_1996,smidts2014software}. As more representative datasets, user interaction logs, or task-distribution estimates for LLMs become available~\cite{chatterji2025people}, OPs can be updated accordingly, allowing reliability estimates to be recalibrated toward actual field usage. Such separation and ``decoupling design'' of HIP-LLM between evidence (testing results) and usage assumptions (OPs) enables HIP-LLM to evolve naturally with deployment conditions, improving fidelity without \textit{redesigning} the model.}

\rv{While we acknowledge that our present treatment of OP uncertainty remains simplified (for demonstration/comparison), more complex modeling on OP uncertainties, e.g., dynamic OPs, will form our important future works. Notably, a range of reliability models in traditional software engineering explicitly account for OP uncertainty and dynamics, including Bayesian updating, adaptive OP adjustment, and black-box or time-varying usage models~\cite{bishop_deriving_2017,musa_adjusting_1994,pietrantuono2020reliability,popov2025black,popov2025dynamic}. These established techniques provide a foundation that can be retrofitted into HIP-LLM to support uncertain and dynamical OPs in LLM deployments.}

\paragraph{How to embed prior knowledge}
HIP-LLM models uncertainty at three levels: the subdomain success rate ($\theta_{ij}$), how much domains share information ($\mu_i,\nu_i$), and the top-level parameters ($a_i,b_i,c_i,d_i$) that control the importance of those priors.

As a Bayesian framework, HIP-LLM requires priors to initiate inference, and these priors may substantially influence the resulting posteriors, particularly when data are limited \cite{bishop_toward_2011,zhao2019assessing}. The challenge of eliciting meaningful priors is \textit{not unique} to HIP-LLM but is common to all Bayesian models. Unlike HiBayES \cite{luettgau_hibayes_2025}, which employs noninformative priors, HIP-LLM follows the principles of robust Bayesian analysis \cite{Berger_1994,Insua2012robust,walter_bayesian_2017} by adopting the Imprecise Probability framework. This approach enables the representation of uncertainty about prior beliefs through sets of priors rather than committing to a single distribution. Prior research has shown that Imprecise Probability can facilitate expert elicitation by supporting structured ``thought experiments'' \cite{imprecision_luck_walter_2009,zhao_probabilistic_2019,zhao2024bayesian} allowing experts to express vague, partial, or interval-valued prior beliefs. Similarly, HIP-LLM’s prior layer can be informed by such thought experiments and expert judgment. While we acknowledge that eliciting priors in practice remains challenging, HIP-LLM at least offers a practical ``what-if'' analysis tool by allowing assessors to explore how different prior assumptions might influence posterior reliability claims, for example: if my prior belief is A, then after observing evidence B, the posterior reliability would be C.

\rv{\paragraph{On the practicality of HIP-LLM}
HIP-LLM is designed to replace current LLM assessment practices that rely primarily on benchmark scores. As highlighted by the 5 gaps identified in the introduction, benchmark-based evaluations often ignore OPs, overlook dependencies between related domains, and cannot incorporate prior knowledge, limiting their relevance to real-world use. In contrast, HIP-LLM produces decision-oriented reliability assessments by integrating OPs, hierarchical domain dependencies, and epistemic uncertainty and priors. Rather than reporting isolated scores (which, strictly speaking, cannot be regarded as ``reliability'' under standardized definitions such as ANSI \cite{ansi_1991}), HIP-LLM enables rigorous reliability claims such as ``\textit{the confidence that a model will complete a specified number of future tasks without failure under a given usage profile}''. This allows stakeholders to compare LLMs within specific domains, aggregate reliability across domains, or assess how reliability changes across different user groups or usage patterns. Such capabilities support informed offline decision-making, including model selection, risk assessment, and regulatory certification prior to deployment. While this work focuses on offline assessment, HIP-LLM could potentially be extended to online reliability monitoring with appropriate engineering optimizations for efficiency (e.g., caching and sliding data windows), which we leave for future work.
}

\section{Conclusion}
\label{sec_conclusion}

This paper presents HIP-LLM, a hierarchical Bayesian framework with imprecise probability for evaluating the reliability of LLMs. While HIP-LLM acknowledges the general-purpose nature of contemporary LLMs, it also adheres to the standardized definition of software reliability which inherently depends on specific applications and operational contexts. We reconcile these seemingly conflicting aspects by introducing a multi-level hierarchical structure of evaluation tasks that explicitly models dependencies among subdomains, integrates information through partial pooling, and incorporates OPs at multiple levels of abstraction. Moreover, as a Bayesian inference method, HIP-LLM respects the difficulty of selecting priors. Instead of simply using non-informative priors, HIP-LLM provides the mechanism to embed imprecise prior knowledge and reports posterior envelopes (rather than single distributions). Our experiments on common LLMs demonstrate that these features enable a more nuanced and standardized reliability estimation compared to existing benchmarks and state-of-the-art methods. 

While we believe HIP-LLM represents an important first step toward a principled reliability assessment framework for LLMs (and more broadly, for multimodal generative AI) several key limitations remain (cf.~Section~\ref{sec_discussion}). These include the challenge of formally defining what constitutes a failure for an LLM, accurately estimating OPs, and systematically eliciting imprecise prior knowledge. \rv{Moreover, the applicability of HIP-LLM is limited to reset or single-task usage scenarios conforming to the i.i.d. assumption (cf.~Remark \ref{rm_iid_vs_context}); and should not be overgeneralized to long-context or agentic LLM workflows, which require dependency-aware reliability models---an important future work.
More future extensions should aim to develop methods for handling uncertain and dynamically evolving OPs, and integrating more refined mechanisms for prior elicitation.}

\section*{Acknowledgment}
SK and XZ have received funding from the European Union's EU Framework Program for Research and Innovation Europe Horizon (grant agreement No 101202457). XZ's contribution is also supported by the UK EPSRC New Investigator Award [EP/Z536568/1]. SK's contribution is supported by the UKRI Future Leaders Fellowship Grant [MR/S035176/1].

Views and opinions expressed are those of the authors only and do not necessarily reflect those of the European Union or European Research Executive Agency (REA). Neither the European Union nor the granting authority can be held responsible for them.

\bibliographystyle{elsarticle-num} 
\bibliography{ref}

\newpage

\appendix

\section{Mathematical Derivations}
\label{sec_mathematical_derivations}

This appendix contains the proofs and supplementary information for the theorems presented in the main text.

\subsection{Common Framework and Notation}
\label{sec_common_framework_appendix}

In a LLM, for a domain $D_i$ with subdomains $S_{i1},\dots,S_{i n_i}$. 
For each subdomain $j=1,\dots,n_i$:

\begin{itemize}
\item \textbf{Parameter:} $\boldsymbol{\theta}_i = (\theta_{i1}, \ldots, \theta_{in_i}) \in (0,1)^{n_i}$ (subdomain reliability)
\item \textbf{Parameter:} $n_F$ = number of trials in the future
\item \textbf{Data:} $C_i = \{(C_{ik}, N_{ik})\}_{k=1}^{n_i}$ (observed correct responses and trial counts)
\item \textbf{Hierarchical prior.}
\[
\begin{aligned}
C_{ij}\mid \theta_{ij},N_{ij} &\sim \text{Binomial}(N_{ij},\theta_{ij}),\\
\theta_{ij}\mid \mu_i,\nu_i &\stackrel{\text{iid}}{\sim} \text{Beta}(\mu_i\nu_i,\,(1-\mu_i)\nu_i),\\
\mu_i &\sim \text{Beta}(a_i,b_i),\qquad
\nu_i \sim \text{Gamma}(c_i,\text{rate}=d_i),
\end{aligned}
\]
with $\mu_i\in(0,1)$, $\nu_i>0$ (domain-level parameters), and hyper- hyperparameters $h_i=(a_i,b_i,c_i,d_i)$.

\item \textbf{Admissible hyperparameters (imprecise prior).}
\[
\mathcal A_i=
[a_i^{\min},a_i^{\max}]\times[b_i^{\min},b_i^{\max}]
\times[c_i^{\min},c_i^{\max}]\times[d_i^{\min},d_i^{\max}].
\]

\item \textbf{Common Likelihood:}
\begin{align}
L(\boldsymbol{\theta}_i) = Pr(C_i \mid \boldsymbol{\theta}_i) = \prod_{j=1}^{n_i} \binom{N_{ij}}{C_{ij}} \theta_{ij}^{C_{ij}} (1-\theta_{ij})^{N_{ij}-C_{ij}} \nonumber
\label{eq_likelihood_appendix}
\end{align}
\end{itemize}


\subsection{Proof of Theorem \ref{thm_subdomain_marginal}}
\label{sec_proof_subdomain_marginal_appendix}

\textbf{Theorem 1}
For subdomain $S_{ij}$ in domain $D_i$, let 
$C_i=\{(C_{ik},N_{ik})\}_{k=1}^{n_i}$ be the observed data. 
Let the admissible set of hyperparameters be
\[
\mathcal{A}_i \;=\;
[a_{i}^{\min},a_{i}^{\max}] \times 
[b_{i}^{\min},b_{i}^{\max}] \times 
[c_{i}^{\min},c_{i}^{\max}] \times 
[d_{i}^{\min},d_{i}^{\max}],
\]
and write $h_i=(a_i,b_i,c_i,d_i)$.
Then, \emph{for any} $h_i \in \mathcal{A}_i$, the marginal posterior density of $\theta_{ij}$ is
\[
Pr(\theta_{ij}\mid C_i,h_i) 
= \frac{f_{\mathrm{marg}}(\theta_{ij}, C_i; h_i)}{Z_{\mathrm{marg}}(h_i)},
\]
where $f_{\mathrm{marg}}$ (unnormalized posterior) and $Z_{\mathrm{marg}}$ (normalizing constant) are
\begin{align}
& \hspace{1cm}f_{\mathrm{marg}}(\theta_{ij}, C_i; h_i) 
=\nonumber \\
& \hspace{2cm} \int_0^1 \int_0^{\infty} 
\Bigg[\prod_{k \neq j} \int_0^1 d\theta_{ik}\Bigg] 
L(\boldsymbol{\theta}_i)\,
Pr(\boldsymbol{\theta}_i \mid \mu_i,\nu_i)\,
Pr(\mu_i,\nu_i\mid h_i)\,
d\mu_i\, d\nu_i \nonumber
\end{align}
\[
Z_{\mathrm{marg}}(h_i) 
= \int_0^1 \int_0^{\infty} 
Pr(C_i \mid \mu_i,\nu_i)\,Pr(\mu_i,\nu_i \mid h_i)\,
d\mu_i\, d\nu_i,
\]
with $L(\boldsymbol{\theta}_i)=Pr(C_i\mid \boldsymbol{\theta}_i)$.

The imprecise marginal posterior is characterized by the lower/upper envelopes
\[
\underline{Pr}(\theta_{ij}\mid C_i) 
= \inf_{h_i\in\mathcal{A}_i} Pr(\theta_{ij}\mid C_i,h_i),
\qquad
\overline{Pr}(\theta_{ij}\mid C_i) 
= \sup_{h_i\in\mathcal{A}_i} Pr(\theta_{ij}\mid C_i,h_i).
\]

\begin{proof}[Proof of Theorm~\ref{thm_subdomain_marginal}]
We proceed step by step.

\textbf{Step 1 — Likelihood and hierarchical prior densities.}  
The joint data likelihood factorizes over subdomains:
\begin{equation}
\label{eq_likelihood_appendix}
L(\boldsymbol{\theta_i})=Pr(C_i\mid \theta_i)=\prod_{k=1}^{n_i}\binom{N_{ik}}{C_{ik}}
\theta_{ik}^{C_{ik}}(1-\theta_{ik})^{N_{ik}-C_{ik}}.
\end{equation}

Given $(\mu_i,\nu_i)$, the prior on $\theta_i$ factorizes as
\[
Pr(\boldsymbol{\theta_i}\mid \mu_i,\nu_i)=\prod_{k=1}^{n_i}
\frac{\Gamma(\nu_i)}{\Gamma(\mu_i\nu_i)\Gamma((1-\mu_i)\nu_i)}
\,\theta_{ik}^{\mu_i\nu_i-1}(1-\theta_{ik})^{(1-\mu_i)\nu_i-1}.
\]

Given the domain-level parameters $(\mu_i,\nu_i)$, the subdomain reliabilities are conditionally independent and 
identically distributed:
\[
\theta_{ij} \mid \mu_i,\nu_i \;\stackrel{\mathrm{iid}}{\sim}\; 
\mathrm{Beta}\!\big(\mu_i \nu_i,\,(1-\mu_i)\nu_i\big),
\qquad j=1,\dots,n_i.
\]
Here $\mu_i\in(0,1)$ is the prior mean and $\nu_i>0$ is the prior strength (concentration).

For each $k$, the Beta pdf (Single-coordinate density)  is:
\[
Pr(\theta_{ik}\mid \mu_i,\nu_i)
= \frac{\Gamma(\nu_i)}{\Gamma(\mu_i\nu_i)\,\Gamma((1-\mu_i)\nu_i)}\,
\theta_{ik}^{\,\mu_i\nu_i-1}\,(1-\theta_{ik})^{\,(1-\mu_i)\nu_i-1},
\qquad \theta_{ik}\in(0,1).
\]

Since $\{\theta_{ik}\}_{k=1}^{n_i}$ are conditionally independent given $(\mu_i,\nu_i)$,

\begin{equation}
\label{eq_marginal_prior_given_mu_nu_appendix}
\begin{aligned}
Pr(\boldsymbol{\theta}_i \mid \mu_i,\nu_i)
&= \prod_{k=1}^{n_i} Pr(\theta_{ik}\mid \mu_i,\nu_i) \\
&= \prod_{k=1}^{n_i} \mathrm{Beta}\!\left(\theta_{ik}\mid \mu_i\nu_i,(1-\mu_i)\nu_i\right) \\
&= \left(\frac{\Gamma(\nu_i)}{\Gamma(\mu_i\nu_i)\,\Gamma((1-\mu_i)\nu_i)}\right)^{\!n_i}
   \prod_{k=1}^{n_i}
   \theta_{ik}^{\,\mu_i\nu_i-1}\,(1-\theta_{ik})^{\,(1-\mu_i)\nu_i-1} 
\end{aligned}
\end{equation}

Given hyperparameters $h_i=(a_i,b_i,c_i,d_i)$, we take $\mu_i$ and $\nu_i$ to be independent:
\[
Pr(\mu_i,\nu_i \mid h_i) \;=\; Pr(\mu_i \mid h_i)\,Pr(\nu_i \mid h_i).
\]

We assume:
\[
\mu_i \sim \mathrm{Beta}(a_i,b_i), \quad \mu_i\in(0,1),
\qquad
\nu_i \sim \mathrm{Gamma}(c_i,\text{rate}=d_i), \quad \nu_i>0.
\]
Explicitly,
\[
Pr(\mu_i\mid h_i)=\frac{\Gamma(a_i+b_i)}{\Gamma(a_i)\Gamma(b_i)}\,
\mu_i^{\,a_i-1}(1-\mu_i)^{\,b_i-1},
\qquad
Pr(\nu_i\mid h_i)=\frac{d_i^{\,c_i}}{\Gamma(c_i)}\,\nu_i^{\,c_i-1}\,e^{-d_i \nu_i}
\]
Thus,

\begin{equation}
\label{eq_mu_nu_given_hi_appendix}
\begin{aligned}
Pr(\mu_i,\nu_i \mid h_i)
&= \mathrm{Beta}(\mu_i\mid a_i,b_i)\;\mathrm{Gamma}(\nu_i\mid c_i,\text{rate}=d_i) \\
&= \frac{\Gamma(a_i+b_i)}{\Gamma(a_i)\,\Gamma(b_i)}\;
   \frac{d_i^{\,c_i}}{\Gamma(c_i)}\;
   \mu_i^{\,a_i-1}(1-\mu_i)^{\,b_i-1}\;
   \nu_i^{\,c_i-1}e^{-d_i \nu_i} \\
&= \frac{1}{\mathrm{B}(a_i,b_i)}\;\frac{d_i^{\,c_i}}{\Gamma(c_i)}\;
   \mu_i^{\,a_i-1}(1-\mu_i)^{\,b_i-1}\;
   \nu_i^{\,c_i-1}e^{-d_i \nu_i}
\end{aligned}
\end{equation}

\noindent\textit{where } $\mathrm{B}(a_i,b_i)=\dfrac{\Gamma(a_i)\Gamma(b_i)}{\Gamma(a_i+b_i)}$.

The hyperprior factorizes:
\[
Pr(\mu_i,\nu_i\mid h_i)
=\frac{\Gamma(a_i+b_i)}{\Gamma(a_i)\Gamma(b_i)} \mu_i^{a_i-1}(1-\mu_i)^{b_i-1}
\cdot \frac{d_i^{\,c_i}}{\Gamma(c_i)} \nu_i^{c_i-1}e^{-d_i\nu_i}.
\]

\textbf{Step 2 — Normalizing constant.}  
By definition,
\[
Z_{\mathrm{marg}}(C_i, h_i)=\int_0^1 Pr(\theta_{ij}\mid C_i,h_i)\,d\theta_{ij}
=\int_0^1\!\!\int_0^\infty Pr(C_i\mid \mu_i,\nu_i)\,Pr(\mu_i,\nu_i\mid h_i)\,d\mu_i\,d\nu_i,
\]
where
\[
Pr(C_i\mid \mu_i,\nu_i)=\int_{(0,1)^{n_i}}L(\theta_i)\,Pr(\theta_i\mid \mu_i,\nu_i)\,d\theta_i.
\]

Start from the definition:
\begin{equation}
\label{eq:def-posterior}
Pr(\theta_{ij}\mid C_i,h_i)
=\frac{f_{\mathrm{marg}}(\theta_{ij},C_i;h_i)}{Z_{\mathrm{marg}}(C_i,h_i)} \nonumber
\end{equation}
where
\begin{align}
f_{\mathrm{marg}}(\theta_{ij},C_i;h_i)
&= \nonumber\\
&\int_0^1\!\!\int_0^\infty
\left[\prod_{k\neq j}\int_0^1 d\theta_{ik}\right]\,
L(\boldsymbol{\theta}_i)\,Pr(\boldsymbol{\theta}_i\mid \mu_i,\nu_i)\,Pr(\mu_i,\nu_i\mid h_i)\,d\mu_i\,d\nu_i \nonumber
\end{align}

The posterior integrates to 1:
\begin{align}
    \int_0^1 Pr(\theta_{ij}\mid C_i,h_i)\,d\theta_{ij}=1
\;\;\Longrightarrow\;\;
\frac{1}{Z_{\mathrm{marg}}(C_i,h_i)}\int_0^1 f_{\mathrm{marg}}(\theta_{ij},C_i;h_i)\,d\theta_{ij}=1 \nonumber
\end{align}

Hence
\begin{equation}
\,Z_{\mathrm{marg}}(C_i,h_i)=\int_0^1 f_{\mathrm{marg}}(\theta_{ij},C_i;h_i)\,d\theta_{ij}\, \nonumber
\end{equation}

By expanding the integral:
\begin{align}
Z_{\mathrm{marg}}(C_i,h_i)
&=\int_0^1\!\!\int_0^1\!\!\int_0^\infty
\left[\prod_{k\neq j}\int_0^1 d\theta_{ik}\right]\,
L(\boldsymbol{\theta}_i)\,Pr(\boldsymbol{\theta}_i\mid \mu_i,\nu_i)\,Pr(\mu_i,\nu_i\mid h_i)\,
d\mu_i\,d\nu_i\,d\theta_{ij}\nonumber\\
&=\int_0^1\!\!\int_0^\infty
\left[\int_{(0,1)^{n_i}} L(\boldsymbol{\theta}_i)\,Pr(\boldsymbol{\theta}_i\mid \mu_i,\nu_i)\,d\boldsymbol{\theta}_i\right]\,
Pr(\mu_i,\nu_i\mid h_i)\,d\mu_i\,d\nu_i. \label{eq_Z-expanded_appendix}
\end{align}
By substituting Eqs.~\ref{eq_likelihood_appendix}, \ref{eq_marginal_prior_given_mu_nu_appendix}, \ref{eq_mu_nu_given_hi_appendix} into \ref{eq_Z-expanded_appendix} we have:

\begin{equation}
\label{eq_Z-marg-gamma_form_appendix}
\begin{aligned}
Z_{\mathrm{marg}}(C_i,h_i)
=& \left[\prod_{j=1}^{n_i}\binom{N_{ij}}{C_{ij}}\right]\;
   \frac{d_i^{\,c_i}}{\mathrm{B}(a_i,b_i)\,\Gamma(c_i)}
   \int_0^1\!\!\int_0^\infty
   \mu_i^{\,a_i-1}(1-\mu_i)^{\,b_i-1}\;
   \nu_i^{\,c_i-1}e^{-d_i\nu_i} \\
&\hspace{0 cm}\times
   \left(\frac{\Gamma(\nu_i)}{\Gamma(\alpha_i)\,\Gamma(\beta_i)}\right)^{\!n_i}
   \prod_{j=1}^{n_i}
   \frac{\Gamma\!\big(C_{ij}+\alpha_i\big)\;
         \Gamma\!\big(N_{ij}-C_{ij}+\beta_i\big)}
        {\Gamma\!\big(N_{ij}+\nu_i\big)}
   \; d\nu_i\, d\mu_i
\end{aligned}
\end{equation}

\textbf{Step 3 — Imprecise posterior.}
For any fixed $h_i=(a_i,b_i,c_i,d_i)\in\mathcal A_i$, 
Steps 1–2 give the precise (pointwise) posterior density
\[
Pr(\theta_{ij}\mid C_i,h_i)
=\frac{f_{\mathrm{marg}}(\theta_{ij},C_i;h_i)}{Z_{\mathrm{marg}}(C_i,h_i)},
\] 

The hyperparameter vector $h_i$ is not known exactly but only to lie
within the admissible hyperrectangle
\[
\mathcal A_i=
[a_i^{\min},a_i^{\max}]\times
[b_i^{\min},b_i^{\max}]\times
[c_i^{\min},c_i^{\max}]\times
[d_i^{\min},d_i^{\max}].
\]
Hence the family 
$\mathcal F_i=\{Pr(\theta_{ij}\mid C_i,h_i):h_i\in\mathcal A_i\}$
represents all posterior densities consistent with our prior ignorance
about $h_i$.  

Following the theory of imprecise probabilities, we define the
\emph{lower} and \emph{upper} posterior densities (envelopes) as the
pointwise infimum and supremum over this family:
\[
\underline{Pr}(\theta_{ij}\mid C_i)
=\inf_{h_i\in\mathcal A_i}Pr(\theta_{ij}\mid C_i,h_i),
\qquad
\overline{Pr}(\theta_{ij}\mid C_i)
=\sup_{h_i\in\mathcal A_i}Pr(\theta_{ij}\mid C_i,h_i).
\]
Operationally, these envelopes are obtained by evaluating the closed-form
posterior $Pr(\theta_{ij}\mid C_i,h_i)$ at the extremal corners of
$\mathcal A_i$ or through numerical optimization if the extrema occur in
the interior.  The resulting pair
$(\underline{Pr},\overline{Pr})$ bounds all admissible precise posteriors
and fully characterizes the imprecise posterior belief about
$\theta_{ij}$ given the uncertainty in $(a_i,b_i,c_i,d_i)$.

\end{proof}


\subsection{Proof of Theorem \ref{thm_domain_posterior}}
\label{sec_proof_domain_appendix}
\textbf{Theorem 2}
For domain $D_i$ with local OP weights $\Omega_{ij}$ 
(where $\sum_{j=1}^{n_i} \Omega_{ij} = 1$), 
let $p_i = \sum_{j=1}^{n_i} \Omega_{ij}\theta_{ij}$ be the domain-level non-failure probability. 
Define the admissible set of hyper-hyper-parameters
\[
\mathcal{A}_i =
[a_{i}^{\min},a_{i}^{\max}] \times 
[b_{i}^{\min},b_{i}^{\max}] \times 
[c_{i}^{\min},c_{i}^{\max}] \times 
[d_{i}^{\min},d_{i}^{\max}],
\]
and write $h_i=(a_i,b_i,c_i,d_i)$. 

Then, for any $h_i \in \mathcal{A}_i$, the posterior distribution of $p_i$ 
is characterized by its CDF:

\begin{align}
F_{p_i}(t \mid C_i, h_i)
  &= \Pr(p_i \le t \mid C_i, h_i) \nonumber\\
  & \hspace{2cm} = \int_0^1 \int_0^\infty F_{p_i}(t \mid \mu_i, \nu_i, C_i)\,
     Pr(\mu_i, \nu_i \mid C_i, h_i)\, d\mu_i\, d\nu_i \nonumber
\end{align}

where
\begin{itemize}
\item $F_{p_i}(t \mid \mu_i, \nu_i, C_i)$ is the conditional CDF of 
$p_i = \sum_{j=1}^{n_i} \Omega_{ij}\theta_{ij}$ given that 
$\theta_{ij} \mid \mu_i, \nu_i, C_i \stackrel{\text{ind.}}{\sim} 
\text{Beta}(C_{ij} + \mu_i\nu_i, N_{ij} - C_{ij} + (1-\mu_i)\nu_i)$ 
for $j=1,\ldots,n_i$,
\item $Pr(\mu_i, \nu_i \mid C_i, h_i)$ is the hyper-posterior obtained via 
Bayes' rule:
\begin{align}
    & Pr(\mu_i, \nu_i \mid C_i, h_i) = \nonumber \\
& \hspace{1cm}\frac{Pr(C_i \mid \mu_i, \nu_i) \, 
\text{Beta}(\mu_i \mid a_i, b_i) \, 
\text{Gamma}(\nu_i \mid c_i, \text{rate}=d_i)}
{\int_0^1 \int_0^\infty Pr(C_i \mid \mu, \nu) \, 
\text{Beta}(\mu \mid a_i, b_i) \, 
\text{Gamma}(\nu \mid c_i, \text{rate}=d_i) \, d\mu \, d\nu} \nonumber
\end{align}
\end{itemize}

The imprecise domain posterior is characterized by CDF envelopes:
\[
\underline{F}_{p_i}(t \mid C_i) 
= \inf_{h_i\in\mathcal{A}_i} F_{p_i}(t \mid C_i, h_i),
\qquad
\overline{F}_{p_i}(t \mid C_i) 
= \sup_{h_i\in\mathcal{A}_i} F_{p_i}(t \mid C_i, h_i).
\]

The domain reliability $p_i = \sum_{j=1}^{n_i} \Omega_{ij}\theta_{ij}$ is a weighted sum of subdomain reliabilities, whose joint posterior distribution $Pr(\boldsymbol{\theta}_i \mid C_i, h_i)$ has no tractable closed form. However, the hierarchical structure of our model provides a natural decomposition: we can express this joint posterior as a mixture over the shared hyperparameters $(\mu_i, \nu_i)$, where conditionally on these hyperparameters, the subdomains become independent with Beta posteriors. This decomposition, formalized in Lemma~\ref{lem:hierarchical_decomposition}, is the key to deriving the domain-level CDF. The proof proceeds by (1) expressing the domain CDF as an integral of the joint posterior over a weighted-sum constraint region, (2) applying the hierarchical decomposition from the lemma, (3) exchanging the order of integration via Fubini's theorem, and (4) recognizing the inner integral as a conditional CDF, yielding a tractable mixture representation.

\begin{lemma}[Hierarchical decomposition of subdomain joint posterior]
\label{lem:hierarchical_decomposition}
Under the hierarchical model with hyperparameters $h_i = (a_i, b_i, c_i, d_i)$, 
the joint posterior of subdomain reliabilities $\boldsymbol{\theta}_i = (\theta_{i1}, \ldots, \theta_{in_i})$ 
admits the decomposition
\begin{equation}
\label{eq:hierarchical_decomp}
Pr(\boldsymbol{\theta}_i \mid C_i, h_i) 
= \int_0^1 \int_0^\infty Pr(\boldsymbol{\theta}_i \mid \mu_i, \nu_i, C_i) \, 
Pr(\mu_i, \nu_i \mid C_i, h_i) \, d\mu_i \, d\nu_i,
\end{equation}
where:
\begin{enumerate}
\item The conditional distribution factorizes into independent Betas:
\[
Pr(\boldsymbol{\theta}_i \mid \mu_i, \nu_i, C_i) 
= \prod_{j=1}^{n_i} \text{Beta}(\theta_{ij} \mid C_{ij} + \mu_i\nu_i, N_{ij} - C_{ij} + (1-\mu_i)\nu_i).
\]

\item The hyper-posterior is obtained via Bayes' rule:
\begin{align}
Pr(\mu_i, \nu_i \mid C_i, h_i) 
&= \frac{Pr(C_i \mid \mu_i, \nu_i) \, Pr(\mu_i, \nu_i \mid h_i)}{Pr(C_i \mid h_i)}, \label{eq:hyper_posterior} \\
Pr(\mu_i, \nu_i \mid h_i) 
&= \text{Beta}(\mu_i \mid a_i, b_i) \times \text{Gamma}(\nu_i \mid c_i, \text{rate}=d_i), \nonumber
\end{align}
where the marginal likelihood is
\begin{equation}
\label{eq:marginal_likelihood}
Pr(C_i \mid \mu_i, \nu_i) 
= \prod_{j=1}^{n_i} \binom{N_{ij}}{C_{ij}} 
\frac{B(C_{ij} + \mu_i\nu_i, N_{ij} - C_{ij} + (1-\mu_i)\nu_i)}{B(\mu_i\nu_i, (1-\mu_i)\nu_i)},
\end{equation}
and the evidence is
\begin{equation}
\label{eq:evidence}
Pr(C_i \mid h_i) 
= \int_0^1 \int_0^\infty Pr(C_i \mid \mu, \nu) \, 
\text{Beta}(\mu \mid a_i, b_i) \, \text{Gamma}(\nu \mid c_i, \text{rate}=d_i) \, d\mu \, d\nu.
\end{equation}
\end{enumerate}
\end{lemma}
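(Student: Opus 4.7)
The plan is to establish the mixture decomposition by expanding the joint posterior of $(\boldsymbol{\theta}_i, \mu_i, \nu_i)$ and then marginalizing over the hyperparameters. First, I would apply the product rule to write
\[
Pr(\boldsymbol{\theta}_i, \mu_i, \nu_i \mid C_i, h_i)
= Pr(\boldsymbol{\theta}_i \mid \mu_i, \nu_i, C_i, h_i)\, Pr(\mu_i, \nu_i \mid C_i, h_i),
\]
and invoke the conditional independence $\boldsymbol{\theta}_i \perp h_i \mid (\mu_i, \nu_i)$ implied by the directed graphical structure of the hierarchical model (since $h_i$ influences $\boldsymbol{\theta}_i$ only through $(\mu_i,\nu_i)$). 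This drops $h_i$ from the first factor, and integrating both sides over $(\mu_i,\nu_i)$ yields Eq.~(\ref{eq:hierarchical_decomp}).

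For the first claim about conditional factorization, I would proceed subdomain by subdomain. Given $(\mu_i,\nu_i)$, the prior $Pr(\boldsymbol{\theta}_i \mid \mu_i, \nu_i)$ factorizes into independent Betas (as already shown in Eq.~(\ref{eq_marginal_prior_given_mu_nu_appendix})), and the Binomial likelihood $L(\boldsymbol{\theta}_i)$ factorizes across $j$. Since both the prior and the likelihood factorize, the conditional posterior factorizes as well. Applying standard Beta--Binomial conjugacy to each factor then identifies the $j$-th conditional posterior as $\text{Beta}(C_{ij}+\mu_i\nu_i,\, N_{ij}-C_{ij}+(1-\mu_i)\nu_i)$, establishing the claim.

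For the second claim, I would derive the hyper-posterior by applying Bayes' rule at the hyperparameter level:
\[
Pr(\mu_i,\nu_i \mid C_i, h_i)
= \frac{Pr(C_i \mid \mu_i,\nu_i)\, Pr(\mu_i,\nu_i \mid h_i)}{Pr(C_i \mid h_i)},
\]
where the conditional independence $C_i \perp h_i \mid (\mu_i,\nu_i)$ again follows from the graphical structure. The compound likelihood $Pr(C_i \mid \mu_i,\nu_i)$ is then obtained by marginalizing $\boldsymbol{\theta}_i$ out of the product $L(\boldsymbol{\theta}_i)\,Pr(\boldsymbol{\theta}_i\mid \mu_i,\nu_i)$; because both factor across $j$, the integral reduces to a product of one-dimensional Beta integrals, each evaluated in closed form via $\int_0^1 \theta^{\alpha-1}(1-\theta)^{\beta-1}d\theta = B(\alpha,\beta)$, giving exactly Eq.~(\ref{eq:marginal_likelihood}). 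The evidence in Eq.~(\ref{eq:evidence}) is then the normalizing constant of this hyper-posterior.

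The main obstacle will be making the conditional independence arguments fully rigorous, since the various $h_i$-dropping steps must each be justified from the hierarchical generative model rather than asserted. Beyond that, the remaining work is routine: applying Beta--Binomial conjugacy subdomain by subdomain, and invoking Fubini's theorem to exchange the order of integration when combining the factorized conditional posterior with the hyper-posterior to recover the mixture form.
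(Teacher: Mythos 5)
Your proposal is correct and follows essentially the same route as the paper's proof: marginalizing the joint posterior of $(\boldsymbol{\theta}_i,\mu_i,\nu_i)$ via the law of total probability, using the conditional independence structure to drop $h_i$, applying Beta--Binomial conjugacy subdomain by subdomain for the factorized conditional posterior, and obtaining the hyper-posterior by Bayes' rule with the marginal likelihood computed as a product of Beta-function integrals. The only difference is that you are more explicit about justifying the $h_i$-dropping steps from the graphical structure, which the paper states without elaboration.
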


\begin{proof}[Proof of Lemma~\ref{lem:hierarchical_decomposition}]
The decomposition follows directly from the law of total probability and the hierarchical prior structure. By the tower property of conditional expectation applied to densities:
\begin{align}
    &Pr(\boldsymbol{\theta}_i \mid C_i, h_i) 
=\nonumber \\
& \hspace{1cm}\int_0^1 \int_0^\infty Pr(\boldsymbol{\theta}_i, \mu_i, \nu_i \mid C_i, h_i) \, d\mu_i \, d\nu_i
= \nonumber \\
& \hspace{1cm} \int_0^1 \int_0^\infty Pr(\boldsymbol{\theta}_i \mid \mu_i, \nu_i, C_i) \, Pr(\mu_i, \nu_i \mid C_i, h_i) \, d\mu_i \, d\nu_i \nonumber
\end{align}

The conditional independence of $\{\theta_{ij}\}$ given $(\mu_i, \nu_i, C_i)$ follows from the hierarchical prior:
\[
\theta_{ij} \mid \mu_i, \nu_i \stackrel{\text{iid}}{\sim} \text{Beta}(\mu_i\nu_i, (1-\mu_i)\nu_i), 
\quad j=1,\ldots,n_i,
\]
combined with independent Binomial likelihoods $C_{ij} \mid \theta_{ij}, N_{ij} \sim \text{Binomial}(N_{ij}, \theta_{ij})$. 
By conjugacy, the Beta-Binomial update yields:
\[
\theta_{ij} \mid \mu_i, \nu_i, C_i \sim \text{Beta}(C_{ij} + \mu_i\nu_i, N_{ij} - C_{ij} + (1-\mu_i)\nu_i).
\]

The hyper-posterior~\eqref{eq:hyper_posterior} follows from Bayes' rule. The marginal likelihood~\eqref{eq:marginal_likelihood} 
is the Beta-Binomial distribution for each subdomain, obtained by integrating the Binomial likelihood against the Beta prior. 
The evidence~\eqref{eq:evidence} is the normalization constant ensuring the hyper-posterior integrates to one.
\end{proof}

\begin{proof}[Proof of Theorem~\ref{thm_domain_posterior}]
We derive the posterior CDF of the domain reliability $p_i = \sum_{j=1}^{n_i} \Omega_{ij}\theta_{ij}$ 
using the hierarchical decomposition from Lemma~\ref{lem:hierarchical_decomposition}.

\medskip
\noindent\textbf{Step 1: CDF definition.}
For any $t \in [0,1]$, the posterior CDF of $p_i$ is
\[
F_{p_i}(t \mid C_i, h_i) 
= \Pr(p_i \leq t \mid C_i, h_i)
= \Pr\left(\sum_{j=1}^{n_i} \Omega_{ij}\theta_{ij} \leq t \,\Big|\, C_i, h_i\right).
\]
Define the constraint region
\[
\mathcal{R}_i(t) := \left\{\boldsymbol{\theta}_i \in (0,1)^{n_i} : 
\sum_{j=1}^{n_i} \Omega_{ij}\theta_{ij} \leq t\right\}.
\]
Then
\[
F_{p_i}(t \mid C_i, h_i) 
= \int_{\mathcal{R}_i(t)} Pr(\boldsymbol{\theta}_i \mid C_i, h_i) \, d\boldsymbol{\theta}_i.
\]

\medskip
\noindent\textbf{Step 2: Apply hierarchical decomposition.}
Substituting the decomposition from Lemma~\ref{lem:hierarchical_decomposition} (equation~\eqref{eq:hierarchical_decomp}):
\begin{align*}
F_{p_i}(t \mid C_i, h_i)
&= \int_{\mathcal{R}_i(t)} 
\left[\int_0^1 \int_0^\infty Pr(\boldsymbol{\theta}_i \mid \mu_i, \nu_i, C_i) \, 
Pr(\mu_i, \nu_i \mid C_i, h_i) \, d\mu_i \, d\nu_i\right] d\boldsymbol{\theta}_i.
\end{align*}

\medskip
\noindent\textbf{Step 3: Change of integration order.}
By Fubini's theorem (applicable since all integrands are non-negative and integrate to finite values):
\begin{align*}
F_{p_i}(t \mid C_i, h_i)
&= \int_0^1 \int_0^\infty 
\left[\int_{\mathcal{R}_i(t)} Pr(\boldsymbol{\theta}_i \mid \mu_i, \nu_i, C_i) \, 
d\boldsymbol{\theta}_i\right] 
Pr(\mu_i, \nu_i \mid C_i, h_i) \, d\mu_i \, d\nu_i.
\end{align*}

\medskip
\noindent\textbf{Step 4: Conditional CDF.}
The inner integral is the conditional CDF of $p_i$ given $(\mu_i, \nu_i)$ and $C_i$:
\[
F_{p_i}(t \mid \mu_i, \nu_i, C_i) 
= \int_{\mathcal{R}_i(t)} Pr(\boldsymbol{\theta}_i \mid \mu_i, \nu_i, C_i) \, 
d\boldsymbol{\theta}_i
= \Pr\left(\sum_{j=1}^{n_i} \Omega_{ij}\theta_{ij} \leq t \,\Big|\, 
\mu_i, \nu_i, C_i\right),
\]
where $\theta_{ij} \mid \mu_i, \nu_i, C_i$ are independent Beta random variables 
as specified in Lemma~\ref{lem:hierarchical_decomposition}.

\medskip
\noindent\textbf{Step 5: Mixture representation.}
Combining Steps 3 and 4:
\[
F_{p_i}(t \mid C_i, h_i) 
= \int_0^1 \int_0^\infty F_{p_i}(t \mid \mu_i, \nu_i, C_i) \, 
Pr(\mu_i, \nu_i \mid C_i, h_i) \, d\mu_i \, d\nu_i.
\]

where
\begin{itemize}
\item $F_{p_i}(t \mid \mu_i, \nu_i, C_i)$ is the conditional CDF of 
$p_i = \sum_{j=1}^{n_i} \Omega_{ij}\theta_{ij}$ given that 
$\theta_{ij} \mid \mu_i, \nu_i, C_i \stackrel{\text{ind.}}{\sim} 
\text{Beta}(C_{ij} + \mu_i\nu_i, N_{ij} - C_{ij} + (1-\mu_i)\nu_i)$ 
for $j=1,\ldots,n_i$,
\item $Pr(\mu_i, \nu_i \mid C_i, h_i)$ is the hyper-posterior obtained via 
Bayes' rule:
\begin{align}
    & Pr(\mu_i, \nu_i \mid C_i, h_i) = \nonumber \\
& \hspace{1cm}\frac{Pr(C_i \mid \mu_i, \nu_i) \, 
\text{Beta}(\mu_i \mid a_i, b_i) \, 
\text{Gamma}(\nu_i \mid c_i, \text{rate}=d_i)}
{\int_0^1 \int_0^\infty Pr(C_i \mid \mu, \nu) \, 
\text{Beta}(\mu \mid a_i, b_i) \, 
\text{Gamma}(\nu \mid c_i, \text{rate}=d_i) \, d\mu \, d\nu} \nonumber
\end{align}
\end{itemize}

This expresses the domain posterior CDF as a weighted average of conditional CDFs, 
where the weights are given by the hyper-posterior $Pr(\mu_i, \nu_i \mid C_i, h_i)$.

\medskip
\noindent\textbf{Step 6: Imprecise probability bounds.}
For each $h_i \in \mathcal{A}_i$, the mixture formula defines a valid CDF 
$F_{p_i}(\cdot \mid C_i, h_i)$. The imprecise posterior is characterized by 
pointwise envelopes over the admissible set:
\begin{align}
    \underline{F}_{p_i}(t \mid C_i) 
= \inf_{h_i \in \mathcal{A}_i} F_{p_i}(t \mid C_i, h_i),
\quad
\overline{F}_{p_i}(t \mid C_i) 
= \sup_{h_i \in \mathcal{A}_i} F_{p_i}(t \mid C_i, h_i),
\quad t \in [0,1] \nonumber
\end{align}
\end{proof}


\subsection{Proof of Theorem \ref{thm_LLM_posterior}}
\label{sec_proof_LLM_appendix}

\textbf{Theorem 3} For the LLM system with domain weights $W_i$ (where $\sum_{i=1}^{m} W_i = 1$), 
let $p_L = \sum_{i=1}^{m} W_i p_i$ be the LLM-level failure probability and 
$\text{data} = \{C_1, \ldots, C_m\}$ the observed data across all domains. 
Assume cross-domain independence.

Define the domain-level admissible sets
\[
\mathcal{A}_i = 
[a_{i}^{\min}, a_{i}^{\max}] \times 
[b_{i}^{\min}, b_{i}^{\max}] \times 
[c_{i}^{\min}, c_{i}^{\max}] \times 
[d_{i}^{\min}, d_{i}^{\max}]
\]
and write $h_i = (a_i, b_i, c_i, d_i)$ for $i = 1, \ldots, m$. 
Define the LLM-level admissible set as the Cartesian product
\[
\mathcal{A}_{\text{LLM}} = \mathcal{A}_1 \times \cdots \times \mathcal{A}_m,
\]
and collect the domain hyperparameters as 
$\mathcal{H} = (h_1, \ldots, h_m) \in \mathcal{A}_{\text{LLM}}$.

Then, for any $\mathcal{H} \in \mathcal{A}_{\text{LLM}}$, the posterior distribution of $p_L$ 
is characterized by its CDF:
\begin{align}
    &F_{p_L}(t \mid \text{data}, \mathcal{H}) = 
 Pr(p_L \leq t \mid \text{data}, \mathcal{H})
= \nonumber \\
&  \hspace{1.5cm} \int \cdots \int G\big(t \mid \{\mu_i, \nu_i\}_{i=1}^m, \text{data}\big) 
\prod_{i=1}^{m} Pr(\mu_i, \nu_i \mid C_i, h_i) \prod_{i=1}^{m} d\mu_i \, d\nu_i \nonumber
\end{align}
where
\begin{itemize}
\item $G(t \mid \{\mu_i, \nu_i\}_{i=1}^m, \text{data})$ is the conditional CDF of 
$p_L = \sum_{i=1}^{m} W_i p_i$ given all hyperparameters, defined as
\[
G\big(t \mid \{\mu_i, \nu_i\}_{i=1}^m, \text{data}\big) 
= \int_{\mathcal{R}_L(t)} \prod_{i=1}^{m} f_{p_i}(p_i \mid \mu_i, \nu_i, C_i) \, 
dp_1 \cdots dp_m,
\]
where $\mathcal{R}_L(t) := \{(p_1, \ldots, p_m) \in (0,1)^m : \sum_{i=1}^{m} W_i p_i \leq t\}$, 
and $f_{p_i}(\cdot \mid \mu_i, \nu_i, C_i)$ is the conditional density of 
$p_i = \sum_{j=1}^{n_i} \Omega_{ij}\theta_{ij}$ under 
$\theta_{ij} \mid \mu_i, \nu_i, C_i \stackrel{\text{ind.}}{\sim} 
\text{Beta}(C_{ij} + \mu_i\nu_i, N_{ij} - C_{ij} + (1-\mu_i)\nu_i)$,

\item $Pr(\mu_i, \nu_i \mid C_i, h_i)$ is the domain-level hyper-posterior for domain $i$:
\begin{align}
    & Pr(\mu_i, \nu_i \mid C_i, h_i) = \nonumber \\
& \hspace{1cm}\frac{Pr(C_i \mid \mu_i, \nu_i) \, 
\text{Beta}(\mu_i \mid a_i, b_i) \, 
\text{Gamma}(\nu_i \mid c_i, \text{rate}=d_i)}
{\int_0^1 \int_0^\infty Pr(C_i \mid \mu, \nu) \, 
\text{Beta}(\mu \mid a_i, b_i) \, 
\text{Gamma}(\nu \mid c_i, \text{rate}=d_i) \, d\mu \, d\nu} \nonumber
\end{align}

\item Cross-domain independence ensures 
$Pr(\{\mu_i, \nu_i\}_{i=1}^m \mid \text{data}, \mathcal{H}) = 
\prod_{i=1}^{m} Pr(\mu_i, \nu_i \mid C_i, h_i)$.
\end{itemize}

The imprecise LLM posterior is characterized by CDF envelopes:
\begin{align}
    &\underline{F}_{p_L}(t \mid \text{data}) 
= \inf_{\mathcal{H} \in \mathcal{A}_{\text{LLM}}} F_{p_L}(t \mid \text{data}, \mathcal{H}) \nonumber \\
&\overline{F}_{p_L}(t \mid \text{data}) 
= \sup_{\mathcal{H} \in \mathcal{A}_{\text{LLM}}} F_{p_L}(t \mid \text{data}, \mathcal{H}) \nonumber
\end{align}

\begin{proof}[Proof of Theorem~\ref{thm_LLM_posterior}]
We derive the posterior distribution of the LLM reliability 
$p_L = \sum_{i=1}^{m} W_i p_i$ by characterizing its CDF, 
building on the domain-level results from Theorem~\ref{thm_domain_posterior}.

\medskip
\noindent\textbf{Step 1: CDF definition.}
For any $t \in [0,1]$, the posterior CDF of $p_L$ is
\[
F_{p_L}(t \mid \text{data}, \mathcal{H}) 
= \Pr(p_L \leq t \mid \text{data}, \mathcal{H})
= \Pr\left(\sum_{i=1}^{m} W_i p_i \leq t \,\Big|\, \text{data}, \mathcal{H}\right).
\]
Define the region
\[
\mathcal{R}_L(t) := \left\{(p_1, \ldots, p_m) \in (0,1)^m : 
\sum_{i=1}^{m} W_i p_i \leq t\right\}.
\]
Then
\[
F_{p_L}(t \mid \text{data}, \mathcal{H}) 
= \int_{\mathcal{R}_L(t)} \prod_{i=1}^{m} Pr(p_i \mid C_i, h_i) \, dp_1 \cdots dp_m,
\]
where cross-domain independence ensures the joint density factorizes as
\[
Pr(p_1, \ldots, p_m \mid \text{data}, \mathcal{H}) 
= \prod_{i=1}^{m} Pr(p_i \mid C_i, h_i).
\]

\medskip
\noindent\textbf{Step 2: Domain-level mixture representation.}
From Theorem~\ref{thm_domain_posterior}, each domain posterior can be written as
\[
Pr(p_i \mid C_i, h_i) 
= \int_0^1 \int_0^\infty f_{p_i}(p_i \mid \mu_i, \nu_i, C_i) \, 
Pr(\mu_i, \nu_i \mid C_i, h_i) \, d\mu_i \, d\nu_i,
\]
where $f_{p_i}(\cdot \mid \mu_i, \nu_i, C_i)$ is the conditional density of 
$p_i = \sum_{j=1}^{n_i} \Omega_{ij}\theta_{ij}$ given that
\[
\theta_{ij} \mid \mu_i, \nu_i, C_i \stackrel{\text{ind.}}{\sim} 
\text{Beta}(C_{ij} + \mu_i\nu_i, N_{ij} - C_{ij} + (1-\mu_i)\nu_i), 
\quad j=1,\ldots,n_i.
\]

\medskip
\noindent\textbf{Step 3: Hierarchical decomposition across domains.}
Substituting the domain-level mixture into the LLM-level CDF:
\begin{align*}
&F_{p_L}(t \mid \text{data}, \mathcal{H})
= \nonumber \\ & \hspace{1cm}\int_{\mathcal{R}_L(t)} \prod_{i=1}^{m} 
\left[\int_0^1 \int_0^\infty f_{p_i}(p_i \mid \mu_i, \nu_i, C_i) \, 
Pr(\mu_i, \nu_i \mid C_i, h_i) \, d\mu_i \, d\nu_i\right] dp_1 \cdots dp_m \nonumber
\end{align*}

\medskip
\noindent\textbf{Step 4: Change of integration order.}
By Fubini's theorem (applicable because all integrands are non-negative and integrate to finite values), we can exchange the order of integration:
\begin{align*}
F_{p_L}(t \mid \text{data}, \mathcal{H})
&= \int_0^1 \cdots \int_0^1 \int_0^\infty \cdots \int_0^\infty 
\left[\int_{\mathcal{R}_L(t)} \prod_{i=1}^{m} f_{p_i}(p_i \mid \mu_i, \nu_i, C_i) \, 
dp_1 \cdots dp_m\right] \\
&\qquad\qquad\qquad\qquad\qquad\qquad \times 
\prod_{i=1}^{m} Pr(\mu_i, \nu_i \mid C_i, h_i) \prod_{i=1}^{m} d\mu_i \, d\nu_i.
\end{align*}

\medskip
\noindent\textbf{Step 5: Conditional CDF.}
The inner integral is the conditional CDF of $p_L$ given all hyperparameters:
\begin{align*}
G\big(t \mid \{\mu_i, \nu_i\}_{i=1}^m, \text{data}\big)
&:= \int_{\mathcal{R}_L(t)} \prod_{i=1}^{m} f_{p_i}(p_i \mid \mu_i, \nu_i, C_i) \, 
dp_1 \cdots dp_m \\
&= \Pr\left(\sum_{i=1}^{m} W_i p_i \leq t \,\Big|\, 
\{\mu_i, \nu_i\}_{i=1}^m, \text{data}\right),
\end{align*}
where $p_i \mid \mu_i, \nu_i, C_i$ are independent across domains, each distributed according to the weighted sum of independent Betas as specified in Step~2.

\medskip
\noindent\textbf{Step 6: Mixture representation.}
Combining Steps 4 and 5:
\begin{align}
    &F_{p_L}(t \mid \text{data}, \mathcal{H}) 
= \nonumber \\
& \hspace{1cm}\int \cdots \int G\big(t \mid \{\mu_i, \nu_i\}_{i=1}^m, \text{data}\big) 
\prod_{i=1}^{m} Pr(\mu_i, \nu_i \mid C_i, h_i) \prod_{i=1}^{m} d\mu_i \, d\nu_i
\end{align}
where the integrals range over $\mu_i \in (0,1)$ and $\nu_i \in (0,\infty)$ for all $i=1,\ldots,m$.

This expresses the LLM posterior CDF as a weighted average of conditional CDFs, where the weights are given by the product of independent domain-level hyper-posteriors.

\medskip
\noindent\textbf{Step 7: Imprecise probability bounds.}
For each $\mathcal{H} \in \mathcal{A}_{\text{LLM}}$, the formula above defines a valid CDF 
$F_{p_L}(\cdot \mid \text{data}, \mathcal{H})$. The imprecise posterior is characterized by pointwise envelopes:
\begin{align}
    &\underline{F}_{p_L}(t \mid \text{data}) 
=  \inf_{\mathcal{H} \in \mathcal{A}_{\text{LLM}}} F_{p_L}(t \mid \text{data}, \mathcal{H}),
\nonumber \\
&\overline{F}_{p_L}(t \mid \text{data}) 
= \sup_{\mathcal{H} \in \mathcal{A}_{\text{LLM}}} F_{p_L}(t \mid \text{data}, \mathcal{H}) \nonumber
\end{align}
where, $t \in [0,1]$
\end{proof}

\subsection{Reliability for Multiple Consecutive Operations}
\label{appendix_nF_reliabilit_appendix}

Theorems~\ref{thm_subdomain_marginal}, \ref{thm_domain_posterior}, and~\ref{thm_LLM_posterior} establish posterior distributions for the reliability parameters $\theta_{ij}$, $p_i$, and $p_L$ at the subdomain, domain, and LLM levels, respectively. These parameters represent the probability of success on a \emph{single} future operation. However, in practical reliability assessment, we often need to evaluate performance over \emph{multiple consecutive operations}.

This appendix clarifies the mathematical framework for extending the case $n_F=1$ to $n_F > 1$ (reliability with the required number of future failure-free tasks), which is formalized in Theorems~\ref{thm_subdomain_nF_dist}, \ref{thm_domain_nF_dist}, and~\ref{thm_LLM_nF_dist}. We explain:
\begin{enumerate}
\item Why reliability over $n_F$ operations takes the form $\theta_{ij}^{n_F}$ (or $p_i^{n_F}$, $p_L^{n_F}$)
\item Why we characterize these distributions through CDFs rather than closed-form densities
\item How to derive the CDF integral formulas
\item How this framework extends across all hierarchical levels
\end{enumerate}

\subsubsection{Definition: Reliability with the required \texorpdfstring{$n_F$}{nF} failure-free future tasks}

At the subdomain level, we define reliability for $n_F$ consecutive future operations as:
\begin{equation}
R_{ij}(n_F) = \theta_{ij}^{n_F} \nonumber
\end{equation}

Assume that:
\begin{itemize}
\item Each task in subdomain $S_{ij}$ succeeds independently with probability $\theta_{ij}$
\item Operations are identically distributed (i.i.d. assumption)
\item We observe $n_F$ consecutive operations
\end{itemize}

Under these assumptions, the probability that \emph{all} $n_F$ operations succeed is:
\begin{equation}
\Pr(\text{all $n_F$ operations succeed}) = \underbrace{\theta_{ij} \times \theta_{ij} \times \cdots \times \theta_{ij}}_{n_F \text{ times}} = \theta_{ij}^{n_F} \nonumber
\end{equation}

Because $\theta_{ij}$ is uncertain—it is a random variable with posterior distribution $Pr(\theta_{ij} \mid C_i, h_i)$ from Theorem~\ref{thm_subdomain_marginal}—the $n_F$-operation reliability $R_{ij}(n_F) = \theta_{ij}^{n_F}$ is also a random variable. We must therefore characterize its full posterior distribution.

\subsubsection{Why the CDF Approach}

Unlike the posteriors in Theorems~\ref{thm_subdomain_marginal}--\ref{thm_LLM_posterior}, which characterize reliability for a \emph{single} future operation, the distribution of $R_{ij}(n_F) = \theta_{ij}^{n_F}$ generally does not have a closed-form probability density function, even when $\theta_{ij}$ follows a Beta distribution. This is because:
\begin{itemize}
\item The transformation $\theta_{ij} \mapsto \theta_{ij}^{n_F}$ is nonlinear for $n_F \neq 1$
\item At higher hierarchical levels (domain, LLM), we have weighted sums of dependent random variables raised to powers, making closed forms even less tractable
\end{itemize}

Therefore, we characterize the distribution of $R_{ij}(n_F)$ through its \emph{CDF}, which has a tractable integral representation and is sufficient for all practical reliability calculations (e.g., computing probabilities, quantiles, expected values).

\subsubsection{Deriving the CDF Formula}

What does CDF answers: ``What is the probability that $n_F$-operation reliability is at most $r$?''
\begin{equation}
F_{R_{ij}(n_F)}(t \mid C_i, h_i) = \Pr(R_{ij}(n_F) \leq t \mid C_i, h_i) = \Pr(\theta_{ij}^{n_F} \leq t \mid C_i, h_i) \nonumber
\end{equation}

To compute this probability, we use the following key observation. If $\theta_{ij}^{n_F} \leq t$, then taking the $n_F$-th root of both sides:
\begin{equation}
\theta_{ij} \leq t^{1/n_F} \nonumber
\end{equation}

Since $\theta_{ij} \in (0,1)$ and $t \in [0,1]$, both sides are positive, and the function $x \mapsto x^{1/n_F}$ is monotonically increasing on $[0,1]$ for $n_F > 0$. Therefore, the inequality is preserved under this transformation.

It follows that:
\begin{equation}
\Pr(\theta_{ij}^{n_F} \leq t \mid C_i, h_i) = \Pr(\theta_{ij} \leq t^{1/n_F} \mid C_i, h_i) \nonumber
\end{equation}

The right-hand side is simply the CDF of the posterior distribution $Pr(\theta_{ij} \mid C_i, h_i)$ from Theorem~\ref{thm_subdomain_marginal}, evaluated at $r^{1/n_F}$:
\begin{equation}
\Pr(\theta_{ij} \leq t^{1/n_F} \mid C_i, h_i) = \int_0^{r^{1/n_F}} Pr(\theta_{ij} \mid C_i, h_i) \, d\theta_{ij} \nonumber
\end{equation}

This gives us the formula in Theorem~\ref{thm_subdomain_nF_dist}.

\subsubsection{Extension to Domain and LLM Levels}

The same logic extends to domain and LLM levels:
\begin{itemize}
\item \textbf{Domain level:} $R_i(n_F) = p_i^{n_F}$ where $p_i = \sum_{j=1}^{n_i} \Omega_{ij} \theta_{ij}$
\item \textbf{LLM level:} $R_L(n_F) = p_L^{n_F}$ where $p_L = \sum_{i=1}^m W_i p_i$
\end{itemize}

At these levels, closed forms are even less available because we must:
\begin{enumerate}
\item Compute the distribution of weighted sums of dependent Beta random variables (from Theorems~\ref{thm_domain_posterior} and~\ref{thm_LLM_posterior})
\item Apply the power transformation to obtain $R_i(n_F)$ or $R_L(n_F)$
\end{enumerate}

Both steps lack analytical solutions, so we compute CDFs via numerical integration or Monte Carlo sampling.

\section{Global numerical settings used in all figures}
\label{sec_numerical_setting}

All figures in this paper are produced under the same
imprecise hierarchical Bayes setup and numerical settings:

\begin{itemize}
\item \textbf{Domains/subdomains.}
      D1 (Coding) = \{MBPP, DS-1000\}, \quad
      D2 (Reasoning) = \{BoolQ, RACE-H\}.

\item \textbf{Aggregation weights.}
      Within-domain weights
      $\Omega_1=[0.204,\,0.796]$ (MBPP, DS-1000) and
      $\Omega_2=[0.483,\,0.517]$ (BoolQ, RACE-H);
      LLM-level weights $W=[0.149,\,0.851]$ (D1, D2).
\item \textbf{Hyperpriors and imprecision ranges (per domain $i$).}\\
      $\mu_i\sim\mathrm{Beta}(a_i,b_i)$,\;
      $\nu_i\sim\mathrm{Gamma}(c_i,\text{rate}=d_i)$,\;
      with
      $a_i,b_i\in[1,12]$ and $c_i,d_i\in[1,25]$.     
\item \textbf{Simulation}
      For each domain and each sampled hyperparameter configuration:
      \emph{(1)} sample $(\mu_i,\nu_i)$ from the discretized posterior;
      \emph{(2)} draw subdomain accuracies $\theta_{ij}$ from their Beta posteriors;
      \emph{(3)} aggregate to $p_i$ and $p_L$;
      \emph{(4)} compute either empirical CDFs of $Z\in\{\theta_{ij},p_i,p_L\}$ on a fixed grid,
      or Monte Carlo expectations such as
      $\widehat{\mathbb{E}}[R_L(n_F)]=\frac{1}{S}\sum_{s=1}^{S}\big(p_L^{(s)}\big)^{n_F}$.
      Repeating over $K$ hyperparameter configurations yields a family of CDFs/expectations;
      the plotted bands are the \emph{pointwise min--max} across configurations (epistemic uncertainty).

\item \textbf{CDF envelopes.}
      For any quantity $Z\in\{\theta_{ij},p_i,p_L\}$, we compute an empirical CDF
      for each configuration, and plot the
      pointwise min--max envelope $[F_{\min}(t),F_{\max}(t)]$ across configurations.
\end{itemize}

\section{List of Mathematical Notations}
\label{sec_notation_appendix}

\begin{table}[H]
\centering
\label{tab_notation_1}
\begin{tabular}{ll}
\toprule
\textbf{Symbol} & \textbf{Description} \\
\midrule
$\mathcal{X}$ & Input-space of all possible tasks for an LLM \\
$\pi$ & Operational Profile (OP), probability distribution over tasks \\
$n$ & Number of future tasks \\
$n_F$ & Number of consecutive future operations \\
$I(x_t)$ & Indicator function: 1 for success, 0 for failure on task $t$ \\
$R(n, \pi)$ & LLM reliability: probability of failure-free operation over $n$ tasks \\
$M$ & Number of LLM models being evaluated \\
$m$ & Number of independent domains \\
$D_i$ & Domain $i$ ($i = 1, \ldots, m$) \\
$n_i$ & Number of subdomains in domain $i$ \\
$S_{ij}$ & Subdomain $j$ in domain $i$ ($j = 1, \ldots, n_i$) \\
$C_{ij}$ & Number of correct responses in subdomain $S_{ij}$ \\
$N_{ij}$ & Number of trials (tasks) in subdomain $S_{ij}$ \\
$C_i$ & Set of observed data in domain $i$: $\{(C_{ik}, N_{ik})\}_{k=1}^{n_i}$ \\
$\text{data}$ & Observed data across all domains: $\{C_1, \ldots, C_m\}$ \\
$\theta_{ij}$ & Subdomain reliability (success probability) for $S_{ij}$ \\
$\boldsymbol{\theta}_i$ & Vector of subdomain reliabilities: $(\theta_{i1}, \ldots, \theta_{in_i})$ \\
$p_i$ & Domain-level reliability \\
$p_L$ & LLM-level reliability\\
$R_{ij}(n_F)$ & Subdomain reliability for $n_F$ consecutive operations \\
$R_i(n_F)$ & Domain reliability for $n_F$ consecutive operations \\
$R_L(n_F)$ & LLM reliability for $n_F$ consecutive operations \\
$\Omega_{ij}$ & Operational weight for subdomain $j$ in domain $i$ \\
$W_i$ & Operational weight for domain $i$ \\
$\mu_i$ & Expected reliability (prior mean) for domain $i$ \\
$\nu_i$ & Prior strength (concentration parameter) for domain $i$ \\
$a_i, b_i$ & Hyperparameters for Beta prior on $\mu_i$ \\
$c_i, d_i$ & Hyperparameters for Gamma prior on $\nu_i$ \\
$h_i$ & Tuple of hyperparameters: $(a_i, b_i, c_i, d_i)$ \\
$\mathcal{A}_i$ & Admissible set of hyperparameters for domain $i$ \\
$\mathcal{H}$ & Tuple of all domain hyperparameters: $(h_1, \ldots, h_m)$ \\
$\mathcal{A}_{\text{LLM}}$ & Admissible set at LLM level: $\mathcal{A}_1 \times \cdots \times \mathcal{A}_m$ \\
$F_{p_i}(t \mid C_i, h_i)$ & CDF of domain reliability $p_i$ \\
$\underline{F}_{p_i}(t \mid C_i)$ & Lower CDF envelope for domain reliability \\
$\overline{F}_{p_i}(t \mid C_i)$ & Upper CDF envelope for domain reliability \\
$F_{p_L}(t \mid \text{data}, \mathcal{H})$ & CDF of LLM reliability $p_L$ \\
$\underline{F}_{p_L}(t \mid \text{data})$ & Lower CDF envelope for LLM reliability \\
$\overline{F}_{p_L}(t \mid \text{data})$ & Upper CDF envelope for LLM reliability \\
$F_{R_{ij}(n_F)}(t \mid C_i, h_i)$ & CDF of subdomain $n_F$-operation reliability \\
$F_{R_i(n_F)}(t \mid C_i, h_i)$ & CDF of domain $n_F$-operation reliability \\
$F_{R_L(n_F)}(t \mid \text{data}, \mathcal{H})$ & CDF of LLM $n_F$-operation reliability \\
\end{tabular}
\end{table}

\begin{table}[H]
\centering
\label{tab_notation_2}
\begin{tabular}{ll}
\toprule
\textbf{Symbol} & \textbf{Description} \\
\midrule
$f_{\text{marg}}(\theta_{ij}, C_i; h_i)$ & Unnormalized marginal posterior for $\theta_{ij}$ \\
$Z_{\text{marg}}(h_i)$ & Normalizing constant for marginal posterior \\
$f_{p_i}(\cdot \mid \mu_i, \nu_i, C_i)$ & Conditional density of domain reliability \\
$G(t \mid \{\mu_i, \nu_i\}_{i=1}^m, \text{data})$ & Conditional CDF of $p_L$ given all hyperparameters \\
$\mathcal{R}_i(t)$ & Constraint region for domain $i$ \\
$\mathcal{R}_L(t)$ & Constraint region for LLM: $\{(p_1, \ldots, p_m)$ \\
$f_{\text{marg}}(\theta_{ij}, C_i; h_i)$ & Unnormalized marginal posterior for $\theta_{ij}$ \\
$Z_{\text{marg}}(h_i)$ & Normalizing constant for marginal posterior \\
$f_{p_i}(\cdot \mid \mu_i, \nu_i, C_i)$ & Conditional density of domain reliability \\
$G(t \mid \{\mu_i, \nu_i\}_{i=1}^m, \text{data})$ & Conditional CDF of $p_L$ given all hyperparameters \\
$\mathcal{R}_i(t)$ & Constraint region for domain $i$ \\
$\mathcal{R}_L(t)$ & Constraint region for LLM: $\{(p_1, \ldots, p_m)$ \\
$t$ & Probability threshold at which the corresponding CDF is evaluated \\
\end{tabular}
\end{table}

\end{document}